\documentclass[11pt]{article}
\usepackage{amsfonts,amsmath,amsthm,amssymb}

\usepackage{comment, enumerate}
\usepackage[margin=1in]{geometry}
\usepackage{soul,color}
\usepackage{mathrsfs}
\usepackage{mathdots}
\usepackage{algorithm}
\usepackage{algpseudocode}
\usepackage{caption}
\usepackage{subcaption}
\usepackage[pagebackref]{hyperref}
\hypersetup{colorlinks=true,citecolor=red,linkcolor=blue}
\usepackage{multicol}
\usepackage[small,compact]{titlesec}
\usepackage{times}
\usepackage{paralist}

\usepackage{bbm}

\usepackage{url}

\usepackage{tabu}

\usepackage{xfrac}

\usepackage[normalem]{ulem}

\newcommand{\X}{\mathsf{X}}
\newcommand{\Y}{\mathsf{Y}}
\newcommand{\Z}{\mathsf{Z}}

\newcommand{\poly}{\mathop{\mathrm{poly}}}

\newcommand{\rank}{\mathop{\operatorname{rank}}}

\newcommand{\var}{\mathop{\operatorname{var}}}

\newcommand{\eff}{\mathop{\operatorname{eff}}}
\newcommand{\effQ}{\mathop{\operatorname{effQ}}}
\newcommand{\Peff}{P\text{-}\eff}
\newcommand{\wt}{\widetilde}

\newcommand{\E}{\mathbb{E}}
\newcommand{\R}{\mathbb{R}}
\newcommand{\N}{\mathbb{N}}

\newcommand{\avg}{\mathrm{avg}}

\newcommand{\eps}{\varepsilon}

\theoremstyle{plain}\newtheorem{theorem}{Theorem}[section]
\newtheorem{lemma}[theorem]{Lemma}
\newtheorem{claim}[theorem]{Claim}

\newtheorem{example}[theorem]{Example}

\newtheorem{fact}[theorem]{Fact}
\newtheorem{remark}[theorem]{Remark}
\theoremstyle{definition}
\newtheorem{definition}[theorem]{Definition}

\definecolor{zhj}{RGB}{255,50,200}
\definecolor{josh}{RGB}{150,50,50}

\usepackage[utf8]{inputenc}

\title{Generalizations of Matrix Multiplication\\can solve the Light Bulb Problem}

\author{
Josh Alman\thanks{\texttt{josh@cs.columbia.edu}. Columbia University. Supported in part by NSF Grant CCF-2238221 and a grant from the Simons Foundation
(Grant Number 825870 JA).}
\quad
Hengjie Zhang\thanks{\texttt{hengjie.z@columbia.edu}. Columbia University. Supported in part by NSF grant CCF2008733 and ONR grant N00014-22-1-2713.}
}

\begin{document}

\maketitle

\begin{abstract}
In the light bulb problem, one is given as input vectors $x_1, \ldots, x_n, y_1, \ldots, y_n \in \{-1,1\}^d$ which are all uniformly random. They are all chosen independently except for a planted pair $(x_{i^*}, y_{j^*})$ which is chosen to have correlation $\rho$ for some constant $\rho>0$. The goal is to find the planted pair. The light bulb problem was introduced over 30 years ago by L.~Valiant, and is known to have many applications in data analysis, statistics, and learning theory.

The naive algorithm runs in $\Omega(n^2)$ time, and algorithms based on Locality-Sensitive Hashing approach quadratic time as $\rho \to 0$. In 2012, G.~Valiant gave a breakthrough algorithm running in time $O(n^{(5-\omega)/(4-\omega)}) < O(n^{1.615})$, no matter how small $\rho>0$ is, by making use of fast matrix multiplication. This was subsequently refined by Karppa, Kaski, and Kohonen in 2016 to running time $O(n^{2 \omega / 3}) < O(n^{1.582})$, but is essentially the only known approach for this important problem.

In this paper, we propose a new approach based on replacing fast matrix multiplication with other \emph{variants} and \emph{generalizations} of matrix multiplication, which can be computed faster than matrix multiplication, but which may omit some terms one is supposed to compute, and include additional error terms. Our new approach can make use of a wide class of tensors which previously had no known algorithmic applications, including tensors which arise naturally as intermediate steps in border rank methods and in the Laser method.

We further show that our approach can be combined with locality-sensitive hashing to design an algorithm whose running time improves as $\rho$ gets larger. To our knowledge, this is the first algorithm which combines fast matrix multiplication with hashing for the light bulb problem or any closest pair problem, and it leads to faster algorithms for small $\rho>0$.

We then focus on tensors for ``multiplying'' $2 \times 2$ matrices; using such small tensors is typically required for practical algorithms. 
In this setting, the best prior algorithm, using Strassen's algorithm for matrix multiplication, yields a running time of only $O(n^{1.872})$.
We introduce a new such low-rank tensor we call $T_{2112}$, which has omissions and errors compared to matrix multiplication, and using it, we design a new algorithm for the light bulb problem which runs in time $O(n^{1.797})$. We also explain why we are optimistic that this approach could yield asymptotically faster algorithms for the light bulb problem.
\end{abstract}

\thispagestyle{empty}
\newpage
\setcounter{page}{1}

\section{Introduction} \label{sec:intro}

We've known since the work of Strassen~\cite{strassen1973vermeidung} that designing algebraic algorithms for matrix multiplication is equivalent to bounding the ranks of matrix multiplication tensors. Since then, an enormous amount of work has gone into bounding the ranks of these tensors in various regimes, combining techniques from algebra, combinatorics, algorithm design, and computer search. One big reason that so much effort has gone into this problem is that matrix multiplication has many algorithmic applications; algorithmic problems from nearly every domain of computation have been reduced to matrix multiplication.

A goal of this paper is to show that tensors other than matrix multiplication can also have algorithmic applications. In this way, the same techniques which have been developed for matrix multiplication could be repurposed to lead to new algorithms. Tensors whose support is a subset of the support of matrix multiplication have been applied to Boolean matrix multiplication~\cite{cohn2013fast,karppa2019probabilistic,harris2021improved} and even directly to matrix multiplication~\cite{schonhage1981partial}, but we're unaware of applications of any tensors whose support is incomparable with matrix multiplication (other than a small handful of special problems like polynomial multiplication).

We focus here on the light bulb problem, a fundamental problem from learning theory which currently has two best algorithms depending on the parameter regime: one using fast matrix multiplication, and one using locality-sensitive hashing. Somewhat surprisingly, there are no known parameter regimes where combining the two approaches leads to an improved algorithm. Using tensors other than matrix multiplication, we achieve two main results
\begin{enumerate}
    \item Any tensor can replace matrix multiplication to solve the light bulb problem. The efficiency of this algorithm comes from a trade-off between the tensor's rank and how similar it is to matrix multiplication. We find that, restricted to small tensors, there are better tensors than matrix multiplication that lead to improved algorithms.
    \item Tensors other than matrix multiplication can be combined with locality-sensitive hashing to yield improved algorithms. We find that the symmetry of matrix multiplication prevents one from combining it with hashing, but that sufficiently asymmetric tensors can be improved with hashing.
\end{enumerate}

\subsection{The Light Bulb Problem}

In the light bulb problem for $n$ vectors of dimension $d$ and correlation $\rho > 0$, we are given as input vectors $x_1, \ldots, x_n, y_1, \ldots, y_n \in \{-1,1\}^d$ which are all picked uniformly at random, and all picked independently except for a `planted pair' $(x_{i^*}, y_{j^*})$ which is chosen to have correlation $\geq \rho$ (i.e., so that $\langle x_{i^*}, y_{j^*} \rangle \geq \rho \cdot d $). The indices $i^*$ and $j^*$ of the planted pair are unknown to us, and our goal is to find them.\footnote{The light bulb problem is often stated as a `monochromatic' problem, where we are not told which are `$x$' or `$y$' vectors, but this has a simple reduction to the `bichromatic' version we define here.}

L.~Valiant introduced the light bulb problem over 30 years ago~\cite{lightbulb} as a basic primitive which captures the fundamental task of detecting correlations among $n$ random variables. It can be seen as a special case of a multitude of other problems in data analysis, statistics, and learning theory, and for many of these problems, the fastest known algorithm comes from a reduction to the light bulb problem. For instance:
\begin{compactitem}
    \item If one would like to detect correlations among random variables with a range $R$ other than just $\{-1,1\}$, one can typically reduce to the light bulb problem by making use of a locality-sensitive hash function for $R$. For example, if $R$ is the Euclidean sphere, then one can map points in $R$ to $\{-1,1\}$ by determining which side of a random hyperplane they lie on, which only slightly decreases the correlation $\rho$, by a constant factor~\cite{rounding}.
    \item The light bulb problem is a special case of many learning problems, including learning sparse parities with noise, and learning $k$-Juntas with and without noise, and the fastest known algorithms for these problems come from reducing the general case to the light bulb problem~\cite{valiant2012finding}.
\end{compactitem}

Suppose $d = \Theta(\log n)$. The straightforward algorithm for this problem simply compares each pair of vectors and runs in time $\tilde{O}(n^2)$.\footnote{We write $\tilde{O}(t)$ to suppress polylog$(t)$ factors.} Techniques for nearest neighbor search like locality-sensitive hashing have been applied to the problem, culminating in Dubiner's algorithm~\cite{dubiner2010bucketing} which runs in time $n^{2/(\rho+1) + o(1)}$. This is the fastest known algorithm for larger $\rho$, but its running time becomes quadratic as $\rho \to 0$. In 2012, G.~Valiant~\cite{valiant2012finding} gave a breakthrough algorithm running in time $O(n^{1.615})$ no matter how small the constant $\rho>0$ is. Thereafter, Karppa, Kaski, and Kohonen~\cite{karppa2018faster} improved the running time to $O(n^{1.582})$. This is faster than Dubiner's algorithm for all $0 < \rho < 0.264$. To emphasize, these algorithms work for any constant $\rho>0$, but give essentially the same running time no matter how large $\rho$ is.

The key ideas behind these latter two algorithms focus on the dimension $d$, which is sometimes called the `sample complexity'. One would typically like to keep $d$ low while still solving the problem quickly. It is information-theoretically necessary to pick $d = \Omega(\log n)$ (since if $d = o(\log n)$, then by the pigeonhole principle, two of the uncorrelated vectors will be \emph{equal to} each other and indistinguishable from the correlated pair).

Interestingly, G.~Valiant~\cite{valiant2012finding} introduced a `XOR/Tensor Embedding' technique, and Karppa, Kaski, and Kohonen~\cite{karppa2018faster} gave a more efficient `compressed matrices' implementation, which (roughly) allows one to efficiently `expand' lower-dimensional vectors, and convert $d = \Theta(\log n)$ to a much larger $d = n^{\Theta(1)}$ with only a negligible decrease in $\rho$. This allows one to focus on the task of designing faster algorithms for detecting correlations without worrying about $d$. More precisely, these prior algorithms consist of two phases solving two different problems: a `vector aggregation' problem of converting groups of shorter vectors into single longer vectors, and a `light bulb computation' problem of actually detecting the correlations among these vectors. The final running time of \cite{valiant2012finding} trades off between the running times of these two problems, and the later work \cite{karppa2018faster} showed how to make the running time of vector aggregation negligible compared to the running time of light bulb computation. Both prior algorithms ultimately solve the light bulb computation problem using fast matrix multiplication, and we focus in this paper on faster algorithms for this problem. (See footnote~\ref{footnote:aggregation} in Section~\ref{sec:firstalg} below for more details.)

Despite the importance of the light bulb problem, no approach beyond locality-sensitive hashing or `expand then use fast matrix multiplication' has been proposed since the breakthrough almost 10 years ago~\cite{valiant2012finding}, and these known approaches seem to have hit their limits~\cite{karppa2018faster,alman2018illuminating}. Furthermore, although hashing approaches and matrix multiplication approaches have been known for both the light bulb problem as well as many other closest pair problems for some time (see, for instance, the survey~\cite{andoni2018approximate}), there are no known algorithms for any of these problems which truly combine the two. 

In this paper, we propose a new approach to designing faster algorithms for the light bulb problem by replacing fast matrix multiplication with other tensors which are \emph{generalizations} of matrix multiplication, and which can be computed faster. We also show how hashing methods can be combined with our approach to design even faster algorithms: while previous matrix multiplication-based algorithms for the light bulb problem have the same running time regardless of how large $\rho>0$ is, our new approach yields algorithms which are faster as $\rho$ gets larger. Before getting into more detail, we introduce some necessary background.

\paragraph{Known Algorithms and Exponents.} 

The exponent of matrix multiplication, $\omega$, is the smallest real number such that for any $\eps > 0$, one can multiply $n \times n$ matrices over a field using $O(n^{\omega + \eps})$ field operations\footnote{The `running time' and `number of field operations' are typically related by low-order terms unless one is working with very large numbers. In principle, $\omega$ might depend on the characteristic of the field, although all known bounds work equally well over any field, so we will abuse notation and simply refer to the same $\omega$ for all fields.}. Since $n \times n$ matrices have $n^2$ entries one must read and write, it is known that $\omega \geq 2$, and the best known algorithms show $\omega < 2.37286$~\cite{CoppersmithW82,stothers,v12,legall,alman2021refined}.

We similarly define the exponent of the light bulb problem, $\omega_\ell$, to be the smallest real number such that for any $\eps>0$, one can solve the light bulb problem with $n$ vectors, for any constant $\rho > 0$, in time $O(n^{\omega_\ell + \eps})$. G.~Valiant~\cite{valiant2012finding} showed that $\omega_\ell \leq (5-\omega)/(4-\omega) < 1.615$, and Karppa, Kaski, and Kohonen~\cite{karppa2018faster} later improved this to the best known bound $\omega_\ell \leq 2 \omega / 3 < 1.582$. Since the input size is only $\tilde{O}(n)$, the corresponding lower bound is $\omega_\ell \geq 1$. However, even showing $\omega = 2$ would only imply that $\omega_\ell \leq 4/3$ using the known algorithms~\cite{valiant2012finding,karppa2018faster,alman2018illuminating}.

We will also discuss the \emph{Boolean} matrix multiplication problem, where we're given as input matrices $A,B \in \{0,1\}^{n \times n}$, and we need to compute the matrix product $C = A \times B$ over the Boolean semiring, i.e., the matrix $C \in \{0,1\}^{n \times n}$ given by $C[i,j] = \bigvee_{k=1}^n (A[i,k] \wedge B[k,j])$. Let $\omega_B$ denote the smallest real number such that for any $\eps>0$, one can solve this problem in time $O(n^{\omega_B + \eps})$. It is known that $2 \leq \omega_B \leq \omega$, and there are no known algorithms for Boolean matrix multiplication that are asymptotically faster than the best known matrix multiplication algorithms (see, e.g.,~\cite[{Section 1}]{karppa2019probabilistic}).

\subsection{Bilinear Problems}

A key technique in this paper will be designing and making use of algorithms for bilinear problems, wherein one would like to evaluate a prescribed set of bilinear polynomials when its variables are set to input numbers. Matrix multiplication is a prominent example, and we will focus particularly on bilinear problems like this where the inputs and outputs are naturally formatted as matrices.
Bilinear problems which take as input a $q_i \times q_k$ matrix $\X$ and a $q_j \times q_k$ matrix $\Y$, and output a $q_i \times q_j$ matrix $\Z$, can be written as a (three-dimensional) tensor $$T = \sum_{i,i' \in [q_i], j,j' \in [q_j], k,k' \in [q_k]} T(\X_{i,k} \Y_{j,k'} \Z_{i',j'}) \cdot \X_{i,k} \Y_{j,k'} \Z_{i',j'},$$
where $T(\X_{i,k} \Y_{j,k'} \Z_{i',j'}) \in \R$ is the coefficient of $\X_{i,k} \Y_{j,k'}$ in the bilinear polynomial we output in entry $\Z_{i',j'}$. One can imagine plugging in values for each of the $\X$ and $\Y$ variables, and then the goal is to compute the coefficient of each $\Z$ variable. For example, for the $q \times q$ matrix multiplication problem $T_q = \langle q,q,q \rangle$, $$T_q(\X_{i,k} \Y_{j,k'} \Z_{i',j'}) = \begin{cases} 1 &\mbox{if } i=i', j=j', \text{ and } k=k', \\
0 & \mbox{otherwise.} \end{cases}$$

\subsection{Main result when $\rho$ is close to $0$}

Our main result, which we will state below, gives a way to use an algorithm for almost any bilinear problem $T$ to solve the light bulb problem, even if $T$ only computes some of the terms of matrix multiplication, and if $T$ also computes other `noise' terms. To state our result, we need to define two relevant properties of $T$. The first property, the \textit{rank} of $T$, is a standard measure of how complicated $T$ is, while the another property, \textit{efficacy}, is the property we introduce for measuring how useful $T$ is for solving the light bulb problem.

\paragraph{Rank.} A tensor $T$ has rank $1$ if it can be written in the form
$$T = \left( \sum_{i \in [q_i], k \in [q_k]} \alpha_{i,k} \X_{i,k} \right) \left( \sum_{j \in [q_j], k \in [q_k]} \beta_{j,k} \Y_{j,k} \right) \left( \sum_{i \in [q_i], j \in [q_j]} \gamma_{i,j} \Z_{i,j} \right)$$
for coefficients $\alpha_{i,k}, \beta_{j,k}, \gamma_{i,j} \in \R$. More generally, the rank of $T$, denoted $\rank(T)$, is the minimum nonnegative integer $k$ such that there are rank $1$ tensors $T_1, \ldots, T_k$ with $T = T_1 + \cdots + T_k$.
Rank is the most prominent measure of the complexity of a tensor, and rank upper bounds for tensors yield algorithms for applying that tensor to matrices. 
For instance, Strassen~\cite{strassen} famously showed that the rank of the tensor $\langle 2,2,2 \rangle$ for multiplying two $2 \times 2$ matrices is at most $7$, and hence that one can multiply $n \times n$ matrices in time $O(n^{\log_2 7})$.

\paragraph{Efficacy.} The second property of $T$, which is a new property we introduce, is its \emph{efficacy}\footnote{We were inspired to pick this name by the `luminous efficacy' of a light bulb, which measures the ratio of how much light is produced and how much power is consumed. It bears similarity to other known statistical ratios like the `standardized second moment' and the `Fano factor'.}. 
For $i \in [q_i]$ and $j \in [q_j]$, the efficacy of $T$ at $(i,j)$ is given by:
$$\eff_{i,j}(T) := \frac{\sum_{k \in [q_k]} T(\X_{i,k} \Y_{j,k} \Z_{i,j})}{\sqrt{\sum_{i' \in [q_i], j' \in [q_j],k,k' \in [q_k]} T(\X_{i',k} \Y_{j',k'} \Z_{i,j})^2}}.$$
The numerator of $\eff_{i,j}(T)$ is the sum of the coefficients of all the entries which are supposed to be included in $Z_{i,j}$ in regular matrix multiplication. The denominator is the $\ell_2$ norm of the vector of coefficients of \emph{all} the terms which are included in $Z_{i,j}$ in $T$. Hence, one can think of $\eff_{i,j}(T)$ as a ratio of the `signal' and the `noise' of $T$ for computing the $(i,j)$ output entry of matrix multiplication.

Then, the efficacy of the whole tensor $T$ is the $\ell_2$ norm of the efficacies of all its output entries:
$$\eff(T) := \sqrt{\sum_{i \in [q_i], j \in [q_j]} \left(\eff_{i,j}(T)\right)^2}.$$
We will see that $\eff(T)$ measures how useful $T$ is for solving the light bulb problem from our main result, which shows how one could improve on the current best exponent $2 \omega / 3$:

\begin{theorem} \label{thm:mainintro}
For any tensor $T$, if $$\frac{\log(\rank(T))}{\log(\eff(T))} < \frac{2 \omega}{3},$$ then $$\omega_\ell < \frac{2 \omega}{3}.$$
Moreover, if $T$ has negligible aggregation time (see Section~\ref{subsec:aggregationintro} below), then $$\omega_\ell < \frac{\log(\rank(T))}{\log(\eff(T))}.$$
\end{theorem}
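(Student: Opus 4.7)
The plan is to extend the Karppa-Kaski-Kohonen framework by replacing its use of fast matrix multiplication in the ``light bulb computation'' phase with a bilinear algorithm for a general tensor $T$, and to track the signal-to-noise ratio via the efficacy $\eff(T)$. The algorithm has three phases. First, aggregate the $n$ input $x$-vectors (each of dimension $d = \Theta(\log n)$) into $M$ aggregated vectors of dimension $D$ using groups of $g$ vectors each, via the XOR/compressed-matrices technique, and do the same for the $y$-vectors; the aggregation produces matrices $X', Y' \in \R^{M \times D}$ whose entries are mean-zero, bounded-variance, pairwise-independent random variables satisfying $\E[X'_{I^*, K}\, Y'_{J^*, K}] = \Theta(\rho^g)$ whenever $(I^*, K),(J^*, K)$ are the aggregated positions inheriting $x_{i^*}, y_{j^*}$, and $\E[X'_{I, K} Y'_{J, K'}] = 0$ otherwise. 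Second, view $X', Y'$ as inputs to the tensor $U := T^{\otimes r} \otimes \langle N, N, N \rangle$, where $r, N$ are chosen so the shape $(q_i^r N) \times (q_k^r N)$ matches $M \times D$; compute the output $\Z$ in time $\rank(T)^r \cdot N^{\omega + o(1)}$ via a standard rank decomposition. Third, scan the $M \times M$ output and return the entry of largest magnitude.

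By linearity of expectation and the aggregation's correlation property, $\E[\Z_{I^*, J^*}] = \rho^g \sum_K U(\X_{I^*, K} \Y_{J^*, K} \Z_{I^*, J^*})$ and $\E[\Z_{I, J}] = 0$ for every other $(I, J)$, while pairwise independence yields $\var(\Z_{I, J}) \le \sum_{I', J', K, K'} U(\X_{I', K} \Y_{J', K'} \Z_{I, J})^2$. Hence the signal-to-standard-deviation ratio at the planted entry is exactly $\rho^g \cdot \eff_{I^*, J^*}(U)$. A short bookkeeping lemma establishes multiplicativity of efficacy under tensor product, $\eff(T \otimes T') = \eff(T) \cdot \eff(T')$; since $\eff(\langle N, N, N \rangle) = N^{3/2}$, this gives $\eff(U) = \eff(T)^r \cdot N^{3/2}$. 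By randomizing which original vectors go into which groups, we may assume $\eff_{I^*, J^*}(U) = \Omega(\eff(U)/M)$ at the planted position (otherwise we retry with a fresh randomization).

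Fix $g = \Theta(\log n)$ large enough that $\rho^g \cdot \eff(U)/M$ exceeds the union-bound threshold over all $M^2$ output entries. Choosing $r$ and $N$ to respect the shape constraints and setting $M = n/g$, the tensor application time evaluates to $n^{\log \rank(T)/\log \eff(T) + o(1)}$, which proves Part 2 (the ``moreover'' clause) under the negligible aggregation assumption. For the first statement, pay the aggregation cost separately (at worst $n^{2\omega/3 + o(1)}$ via fast matrix multiplication applied to the aggregation sub-instance, as in KKK); since $\log \rank(T)/\log \eff(T) < 2\omega/3$ by hypothesis, the tensor phase is strictly cheaper, and combining the two costs keeps the total exponent strictly below $2\omega/3$.

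The main technical obstacle is engineering the aggregation so that its weak-independence guarantees are strong enough to validate the clean variance bound $\var(\Z_{I, J}) \le \sum_{I', J', K, K'} U(\X_{I', K} \Y_{J', K'} \Z_{I, J})^2$: any missed higher-order correlation among the aggregated entries could inflate the noise beyond what the efficacy controls. A secondary subtlety is handling the potential non-uniformity of $\eff_{i, j}(T)$ across positions; repeating with fresh randomized aggregations ensures the planted pair lands in a slot of near-average efficacy with constant probability, so the global quantity $\eff(T)$ is the right one rather than any single $\eff_{i, j}(T)$.
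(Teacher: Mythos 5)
Your proposal reproduces the easy part of the argument (pairwise-independent aggregation, per-entry mean/variance computed from the tensor coefficients, multiplicativity of efficacy under Kronecker product), but it hides the actual difficulty in the single step ``by randomizing which original vectors go into which groups, we may assume $\eff_{I^*,J^*}(U)=\Omega(\eff(U)/M)$ at the planted position (otherwise we retry).'' The quantity $\eff(U)^2$ is only the \emph{sum} of the $M^2$ squared per-entry efficacies, and nothing prevents this mass from concentrating on a polynomially small fraction of output positions: if, say, only one entry of $T$ has nonzero efficacy, then after the $r$-th Kronecker power only a $q^{-2r}$ fraction of bucket pairs is usable, so a uniformly random bucketing places the planted pair in a near-average-efficacy slot with only polynomially small probability, and the number of fresh retries enters the exponent. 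This is exactly the loss captured by the factor $q^2/|S_f|$ in Theorem~\ref{thm:main1}, which only gives $\log(\rank(T)q^2/|S_f|)/\log(fq)$, not $\log\rank(T)/\log\eff(T)$. Removing it is the bulk of the paper's proof: instead of retrying, one makes $t$ copies of each vector inside a single run (Theorem~\ref{thm:main2}), which makes the events ``some pair of copies lands in a good bucket pair'' and the entries of $C$ correlated; these correlations are controlled by a second-moment argument that requires the good set to be non-skewed (Lemma~\ref{lem:combrect2}), by the random signs $s_a,s_b$ together with Lemma~\ref{lem:random_-1_1} in the variance analysis, by symmetrizing the tensor as $T\otimes T^{\top}$ to enforce regularity (Section~\ref{sec:symmetrizing}), and finally by an averaging argument over the at most $N^{q^2}$ type classes of $T^{\otimes N}$ showing that some regular class attains essentially $\eff(T)^{2N}$ (Theorem~\ref{thm:main_new}). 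Neither this machinery nor any substitute for it appears in your proposal, so the claimed exponent is not established.

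Two further gaps. First, the signal accounting: with groups of $g$ vectors aggregated by summation, the planted entry has mean $\Theta(\rho)$ and standard deviation $\Theta(g)$, so the relevant ratio is $(\rho/g)\cdot\eff_{i,j}$ and $g$ is chosen to make this a large constant; your $\rho^g$ with $g=\Theta(\log n)$ makes the signal $n^{-\Theta(\log(1/\rho))}$ and would force a $\rho$-dependent exponent, whereas the theorem requires the exponent $\log\rank(T)/\log\eff(T)$ for every constant $\rho>0$. Second, the first claim of the theorem (strict inequality $\omega_\ell<2\omega/3$ without the aggregation assumption) does not follow from ``pay the aggregation cost separately, at worst $n^{2\omega/3+o(1)}$ as in KKK'': for a general $T$ the aggregation cost scales like $n^{(1+\omega/2)/\log_q\eff(T)}$ (Lemma~\ref{lem:aggre_time}) and can exceed $n^{2\omega/3}$, indeed exceed $n^2$, when $\eff(T)$ is small relative to $q$; and even if it were $n^{2\omega/3+o(1)}$, adding it would only give $\omega_\ell\le 2\omega/3$, not a strict improvement. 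The paper handles this by replacing $T$ with $T^{\otimes \delta N}\otimes\langle q,q,q\rangle^{\otimes(1-\delta)N}$ (Appendix~\ref{sec:agg}), which has negligible aggregation time while keeping the rank-to-efficacy ratio below $2\omega/3$.
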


Hence, as long as $T$ is easy to compute ($\rank(T)$ is small) and it has a high ratio of signal to noise for computing matrix multiplication ($\eff(T)$ is large), one can use it to design a fast algorithm for the light bulb computation problem. (Again, this algorithm works with this exponent for any constant $\rho>0$, no matter how small.) We will see shortly that the algorithm consists of applying $T$ to pairs of carefully-chosen (but simple to construct) matrices, and then doing a simple analysis of the result. In other words, the algorithm itself is fairly simple, but the proof of correctness is quite involved.

\subsection{Aggregation time} \label{subsec:aggregationintro}

The aggregation time assumption in Theorem~\ref{thm:mainintro} relates to the initial aggregation step that appears in all prior matrix multiplication-based light bulb algorithms including ours~\cite{valiant2012finding,karppa2018faster,alman2018illuminating}. In~\cite{valiant2012finding}, aggregation took a significant amount of time which needed to be ``traded off'' against later steps of the algorithm. \cite{karppa2018faster} substantially improved aggregation to take a negligible amount of time compared to the rest of the algorithm.

The same technique of~\cite{karppa2018faster} applies in our setting as well, which makes aggregation negligible for all the tensors we study. We believe this technique makes the aggregation time negligible for all possible tensors $T$, although we're unable to prove this\footnote{Tensors with nonnegligible aggregation time would have very low efficacy, so that very long vectors are needed in our algorithm, but extremely low rank so that the algorithm may still be fast; see Appendix~\ref{sec:agg} below for more details.}.

Nonetheless, we prove in Theorem~\ref{thm:mainintro} that in order to improve the current best exponent $2 \omega / 3$, it suffices to find any tensor $T$ with $\frac{\log(\rank(T))}{\log(\eff(T))} < \frac{2 \omega}{3}$, ignoring the aggregation time condition.
To prove this, we show in Appendix~\ref{sec:agg} below that for any tensor $T$ with nonnegligible aggregation time, if the quantity $\frac{\log(\rank(T))}{\log(\eff(T))}$ is less than the current best exponent $2 \omega / 3$, then one can slightly modify $T$ to get a new tensor $T'$ with negligible aggregation time which still has $\frac{\log(\rank(T'))}{\log(\eff(T'))} <  2 \omega / 3$.

\subsection{Applying to Matrix Multiplication Generalizations} \label{subsec:tensorsintro}

To demonstrate the promise of Theorem~\ref{thm:mainintro}, we focus on tensors for $2 \times 2$ input matrices. (Using the notation above, we focus on $q_i = q_j = q_k = 2$.) This is the size of the tensor for Strassen's algorithm. As we discuss shortly, we introduce a new tensor with rank only $5$ which is able to to achieve a better exponent than Strassen's algorithm. 

We focus on this case for three reasons. First, using such a small tensor is typically necessary to design a practical algorithm (see, e.g., the introductions of~\cite{huang2016strassen,karppa2019probabilistic,pan2018fast,fawzi2022discovering} where practicality concerns are discussed). Second, using such small tensors lets us more concretely see how more general tensors can be used, especially in conjunction with locality-sensitive hashing later. Third, small tensors are typically a good test bed for further improvements based on larger tensors. (We discuss this in more detail shortly, in Section~\ref{sec:priorwork} below.)

\paragraph{Three Matrix Multiplication Generalizations.} We apply Theorem~\ref{thm:mainintro} to three tensors of interest. We will see that applying it when $T$ is a matrix multiplication tensor recovers the best known bound $\omega_\ell \leq \frac23 \omega$ of~\cite{karppa2018faster}, but that other tensors can yield even faster algorithms, including a new tensor we introduce.
See Figure~\ref{fig:tensors} for descriptions of the three tensors; their precise definitions and rank expressions are given in Section~\ref{sec:newtensor} below.

\begin{figure}[h!]
\begin{center} 
\begin{tabular}{ c|c|c|c| } 
Tensor Name & Rank & Tensor & \parbox{3cm}{Table of $\eff_{i,j}$\footnote{The indices $i$ and $j$ are in bold on the left and top of the table, respectively. The efficacy of the whole tensor is given below the table.}\\ value of $\eff$, and\\resulting $\omega_\ell$ bound} \\ \hline
\parbox{1.5cm}{$\langle 2,2,2 \rangle$ \\ (Strassen's algorithm \cite{strassen})} & 
$7$ & \parbox{4.5cm}{$(\X_{1,1} \Y_{1,1} + \X_{1,2}\Y_{2,1})\Z_{1,1} \\ + (\X_{1,1}\Y_{1,2} + \X_{1,2}\Y_{2,2})\Z_{2,1} \\+ (\X_{2,1}\Y_{1,1} + \X_{2,2}\Y_{2,1})\Z_{1,2} \\ + (\X_{2,1}\Y_{1,2} + \X_{2,2}\Y_{2,2})\Z_{2,2}$} & 
\parbox{4cm}{\footnotesize{\begin{center}\begin{tabular}{ l|l|l| } 
 \textbf{$i$\textbackslash$j$} & \textbf{1} & \textbf{2}  \\ 
 \hline
 \textbf{1} & $\sqrt{2}$ & $\sqrt{2}$  \\ 
 \hline
 \textbf{2} & $\sqrt{2}$ & $\sqrt{2}$  \\ 
 \hline
\end{tabular}
\\
\vspace{10pt}
$\eff(\langle 2,2,2 \rangle) = \sqrt{8}$
\\
\vspace{10pt}
$\omega_\ell \leq \frac{\log(7)}{\log(\sqrt{8})} < 1.872$ \vspace{-10pt}
\end{center}}}

\\ \hline
\parbox{1.5cm}{$SW$ \\ (Strassen-Winograd identity \cite{winograd1971multiplication})} & 
$6$ & \parbox{4.5cm}{$( \X_{1,2}\Y_{2,1})\Z_{1,1} \\ + (\X_{1,1}\Y_{1,2} + \X_{1,2}\Y_{2,2})\Z_{2,1} \\+ (\X_{2,1}\Y_{1,1} + \X_{2,2}\Y_{2,1})\Z_{1,2} \\ + (\X_{2,1}\Y_{1,2} + \X_{2,2}\Y_{2,2})\Z_{2,2}$} & 
\parbox{4cm}{\footnotesize{\begin{center}\begin{tabular}{ l|l|l| } 
 \textbf{$i$\textbackslash$j$} & \textbf{1} & \textbf{2}  \\ 
 \hline
 \textbf{1} & 1 & $\sqrt{2}$  \\ 
 \hline
 \textbf{2} & $\sqrt{2}$ & $\sqrt{2}$  \\ 
 \hline
\end{tabular}
\\
\vspace{10pt}
$\eff(SW) = \sqrt{7}$
\\
\vspace{10pt}
$\omega_\ell \leq \frac{\log(6)}{\log(\sqrt{7})} < 1.842$ \vspace{-10pt}
\end{center}}}

\\ \hline
\parbox{2cm}{$T_{2112}$ \\ (new tensor)} & 
$5$ &
\parbox{4.5cm}{\footnotesize{$(\X_{1,1}\Y_{1,1}+ \X_{1,2}\Y_{2,1} + O(\eps))\Z_{1,1} \\ + (\X_{1,2}\Y_{2,2} + O(\eps))\Z_{2,1} \\+ (\X_{2,1}\Y_{1,1} + O(\eps))\Z_{1,2} +\\  (\X_{2,1}\Y_{1,2} + \X_{2,2}\Y_{2,2} + O(\eps))\Z_{2,2}$\vspace{10pt}} 
 ($O(\eps)$ hides 
 arbitrarily small positive coefficients in terms of a parameter $\eps>0$)} &
\parbox{5cm}{\begin{center}\footnotesize{\begin{tabular}{ l|l|l| } 
 \textbf{$i$\textbackslash$j$} & \textbf{1} & \textbf{2}  \\ 
 \hline
 \textbf{1} & $\sqrt{2} - O(\eps^2)$ & $1 - O(\eps^2)$  \\ 
 \hline
 \textbf{2} & $1 - O(\eps^2)$ & $\sqrt{2} - O(\eps^2)$  \\ 
 \hline
\end{tabular}
\\
\vspace{10pt}
$\eff(T_{2112}) = \sqrt{6} - O(\eps^2)$}
\\
\vspace{10pt}
$\omega_\ell \leq \frac{\log(5)}{\log(\sqrt{6} - O(\eps^2))} \rightarrow 1.797$
\end{center}}

\\ \hline
\end{tabular}
\caption{Three tensors with $q=2$ which we use in our algorithm, along with the resulting bounds on $\omega_\ell$ from using them in Theorem~\ref{thm:mainintro}. See Section~\ref{sec:newtensor} below where we define these tensors exactly (without hiding terms in `$O(\eps)$') and give their rank expressions. The tensors $\langle 2,2,2 \rangle$ and $SW$ come from classical work on optimizing Strassen's algorithm, while $T_{2112}$ and its rank upper bound are both new.} \label{fig:tensors}
\end{center}
\end{figure}

First is the tensor for $2 \times 2$ matrix multiplication (denoted $\langle 2,2,2 \rangle$). We can calculate that $\eff(\langle 2,2,2 \rangle) = \sqrt{8}$, and hence, using Strassen's bound $\rank(\langle 2,2,2 \rangle) \leq 7$, that $\omega_\ell \leq 1.872$. Prior to this work, this was the smallest known exponent for the light bulb problem based on a $q=2$ tensor. Note that more generally, $\eff(\langle n,n,n \rangle) = n^{3/2}$, so applying Theorem~\ref{thm:mainintro} to the $n \times n$ matrix multiplication tensor (denoted $\langle n,n,n \rangle$) for large $n$ yields $\omega_\ell \leq \log(R(\langle n,n,n \rangle)) / \log(n^{1.5}) \leq \log(n^{\omega + o(1)}) / \log(n^{1.5}) \rightarrow \frac23 \omega$, which recovers the best known exponent for the light bulb problem~\cite{karppa2018faster}.

Second is the tensor $SW$, which consists of $7$ of the $8$ terms of $2 \times 2$ matrix multiplication, and which has rank $6$ via an identity by Winograd~\cite{winograd1971multiplication}. Recent work by Karppa and Kaski~\cite{karppa2019probabilistic} showed how to apply any tensor which consists of a subset of the terms of matrix multiplication to design a Boolean matrix multiplication algorithm. Follow-up work by Harris~\cite{harris2021improved} improved their analysis specifically for the tensor $SW$ to design a practical (since it is based on a small tensor) algorithm for Boolean matrix multiplication with exponent $\omega_B < 2.763$. It was previously unclear how to use $SW$, or any such `subset of matrix multiplication' tensor, to design an algorithm for the light bulb problem, or any problem which does not have a known reduction to Boolean matrix multiplication. Applying our Theorem~\ref{thm:mainintro} to $SW$ yields an algorithm with exponent $\omega_\ell < 1.842$, improving on Strassen's algorithm. More generally, for `subset of matrix multiplication' tensors, our bound on $\omega_\ell$ is strictly better than $\frac23$ times the bound on $\omega_B$ which Karppa and Kaski~\cite{karppa2019probabilistic} achieves, and equal to $\frac23$ times the bound on $\omega_B$ which Harris~\cite{harris2021improved} achieves (although~\cite{harris2021improved} only applies to some such tensors\footnote{It seems difficult to extend the approach of~\cite{harris2021improved} to `subset of matrix multiplication' tensors with very skewed patterns of terms, whereas~\cite{karppa2019probabilistic} applies to all such tensors. In this paper we use a technique to `regularize' the pattern of errors of a tensor (see Section~\ref{sec:symmetrizing} below) which seems at a first glance like it could apply to that setting as well, but unfortunately, the `space of errors' in that settings is 3-dimensional, whereas we critically use the fact that it is 2-dimensional here.}).
 
Third is a new rank-$5$ tensor $T_{2112}$ that we design and give a rank bound for in this paper. In terms of a parameter $\eps>0$, $T_{2112}$ is a sum of $6$ of the $8$ terms terms of $2 \times 2$ matrix multiplication, plus $7$ additional terms with coefficients $O(\eps)$, which can be made arbitrarily small by suitably picking $\eps$. (Note that one cannot eliminate these terms by setting $\eps=0$, since we divide by $\eps$ when showing that $T_{2112}$ has rank $5$; see Section~\ref{sec:newtensor} for more details.)

Prior work would have concluded by taking the limit $\eps \to 0$ in $T_{2112}$ that the `border rank' of 6 of the 8 terms of $2 \times 2$ matrix multiplication is $5$. Border rank bounds could be used instead of rank in conjunction with our approach by using the technique of Bini~\cite{bini1980border}. One advantage of Theorem~\ref{thm:mainintro} is that it allows one to plug constants $\eps>0$ into border rank expressions and avoid the complications of border rank. (Border rank identities are typically harder to find using numerical methods, and lead to less practical algorithms.) Setting just $\eps = 0.025$ suffices to get the best possible exponent using $q=2$:
\begin{theorem}
There is a tensor $T_{2112}$ with $q=2$ which achieves the exponent $\omega_\ell < 1.797.$
\end{theorem}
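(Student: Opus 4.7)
The proof plan is to explicitly construct $T_{2112}$ as a rank-$5$ tensor, compute its efficacy, and invoke Theorem~\ref{thm:mainintro}.

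First, I need an explicit rank-$5$ decomposition of a tensor whose support matches the pattern shown in Figure~\ref{fig:tensors}: the six monomials $\X_{1,1}\Y_{1,1}\Z_{1,1}$, $\X_{1,2}\Y_{2,1}\Z_{1,1}$, $\X_{1,2}\Y_{2,2}\Z_{2,1}$, $\X_{2,1}\Y_{1,1}\Z_{1,2}$, $\X_{2,1}\Y_{1,2}\Z_{2,2}$, $\X_{2,2}\Y_{2,2}\Z_{2,2}$ should appear with coefficient $1$, while the two $2\times 2$ matrix multiplication monomials $\X_{1,1}\Y_{1,2}\Z_{2,1}$ and $\X_{2,2}\Y_{2,1}\Z_{1,2}$ should be absent. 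Since $\rank(\langle 2,2,2\rangle) \geq 7$, any rank-$5$ decomposition must be Bini-style, with coefficients depending on a parameter $\eps$ that blow up in the limit $\eps\to 0$, and with all `extra' terms that appear after expansion carrying at least one factor of $\eps$. I would search for five rank-$1$ triples with low-order-in-$\eps$ coefficients, guided by the observation that the required support is asymmetric under exchange of $\X$ and $\Y$, which gives algebraic flexibility unavailable to the usual symmetric border-rank approximations of $\langle 2,2,2\rangle$. Finding this decomposition is the main obstacle; once it is written down, verifying it is a routine polynomial identity check.

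Second, I compute $\eff(T_{2112})$ from the decomposition. For each output entry $(i,j)$, I evaluate $\eff_{i,j}(T_{2112})$ directly from the definition: the numerator is the sum of the present signal coefficients supposed to contribute to $\Z_{i,j}$ under matrix multiplication, and the denominator is the $\ell_2$ norm of the coefficients of all monomials contributing to $\Z_{i,j}$, including the $O(\eps)$ noise. The two `full' diagonal entries $(1,1)$ and $(2,2)$ contribute $\sqrt{2} - O(\eps^2)$, while the two partially-omitted off-diagonal entries $(1,2)$ and $(2,1)$ contribute $1-O(\eps^2)$; squaring and summing then gives $\eff(T_{2112})^2 = 6 - O(\eps^2)$.

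Third, I invoke Theorem~\ref{thm:mainintro}. Since $T_{2112}$ has signal mass in every output entry, the aggregation technique of~\cite{karppa2018faster} applies as the paper asserts for all tensors we study, so aggregation time is negligible and the sharper conclusion of the theorem gives $\omega_\ell \leq \log(5)/\log(\sqrt{6-O(\eps^2)})$. Fixing a concrete value such as $\eps = 0.025$ keeps the decomposition at honest rank $5$ and makes the $O(\eps^2)$ correction in the efficacy small enough that $\log(5)/\log(\sqrt{6-O(\eps^2)}) < 1.797$; this is a single numerical verification with no further subtlety.
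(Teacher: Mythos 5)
Your overall strategy is the same as the paper's: exhibit a rank-$5$ tensor on the $\langle 2,2,2\rangle$ variable set whose support contains six of the eight matrix-multiplication terms with coefficient $1$ plus only $O(\eps)$ noise, compute $\eff = \sqrt{6}-O(\eps^2)$, and plug into Theorem~\ref{thm:mainintro} with a concrete small $\eps$. The efficacy calculation and the numerical step are indeed routine once the tensor is in hand, and your identification of which six terms should survive matches the paper. However, there is a genuine gap: you never actually produce the rank-$5$ decomposition, and its existence is precisely the nontrivial content of the theorem. Saying you ``would search for five rank-$1$ triples'' guided by the asymmetry of the support is a research plan, not a proof; nothing in your argument rules out that every rank-$5$ tensor approximating this support pattern forces noise coefficients of constant size (which would ruin the $\sqrt{6}-O(\eps^2)$ efficacy claim, since you also need every spurious monomial to enter each $\Z_{i,j}$ slot with coefficient $O(\eps)$ so that the denominator is perturbed only at order $\eps^2$). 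The paper closes this gap in two ways: it writes out the five rank-$1$ summands explicitly (Section~\ref{sec:newtensor}), and it also gives a short structural derivation you could have used instead of a blind search --- start from the structural tensor $T_{(\mathbb{Z}/2)^2}$ of the group $(\mathbb{Z}/2)^2$, which has rank $4$, rename two variables, rescale the variables by suitable powers of $\eps$ (a rank-preserving operation), and then delete the single offending term $\eps^{-5}\X_{0,1}\Y_{0,0}\Z_{1,1}$, which is a rank-$1$ update, yielding rank at most $5$ with exactly the desired coefficient pattern.

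A secondary, smaller issue is your justification that aggregation time is negligible: ``signal mass in every output entry'' is not the criterion the paper uses. The paper verifies negligibility quantitatively (Lemma~\ref{lem:aggre_time}), giving aggregation exponent $\frac{1+\omega/2}{\log_q \eff(T)}$, which for $q=2$ and $\eff(T_{2112}) \approx \sqrt{6}$ is about $1.69 < 1.797$, so the conclusion is correct but needs this check (or the reduction of Appendix~\ref{sec:agg}) rather than the qualitative remark you give.
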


Before moving on, we note that a prior identity of Bini~\cite{bini1980relations} already gave a different tensor $B$ with (a different) 6 of the 8 terms of $2 \times 2$ matrix multiplication, and border rank $5$. However, our tensor $T_{2112}$ has one advantage over $B$.
That is, the pattern of $\eff_{i,j}(T_{2112})$, with larger entries along the diagonal, will allow us to use it in conjunction with hashing methods in our second result.

\subsection{Main result when $\rho$ is bounded away from $0$, using locality-sensitive hashing} \label{sec:general}

At a high level, our algorithm for Theorem~\ref{thm:mainintro} works by first mapping each of the $n$ different $x$ inputs into one of $q_i$ independently random buckets, and each of the $n$ different $y$ inputs into one of $q_j$ independently random buckets. (Recall that $q_i, q_j$ are two of the parameters defining the size of the tensor $T$; think of them as $n^{c}$ for a constant $0<c<1$, for instance by first taking an appropriate `Kronecker power' of $T$.) If the correlated pair were mapped into buckets $i \in [q_i]$ and $j \in [q_j]$, then our algorithm will succeed as long as $\eff_{i,j}(T)$ is large enough. The proof of Theorem~\ref{thm:mainintro} requires carefully balancing the parameters so that, when $i$ and $j$ are picked uniformly randomly, then this becomes fairly likely.

Our second main result shows how to improve Theorem~\ref{thm:mainintro} by combining it with one of the most prevalent techniques in nearest neighbor search: locality-sensitive hashing. 
The main idea to improve on this is to place the inputs into buckets using (a variation on) bit sampling locality-sensitive hashing, instead of uniformly random hashing. In this way, thinking of the buckets as $\{-1,1\}$ bit strings, we know that if the planted pair has correlation $\rho>0$, then they are likely put into buckets $i \in \{-1,1\}^{\log_2(q_i)}$ and $j \in \{-1,1\}^{\log_2(q_j)}$ which also have correlation close to $\rho$. If such buckets have larger $\eff_{i,j}(T)$ than uniformly random buckets, then we can speed up our algorithm.

By construction, our new tensor $T_{2112}$ has exactly this property! By renaming variables\footnote{Whenever $i$ or $j$ was $2$, we now call it $-1$.} and taking the limit $\eps \to 0$ for notational simplicity, we see that it has $\eff_{1,1}(T_{2112}) = \eff_{-1,-1}(T_{2112}) = \sqrt{2}$ and $\eff_{1,-1}(T_{2112}) = \eff_{-1,1}(T_{2112}) = 1$. More generally, once we've taken a Kronecker power so that $q_i = q_j = q$ is larger than $2$, the resulting tensor will have the property that, for buckets $i, j \in \{-1,1\}^{\log_2(q)}$ with correlation $\rho$, we have $\eff_{i,j}(T_{2112}^{\otimes \log_2(q)}) = 2^{(1 + \rho)(\log_2(q)/2)}$, whereas the median pair $i,j$ of buckets has only $\eff_{i,j}(T_{2112}^{\otimes \log_2(q)}) = 2^{\log_2(q)/2}$. Thus, in a sense, the efficacy of our tensor $T_{2112}$ is increasing with $\rho$, resulting in a faster algorithm. (We briefly note that although our analysis makes use of Kronecker powers of tensors, our algorithm itself does not, and only applies the tensor itself to input matrices.)

However, the formal statement of our result is more complicated than this because of a key detail behind Theorem~\ref{thm:mainintro} that we have thus far swept under the rug. Rather than map each input point into a single bucket, it actually makes many copies of each input point and independently maps them into buckets. This way, in order to solve the light bulb problem, it suffices for any one pair of copies of the correlated pair to map into buckets with large efficacy. Typically a locality-sensitive hashing scheme would map all the different copies of the same vector to the same bucket, and lose these savings. Nonetheless, we find a way to hash inputs into multiple buckets, so that the correlated pair is still hashed to correlated buckets, but the different pairs of buckets are `sufficiently independent' of each other so that whether or not each succeeds isn't too correlated.
Applying this to $T_{2112}$, we achieve:

\begin{theorem} \label{thm:intro2112}
For the tensor $T_{2112}$, the bound of Theorem~\ref{thm:mainintro} can be improved to
\begin{align*}
    \omega_{\ell} \leq \begin{cases}
    \frac{2\log 5}{\log\left(6(1-\rho)^{-\rho/2}(1+\rho)^{\rho/2}(1-\rho^2)^{1/2}\right) } & \textrm{ when~}\rho < 1/3,\\
    \frac{4\log 5}{(5+\rho)\log 2} & \textrm{ when~}1/3\leq \rho \leq 1.
    \end{cases}
\end{align*}
\end{theorem}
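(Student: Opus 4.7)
The plan is to replace the uniformly random bucket assignment used inside the proof of Theorem~\ref{thm:mainintro} by a bit-sampling LSH, and then to push the resulting biased bucket-pair distribution through the efficacy computation via a Chernoff/tilting argument on the binomial that controls the planted pair's bucket correlation. Take a Kronecker power $T_{2112}^{\otimes L}$ with $L=\Theta(\log n)$, so that, after renaming $\{1,2\}\to\{-1,+1\}$, the bucket labels live in $\{-1,+1\}^L$ and $\eff_{i,j}(T_{2112}^{\otimes L})=2^{(L+\langle i,j\rangle)/4}$, i.e.\ efficacy grows exponentially in the Hamming correlation of the two bucket labels. A uniformly random bucket pair has $\langle i,j\rangle/L\approx 0$ and would reproduce the non-LSH exponent $2\log 5/\log 6\approx 1.797$.

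For each of the many copies of each input used by the algorithm of Theorem~\ref{thm:mainintro}, I would sample an independent coordinate set $S\subseteq[d]$ of size $L$ and hash a vector $v$ to $v|_S\in\{-1,+1\}^L$. Since the planted pair has correlation $\rho$, conditional on $S$ its two bucket labels have a bit-Hamming correlation $\hat\rho$ distributed as $(2\,\mathrm{Bin}(L,(1+\rho)/2)-L)/L$, concentrating around $\rho$. Because the different copies use independent $S$'s, the bucket labels of distinct copies are nearly independent conditional on the inputs themselves, so the copy-wise independence that drives the variance bound in the proof of Theorem~\ref{thm:mainintro} is preserved; only the marginal distribution of the planted pair's bucket pair has been tilted.

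The core analytic step is to evaluate the effective efficacy under this tilted distribution. Since $\eff_{i,j}^2=2^{k}$ where $k$ is the number of agreeing bucket coordinates, the algorithm's success probability is driven by a typical-set contribution near some target correlation $\rho^*\in[\rho,1]$. A standard saddle-point / Legendre-transform computation on the binomial yields a dominant exponential rate $(1+\rho^*)^{(1+\rho^*)/2}(1-\rho^*)^{(1-\rho^*)/2}$ with a $\sqrt{1-(\rho^*)^2}$ polynomial prefactor, which is traded against the extra efficacy $2^{L(1+\rho^*)/4}$ available in the tilted regime. Optimizing $\rho^*$ against $\rho$, I expect an interior optimum at $\rho^*=\rho$ for $\rho<1/3$ producing the first case of the bound, and a boundary-type optimum where the tilt saturates for $\rho\geq 1/3$, producing the linear-in-$\rho$ second case; the two formulas should agree at $\rho=1/3$, where both evaluate to $(3/4)\log_2 5$, giving a useful internal consistency check.

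The main obstacle will be justifying that copy-wise independence truly survives LSH: two copies whose sampled sets $S$ overlap have correlated bucket labels for \emph{every} pair of input vectors, not only the planted one, so in principle this could inflate the number of spurious high-efficacy bucket landings by non-planted pairs and thereby break the variance bound that underlies Theorem~\ref{thm:mainintro}. I would handle this by choosing $L=o(d)$ and showing that the typical overlap of two random $L$-subsets of $[d]$ is small enough that the cross-copy correlation on non-planted pairs is $o(1)$, so the variance contribution from correlated copies is a lower-order term. Once this control is in place, plugging the tilted efficacy into $\log(\rank(T))/\log(\eff_{\mathrm{eff}}(T))$ with $\rank(T_{2112}^{\otimes L})=5^L$ yields the two stated bounds.
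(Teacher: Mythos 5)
There is a genuine gap, and it sits exactly where you flag your ``main obstacle.'' In your scheme each copy of each vector is hashed by restricting to an \emph{independently sampled} coordinate set $S$, and you then take $L=o(d)$ so that the sets used by different copies are almost disjoint, precisely to keep the copies independent. But the planted pair $x^*,y^*$ is only correlated \emph{coordinate-wise}: if a copy of $x^*$ is hashed via $S_a$ and a copy of $y^*$ via $S'_b$ with $S_a,S'_b$ (nearly) disjoint, then $x^*|_{S_a}$ and $y^*|_{S'_b}$ are (nearly) independent uniform strings, so their bucket labels carry essentially no correlation. Thus the very condition you impose to preserve copy-wise independence destroys the tilt you need: the marginal distribution of the planted pair's bucket pair is \emph{not} tilted toward correlation $\rho$, and the scheme collapses to the uniform bucketing of Theorem~\ref{thm:mainintro}, giving the $\rho$-independent exponent $2\log 5/\log 6$. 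The paper resolves this tension differently: it fixes one block of $2N$ coordinates for \emph{all} copies (so the planted correlation survives) and generates the distinct copies by passing that block through independent noise channels given by stochastic matrices $Q_x=Q_y$ (Definition~\ref{def:effQ_gamma}, Algorithm~\ref{alg:alg1}); the channel noise is what decorrelates copies of the same vector, and the whole analysis (the $\effQ$ normalization for uneven bucket loads, the type decomposition $S_\tau$ and Lemma~\ref{lem:best_tau}, the $T\otimes T^{\top}$ symmetrization, and the second-moment lemma) is run for this family, culminating in Theorem~\ref{thm:main_new_2}.

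Even if you repair the hashing (e.g.\ by sharing the sampled coordinate set between the $x$- and $y$-sides for each copy index), your proposed optimization does not produce the stated two-case bound. Your tilt parameter $\rho^*\in[\rho,1]$ only moves the bucket correlation \emph{up} from $\rho$ at an exponential probability cost, whereas the theorem's first case ($\rho<1/3$) comes from deliberately \emph{lowering} the effective bucket correlation below $\rho$: in Example~\ref{example:2112} the optimal flip probability is $a=(1-\sqrt{3\rho})/2>0$, so the copies' label correlation is $\rho(1-2a)^2=3\rho^2<\rho$, something bit-sampling alone cannot achieve. Conversely, the second case ($\rho\geq 1/3$) is not a ``saturated tilt'' but the no-noise boundary $a=0$, i.e.\ plain use of the coordinates with label correlation exactly $\rho$; indeed, taking your $\rho^*=\rho$ (no large-deviation cost) yields the exponent $4\log 5/((5+\rho)\log 2)$, which is the \emph{second}-case formula, not the first-case one you attribute to it, and it is strictly worse than the first case for $\rho<1/3$. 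So the case boundary at $\rho=1/3$ is where the optimal noise level hits zero, and your saddle-point computation over an up-tilt cannot recover the claimed expression; your continuity check at $\rho=1/3$ verifies the statement, not the derivation. A correct proof along the paper's lines needs (i) a hash family with a tunable correlation-reducing parameter, (ii) the $\effQ$/$\gamma$ accounting of Definition~\ref{def:effQ_gamma} to handle the resulting non-uniform bucket sizes, and (iii) the regularization/symmetrization machinery so that the second-moment argument over the $t^2$ copy pairs applies.
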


The resulting plot of $w_{\ell}$ with respect to $\rho$ from can be found in Figure~\ref{fig:pts} (in blue). Our Theorem~\ref{thm:mainintro}, as well as prior matrix multiplication-based algorithms for the light bulb problem, give the same running time exponent no matter how large $\rho>0$ is, whereas Theorem~\ref{thm:intro2112} uses hashing to improve with $\rho$. We also show Dubiner's algorithm, which is purely based on hashing, and is worse for small $\rho>0$, but better for larger $\rho$.

\begin{figure}
\centering
\begin{subfigure}{.48\textwidth}
  \centering
    \includegraphics[width=.9\linewidth]{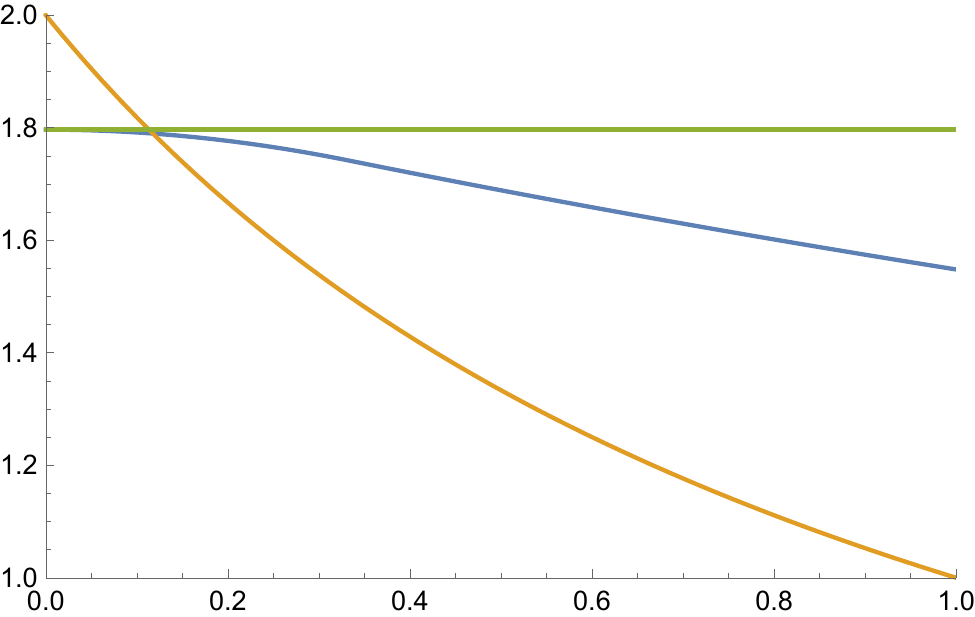}
    \caption{The running time exponent of three different methods in terms of $\rho \in [0,1]$.}
\end{subfigure}\hfill
\begin{subfigure}{.48\textwidth}
  \centering
  \includegraphics[width=.9\linewidth]{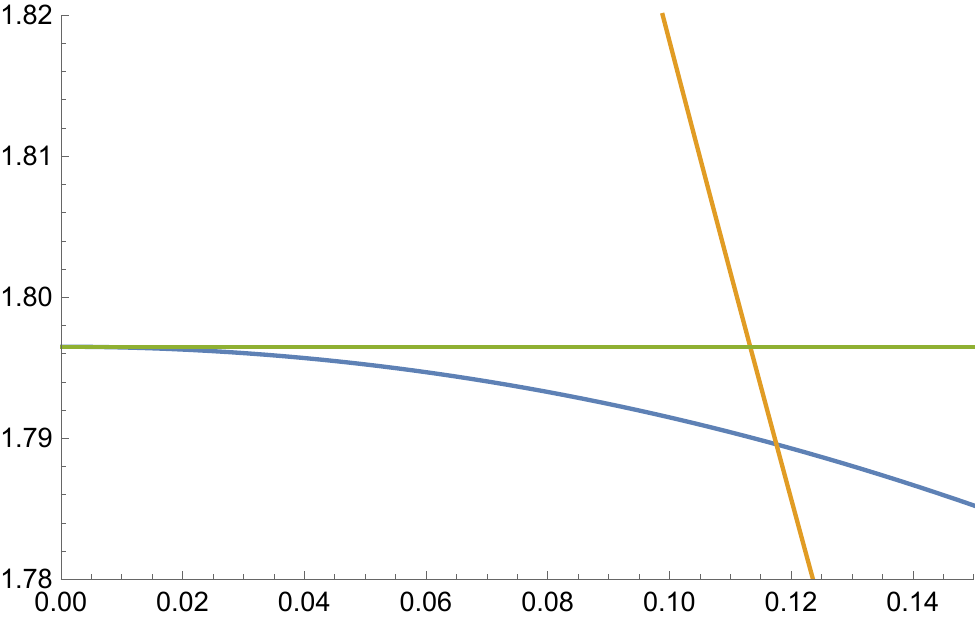}
  \caption{The same plot as in the figure to the left, but focused on the range $\rho\in [0,0.2]$.}
\end{subfigure}
\caption{The running time exponents ($y$-axis) of three algorithms in terms of $\rho$ ($x$-axis): Theorem~\ref{thm:intro2112} (blue line), Theorem~\ref{thm:mainintro} using $T_{2112}$ (green line), and Dubiner's algorithm  \cite{dubiner2010bucketing} (orange line).}
\label{fig:pts}
\end{figure}

\paragraph{Custom-tailored hash functions for other tensors.}

In order to prove Theorem~\ref{thm:intro2112}, we observed that bit sampling locality-sensitive hashing is likely to put the correlated pair of vectors into buckets $i,j$ where $i$ and $j$ are correlated, and hence have a higher-than-average value of $\eff_{i,j}(T_{2112}^{\otimes \log_2(q)})$. What if we are working with a different tensor $T$ for which the value of $\eff_{i,j}(T)$ does not increase as $i$ and $j$ are more correlated? Bit sampling locality-sensitive hashing won't give an improvement, but this is only one possible hash function. 

In fact, we can generalize Theorem~\ref{thm:intro2112} to almost any tensor. We show that for any $T$ whose efficacy matrix $[(\eff_{i,j}(T))^2]_{i,j}$ is not `degenerate' in some sense, one can custom-tailor hash functions for $T$ which result in an improved running time as $\rho$ grows.
The formal statement of this result is somewhat complicated; we defer the details to Section~\ref{sec:lsh} below. However, for one simple and important example, we show:

\begin{theorem} \label{thm:hashingintrogeneral}
Suppose $T$ is a $\langle q,q,q_k\rangle$-sized tensor which consists of a subset of the terms of a matrix multiplication tensor, and the matrix $[(\eff_{i,j}(T))^2]_{i\in[q],j\in[q]}$ has full rank. Let $\omega_\ell' $ be the exponent one would get from $T$ from applying Theorem~\ref{thm:mainintro}. Then, for every $\rho>0$, there is an $f(T,\rho) > 0$ such that the light bulb computation problem with correlation $\rho$ can be solved with the improved exponent $\omega_\ell' - f(T,\rho).$
\end{theorem}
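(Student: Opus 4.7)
The plan is to generalize Theorem~\ref{thm:intro2112}, in which bit-sampling LSH worked for $T_{2112}$ because $\rho$-correlated codewords were more likely to land on the high-efficacy ``diagonal'' of its efficacy matrix. More generally, we want to show that whenever the efficacy squared matrix $M = [(\eff_{i,j}(T))^2]_{i,j \in [q]}$ has full rank, we can design a hash function whose induced distribution $q$ over bucket pairs for the planted $\rho$-correlated pair satisfies $\langle q, M\rangle > \bar{M}$, where $\bar{M} = (1/q^2)\sum_{i,j} M_{i,j}$ is the value produced by the uniform hashing of Theorem~\ref{thm:mainintro}. Any strict gain here feeds into the template of Theorem~\ref{thm:mainintro} and produces an exponent improvement of the form $\omega_{\ell}' - f(T,\rho)$.

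To construct such a hash function, I would pass to a large Kronecker power $T^{\otimes t}$, whose efficacy squared matrix $M^{\otimes t}$ is again full rank, and consider the following family. Each input vector is first mapped into $\{-1,1\}^N$ by bit-sampling LSH (for some $N \geq t\log_2 q$), and then two labelings $\pi_x, \pi_y : \{-1,1\}^N \to [q]^t$ are applied to produce tensor indices. For a $\rho$-correlated pair the intermediate codeword pair $(\mathbf{a},\mathbf{b})$ has the clean joint distribution $p(\mathbf{a},\mathbf{b}) = \prod_{i=1}^N (1+\rho a_i b_i)/4$, while for uncorrelated pairs it is uniform, so the framework of Theorem~\ref{thm:mainintro} applies verbatim with the new induced bucket distribution. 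Expanding $p$ in the Fourier basis on $\{-1,1\}^N \times \{-1,1\}^N$, a direct computation gives $\langle q, M^{\otimes t}\rangle = \bar{M}^t + \sum_{S \neq \emptyset} \rho^{|S|} \widehat{M_\pi}(S,S)$, where $M_\pi$ is the pullback of $M^{\otimes t}$ along $(\pi_x,\pi_y)$. Proving the theorem reduces to choosing $(\pi_x,\pi_y,t,N)$ so that this tail sum is strictly positive.

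The hard step is showing that full rank of $M$ alone suffices to find such labelings, rather than a stronger hypothesis like diagonal dominance or unequal row sums. For instance the $q=2$ matrix $M = \left(\begin{smallmatrix}3 & 2\\ 2 & 1\end{smallmatrix}\right)$ is full rank yet has equal diagonal and anti-diagonal sums, so naive symmetric labelings with $N = \log_2 q$ and $t = 1$ produce no improvement. To get around this I would allow $\pi_x \neq \pi_y$ and enlarge $N$ beyond $t\log_2 q$ so that many-to-one labelings offer more freedom, and combine this with a large Kronecker exponent $t$ to excite Fourier coefficients $\widehat{M_\pi}(S,S)$ that are unavailable at the base scale. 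I expect the key lemma to be: if $M$ is full rank, then there are labelings and some non-empty $S$ making $\widehat{M_\pi}(S,S) > 0$, quantitatively bounded below in terms of the smallest singular value of $M$ and $\rho$. Once a per-factor gain $\Delta(T,\rho) > 0$ is in hand, further Kronecker powers amplify it into a strict improvement of the effective efficacy over $\eff(T)$, which by the main estimate of Theorem~\ref{thm:mainintro} yields the desired $f(T,\rho) > 0$.
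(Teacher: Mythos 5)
Your plan stops exactly where the theorem's content begins. The step you yourself flag as hard --- that full rank of $M=[(\eff_{i,j}(T))^2]_{i,j}$ alone guarantees labelings $(\pi_x,\pi_y,t,N)$ with some $\widehat{M_\pi}(S,S)>0$ --- is stated as an ``expected key lemma'' but never proved, and your own example $\bigl(\begin{smallmatrix}3&2\\2&1\end{smallmatrix}\bigr)$ shows the base-scale symmetric construction fails; no argument is given that asymmetric labelings plus Kronecker powers repair it. The paper closes precisely this gap by a different mechanism: the hash is a randomized per-coordinate relabeling given by stochastic matrices $Q_x,Q_y$ (the $P$-light bulb framework of Theorem~\ref{thm:main_new_2}), and full rank is used \emph{directly} to solve a linear system. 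One column-normalizes $Q_x,Q_y$ to $N_x,N_y$, prescribes a target matrix $C$ that is the all-average matrix perturbed by $+\eps$ in one entry and $-\eps/(q^2-1)$ elsewhere, and solves $N_x\,A\,N_y^{\top}=C$ with $A=M$ using $A^{-1}$ (Lemma~\ref{lem:constructqxqy}); Lemmas~\ref{lem:part1} and~\ref{lem:part2} supply the sign structure of the top row of $A^{-1}$ needed to keep all entries of $Q_x,Q_y$ in $[0,1]$, and Lemma~\ref{lem:beatavg} converts the $\eps$-bias into a strict gain. Nothing in your sketch substitutes for this chain of arguments.

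A second, quieter problem is that your target inequality $\langle q,M\rangle>\bar{M}$ is not the right objective. The quantity that must beat the uniform-hashing baseline is $\gamma_{Q_x,Q_y}$ of Definition~\ref{def:effQ_gamma}: a $P$-weighted \emph{geometric} mean of efficacies that are re-normalized by the bucket-load factors $\partial_x,\partial_y$, because a biased hash inflates some buckets and hence their noise. Without that normalization (or an explicit balancedness constraint on your labelings, which you do not impose), the linear criterion can be ``won'' trivially by pushing the planted pair toward the largest entry of $M$ while gaining nothing in signal-to-noise, so the reduction in your first paragraph does not yet yield an exponent improvement even granting the missing lemma. You also assert the framework of Theorem~\ref{thm:mainintro} applies ``verbatim,'' but the multiple hashed copies of a single vector are no longer independent under any LSH-style scheme; the paper spends Section~\ref{sec:lsh} (the $\tau$-partition, the symmetrization $S_\tau\otimes S_\tau^{\top}$, and the second-moment argument) handling exactly this, and your proposal would need an analogous treatment.
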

Theorem~\ref{thm:hashingintrogeneral} shows that hashing can improve the algorithm based on almost any `subset of matrix multiplication' tensor. These are the same tensors used by Karppa and Kaski~\cite{karppa2019probabilistic} to solve Boolean matrix multiplication (they showed that bounds on their ranks give bounds on the `probabilistic rank' of matrix multiplication) and include the tensors other than $\langle 2,2,2 \rangle$ that we discussed above in Section~\ref{subsec:tensorsintro}. 

Intuitively, we require the matrix $[(\eff_{i,j}(T))^2]_{i,j}$ to have full rank in Theorem~\ref{thm:hashingintrogeneral} so that there are regions of buckets with higher efficacy that we could hope to hash the correlated pair to. For instance, if $T$ is a matrix multiplication tensor, then hashing cannot move the correlated pair to a better bucket since all buckets have the same efficacy, and indeed, the efficacy matrix has rank 1 since all its entries are equal.

\subsection{Comparison with Prior Work on Tensor and Nearest Neighbor Search Algorithms} \label{sec:priorwork}

\paragraph{Other Variants on Matrix Multiplication.} As mentioned at the beginning of Section~\ref{sec:intro}, prior work has solved Boolean matrix multiplication using tensors whose support is (a subset of) the support of matrix multiplication~\cite{cohn2013fast,karppa2019probabilistic,harris2021improved}. To our knowledge, we are the first to use tensors whose support may not be a subset of the support of matrix multiplication, and the first to use variants of matrix multiplication on a problem that is not (reducible to) Boolean matrix multiplication. This access to a larger class of tensors is what allows us to design a faster practical algorithm for the light bulb problem than the analogous fastest practical algorithm for Boolean matrix multiplication; our tensor $T_{2112}$ cannot be applied in the other settings (for fixed $\eps>0$). We hope our techniques could be used to apply these tensors to other problems in the future, particularly problems which are currently solved with exact matrix multiplication but which may only need approximate matrix multiplication.

\paragraph{Improving the asymptotic exponent}
We are optimistic that Theorem~\ref{thm:mainintro} can be used to improve the best known exponent for the light bulb problem by using larger tensors. Finding improvements based on larger tensors has historically been very difficult compared to finding improvements based on small tensors; for instance, it took almost 10 years after Strassen's algorithm based on a $2 \times 2$ identity before Pan~\cite{pan1978strassen} gave an improved exponent based on a larger tensor. Moreover, decades of work have gone into designing matrix multiplication algorithms for larger $q$ which we need to catch up to for the light bulb problem. Many of these techniques can be directly repurposed to the light bulb problem (for instance, it is not hard to prove a version of the asymptotic sum inequality~\cite{schonhage1981partial} in this setting), but the centerpiece of fast matrix multiplication algorithms, the Coppersmith-Winograd tensor, seems particularly designed for exact matrix multiplication, and it is not clear how to improve it for our light bulb setting. (More generally, it is a major open challenge to understand the effectiveness of the Coppersmith-Winograd tensor or find any useful variants on it~\cite{homs2022bounds,blaser2016degeneration,conner2019tensors,conner2022bad}.) Nonetheless, our computations suggest that this trend continues: our approach gives better bounds on $\frac32 \omega_\ell$ than prior approaches do for $\omega_B$ when restricted to certain small classes of tensors, such as tensors on the variable set of $\langle 3,3,3 \rangle$, or tensors on the variable set of $\langle 2,2,2 \rangle$ of rank at most $4$ (but none of these beats the bound of $T_{2112}$). We are optimistic improvements are possible for larger $q$ as well.

\paragraph{Exponent Comparison.}
Combining our result with~\cite{karppa2019probabilistic,harris2021improved} shows that, when restricted to rank bounds on small tensors over the same variable set as $\langle 2,2,2 \rangle$, the best known upper bounds have $\frac32 \omega_\ell < \omega_B < \omega.$
By comparison, the asymptotically best known upper bounds have $\frac32 \omega_\ell = \omega_B = \omega.$
It would be exciting to determine the relationship between these exponents in the asymptotic setting, perhaps using fine-grained reduction techniques. Indeed, although we know $\omega_B \leq \omega$, it's not clear in general what the relationship between $\omega_\ell$ and $\omega_B$ should be. Neither problem is known to be reducible to the other. Moreover, even our `efficacy' approach for the light bulb problem is incomparable to the `support rank'~\cite{cohn2013fast} and `probabilistic rank'~\cite{karppa2019probabilistic} approaches: our approach applies to a wider class of tensors, but the bound in Theorem~\ref{thm:mainintro} becomes worse when $T$ has large or negative coefficients, whereas `support rank' and `probabilistic rank' aren't impacted by what the coefficients are.
Nonetheless, from the small tensor regime, it appears plausible that $\omega_\ell < \omega_B$; it may be worth investigating whether other problems which are known to be reducible to matrix multiplication can actually be reduced to the light bulb problem instead!

Another advantage of our new algorithm is that it is not necessarily restricted to give exponents which are $\geq 4/3$. Recall that using the known bound $\omega_\ell \leq \frac23 \omega$, even if $\omega = 2$, one could only prove $\omega_\ell \leq 4/3$. Achieving an exponent less than $4/3$ requires another approach, and our Theorem~\ref{thm:mainintro} appears promising since it doesn't seem to have any such restrictions.

\paragraph{Tensors with Undesirable Terms.}  One motivation for this work is to use tensors with `undesirable' terms which, in other contexts, make them unusable. Typically one would expend rank to remove those terms, but one could design faster algorithms by allowing them to contribute to the efficacy of the tensor instead.
We've already discussed the case of tensors from border rank upper bounds via our example $T_{2112}$.
Tensors with undesirable terms also arise in the Laser method~\cite{strassenlaser1}, the tool used to design the best known upper bounds on $\omega$. A key step at the end of the Laser method, which was improved but not entirely removed in recent work of Alman and Vassilevska Williams~\cite{alman2021refined}, removes such undesirable terms. Leaving them in to contribute to the efficacy could lead to asymptotically faster algorithms.

\paragraph{Further Generalizations of the Light Bulb Problem.} In Section~\ref{sec:lsh} below, we show that our algorithm can also solve a generalization of the light bulb problem where each group of coordinates of the `correlated pair' are sampled from \emph{any} non-uniform joint distribution. Other generalizations have also been previously considered~\cite{valiant2012finding,karppa2018faster}, including a variant with many correlated pairs to find, and an `outlier correlation detection' variant where we are promised that the correlated pair has correlation $\rho$, and all other pairs have correlation at most $\tau$, for parameters $0 < \tau < \rho < 1$. Our approach can also solve these generalizations, by using the same techniques from prior work (which essentially amplify the differences in correlations by taking large Kronecker powers of the input vectors), since we focus on finding the correlated vectors after this amplification step. The details, which are essentially the same as in the past work, are omitted here. As discussed earlier, the best known algorithms for generalizations to other learning problems such as learning sparse parities or Juntas with noise also come from reductions to the light bulb problem~\cite{valiant2012finding}.

\paragraph{Other Closest Pair Problems.} The light bulb problem is an average case version of the bichromatic $(1+\eps)$-approximate closest pair problem, where one is given as input two sets $X,Y$ of $n$ points from a metric space, and one wants to find $x^* \in X$ and $y^* \in Y$ satisfying $\mathrm{dist}(x^*, y^*) \leq (1+\eps) \cdot \min_{x \in X, y \in Y} \mathrm{dist}(x,y)$. Similar to the previous state of the art for the light bulb problem, for many popular metric spaces, there are two known approaches for solving this problem: one based on matrix multiplication which is faster when $\eps>0$ is small~\cite{alman2015probabilistic,alman2016polynomial,alman2020faster}, and one based on locality-sensitive hashing which is faster when $\eps$ is larger~\cite{andoni2015optimal,andoni2017optimal}; see also \cite{andoni2018approximate}. To our knowledge, our hashing-based algorithm is the first to successfully combine matrix multiplication and hashing methods for any such problem. It is not hard to see that more straightforward ways to combine the two, such as hashing into smaller buckets and then using matrix multiplication within each bucket, cannot be faster than just using one of the two techniques on its own; we get around this by carefully choosing a hash function which correlates well with our chosen tensor. It would be exciting to apply a similar technique to other nearest neighbor search problems.

\subsection{Algorithm Overview}

Although Theorem~\ref{thm:mainintro} has a simple form (perhaps reminiscent of the bound $\omega \leq \log(\rank(\langle q,q,q \rangle)) / \log(q)$ which follows from the simple recursive argument), the algorithm itself involves a number of subtle steps in the case when $T$ is not `symmetric enough', and the proof of correctness is ultimately quite involved. At a high level, the elaborate probabilistic analyses which arise in prior works on the light bulb problem~\cite{valiant2012finding,karppa2018faster}, wherein one needs to prove tail bounds on sums of correlated events, return in full force when combined with errors which arise from using the tensor $T$ instead of matrix multiplication. We end up applying a simple variant on the Laser method to the tensor $T$ to `regularize' it without changing $\eff(T)$ too much, to help with the analysis.\footnote{Alman~\cite{alman2018illuminating} recently simplified some steps in prior algorithms for the light bulb problem using the polynomial method, but using our tensors in Alman's approach doesn't seem to work since Alman creates matrices with large entries to multiply.}

We focus here on describing the algorithm for Theorem~\ref{thm:mainintro} in the case when the tensor $T$ is sufficiently `symmetric' (as is the case for the three tensors described in Figure~\ref{fig:tensors}). Afterwards we will briefly discuss how we deal with asymmetric tensors, and how we extend our result using hashing to Theorems~\ref{thm:intro2112} and \ref{thm:hashingintrogeneral}.

\begin{algorithm}[h!]\caption{Light bulb algorithm for Theorem~\ref{thm:mainintro}}\label{alg:intro}
\begin{algorithmic}[1]
\Procedure{\textsc{LightBulb}}{$x_1\cdots x_n,y_1\cdots y_n\in \{-1,1\}^d,T$}
\State \Comment{$x_1,\cdots,x_n,y_1,\cdots,y_n$ are input vectors; $T$ is a tensor with $q_i=q_j=q_k=q$}
\State Let $N$ be such that $n^{\omega_{\ell}} = \rank(T)^{N}$. \Comment{$\omega_\ell = \frac{\log(\rank(T))}{\log(\eff(T))}$}
\State Calculate $\eff_{i,j}(T^{\otimes N})$ for all $i,j\in [q]^N$. \Comment{Can be done in negligible $\tilde{O}(q^{2N})$ time}
\State Let $g$ be such that $g^2\cdot |\{i,j\in[q]^N: \eff_{i,j}(T^{\otimes N})\geq g^2\}|$ is maximized, and set $t\leftarrow q^Ng/n$.\label{line:size_is_g}
\State $X_1,\cdots,X_{q^N},Y_1,\cdots,Y_{q^N} \leftarrow \emptyset$.
\For{$i\in [n]$}
\State Uniformly independently at random pick $i_1,\cdots,i_t$ and $j_1,\cdots,j_t$ from $[q^N]$. \label{line:step10}
\State Add $i$ to all the sets $X_{i_1},\cdots,X_{i_t},Y_{j_1},\cdots,Y_{j_t}$.
\EndFor
\State For all $i\in [q]^N$, let $a_i := \sum_{j\in X_i} x_j$ and $b_i:=\sum_{j\in Y_i} y_j$.\label{line:construct_a_b}
\State Let $A = [a_1^{\top},\cdots,a_{q^N}^{\top}]$ be $q^N\times d$ matrix of all $a_i^\top$ vectors.\label{line:construct_A}
\State Let $B = [b_1^{\top},\cdots,b_{q^N}^{\top}]$ be the $q^N \times d$ matrix of all $b_i^{\top}$ vectors. \Comment{Assume $d = q_k^N$.} \label{line:construct_B}
\State Random multiply each row of $A$ and $B$ by $-1$ or $1$.
\State Recursively apply $T$ to `multiply' $A$ and $B^{\top}$ and get their `product' $C$. \Comment{Take time $\tilde{O}(\rank(T)^N)$.}
\State Do the multiplication $100\log n$ times to get $C_1,\cdots,C_{100\log n}$. \Comment{Each time redo from beginning using fresh inputs.}
\State Find $(i,j)$ such that, there are at least $20\log n$ different $k\in [100\log n]$ that, $C_k[i,j] \geq 10\eff^{2}_{i,j}(T^{\otimes N}) \geq 10g^2$.
\EndProcedure
\end{algorithmic}
\end{algorithm}

The main algorithm is given in Algorithm~\ref{alg:intro}. 
There are two key results we need to prove its correctness. First, because of how $\eff_{i,j}$ is defined, if $|X_i| \cdot |Y_j| \leq \eff_{i,j}^2(T^{\otimes N})$ but $C[i,j] \geq \eff^2_{i,j}(T^{\otimes N})$, this means the correlated pair is likely to be in $X_i$ and $Y_j$. Roughly, we prove the fact that $\eff_{i,j}(T^{\otimes N})$ is so large \emph{means that} random noise cannot explain $C_k[i,j]$ being so large for too many $k$. Second, there is a decent probability that the bucketing used by the algorithm will result in copies of the correlated pair being put into $X_i$ and $Y_j$ for which $\eff_{i,j}(T^{\otimes N})$ is large enough.

This second result requires some work since whether or not the planted pair has been put into the pair of groups $(X_i, Y_j)$ is \emph{not} independent of whether it has been put into other pairs of groups. Moreover, it becomes more complicated in the case when the set $\{(i,j)\in[q]^N: \eff_{i,j}\geq g^2\}$ is `skewed' and mostly consists of pairs in a small number of rows or columns. In this case, we modify our algorithm by alternatingly applying either $T$ or its (appropriately defined) transpose. After this transformation, the large efficacies are `balanced enough' that a second moment method can be used to imply our second property.

Finally, as discussed in Section~\ref{sec:general} above, the idea behind Theorems~\ref{thm:intro2112} and \ref{thm:hashingintrogeneral} is to modify Line~\ref{line:step10} of the algorithm to sample the indices $i_1, \ldots, i_t$ according to a locality-sensitive hash function. This further complicates the analysis: not only are the pairs $(X_i, Y_j)$ which the planted pair has been put into not independent of each other, but even the buckets $X_{i_1}, \ldots, X_{i_t}$ which a single one of the planted vectors has been put into are not independent. We address this by independently perturbing $t$ copies of each input vector before applying a locality-sensitive hash function to them so that the different buckets are `sufficiently independent'. The key behind applying our approach to tensors $T$ other than $T_{2112}$ in Theorem~\ref{thm:hashingintrogeneral} is to do this perturbation in a biased way which correlates with the efficacy matrix of $T$. Fortunately, although the analysis requires these probabilistic analyses and explicitly analyzing the Kronecker power $T^{\otimes N}$, the algorithm itself is simple and only applies $T$ in the usual recursive way to pairs of matrices.

\subsection{Outline}
The remainder of our paper is organized as follows. After the preliminaries in Section~\ref{sec:prelims}, we prove Theorem~\ref{thm:mainintro} in Section~\ref{sec:firstalg}, then we prove Theorem~\ref{thm:intro2112} in Section~\ref{sec:lsh}.
In Section~\ref{sec:newtensor} we define and give the rank expressions for the tensors in Figure~\ref{fig:tensors}, including introducing our new tensor $T_{2112}$. In Section~\ref{sec:hashingworks} we prove Theorem~\ref{thm:hashingintrogeneral} that hashing can be used to improve the algorithm from most tensors. Finally, in Appendix~\ref{sec:agg} we discuss techniques from prior work for vector aggregation.

\section{Preliminaries}\label{sec:prelims}

\paragraph{Notation}

For positive integer $q$, write $[q] := \{1,2,3,\ldots,q\}$.

For an event $F$, write $[F]_{\mathbf{1}}$ to be $1$ if $F$ happens, and $0$ if $F$ does not happen.

For vectors $v \in \R^d$, and $\ell \in [d]$, we write $v[\ell]$ to denote entry $\ell$ of $v$. Similarly, for matrices $M \in \R^{d_1 \times d_2}$, and $\ell_1 \in [d_1], \ell_2 \in [d_2]$, we write $M[\ell_1, \ell_2]$ to denote the corresponding entry of $M$. 

\paragraph{Multinomial Coefficients}

If $a_1, a_2, \ldots, a_k \in [0,1]$ and $N \in \N$ are such that $\sum_{i=1}^k a_i = 1$, and $a_i \cdot N$ is an integer for all $i$, then we write the multinomial coefficient: $$\binom{N}{\{ a_i \cdot N \}_{i \in [k]}} := \prod_{i=1}^k \binom{N \cdot (1 - \sum_{j=1}^{i-1} a_j)}{a_i \cdot N}.$$ Standard bounds show that as $N \to \infty$, we have $$\binom{N}{\{ a_i \cdot N \}_{i \in [k]}} = \left( \prod_{i=1}^k a_i^{-a_i} \right)^{N - o(N)}.$$

\paragraph{Chebyshev's inequality}

Chebyshev's inequality says that if $U \in \R$ is a random variable with finite mean and finite non-zero variance, then for any real $k>0$ we have $$\Pr\left[|U - \E[U]| \geq k \cdot \sqrt{\var[U]}\right] \leq \frac{1}{k^2}.$$

\paragraph{Second Moment Method}

The second moment method says that if $U \in \R$ is a random variable such that $U$ is always nonnegative, and $\var[U]$ is finite, then $$Pr[U > 0] \geq \frac{(\E[U])^2}{\E[U^2]}.$$

\paragraph{Tensors}

For positive integer $q, q_k$, let $\X = \{\X_{i,k}\}_{i \in [q], k \in [q_k]}$ , $\Y = \{\Y_{j,k}\}_{j \in [q], k \in [q_k]}$, and $\Z = \{\Z_{i,j}\}_{i, j \in [q]}$. Most of the tensors in this paper will be over these sets, and we call a tensor over these sets a $\langle q,q,q_k\rangle$-sized tensor.

A tensor $T$ over $\X,\Y,\Z$ is a trilinear form in $\R^{|\X| \times |\Y| \times |\Z|}$. For $i,i',j,j' \in [q]$ and $k,k' \in [q_k]$ we write $T(\X_{i,k} \Y_{j,k'} \Z_{i',j'})$ for the coefficient of the term $\X_{i,k} \Y_{j,k'} \Z_{i',j'}$ in $T$. In other words, we can write: $$T = \sum_{i,i',j,j' \in [q], k,k' \in [q_k]} T(\X_{i,k} \Y_{j,k'} \Z_{i',j'}) \cdot \X_{i,k} \Y_{j,k'} \Z_{i',j'}.$$
We say $\X$ are the $x$-variables of $T$, $\Y$ are the $y$-variables of $T$, and $\Z$ are the $z$-variables of $T$.

The matrix multiplication tensor $\langle q,q,q_k \rangle$ is a tensor over $\X,\Y,\Z$ given by $$\langle q,q,q_k \rangle := \sum_{i,j \in [q], k \in [q_k]} \X_{i,k} \Y_{j,k} \Z_{i,j}.$$

\paragraph{Tensor Rank}

A tensor $T$ over $\X,\Y,\Z$ has rank $1$ if it can be written in the form $$T = \left( \sum_{i \in [q], k \in [q_k]} a_{i,k} \X_{i,k} \right)\left( \sum_{j \in [q], k \in [q_k]} b_{j,k} \Y_{j,k} \right)\left( \sum_{i,j \in [q]} c_{i,j} \Z_{i,j} \right)$$ for coefficients $a_{i,k}, b_{j,k}, c_{i,j} \in \R$. More generally, $\rank(T)$ is the minimum number of rank $1$ tensors whose sum is $T$.

\paragraph{Kronecker Product}

If $\X,\Y,\Z,\X',\Y',\Z'$ are sets of variables, $T$ is a tensor over $\X,\Y,\Z$, and $T'$ is a tensor over $\X',\Y',\Z'$, then the Kronecker product $T \otimes T'$ is a tensor over $\X \times \X', \Y \times \Y', \Z \times \Z'$ given by, for $x \in \X, x' \in \X', y \in \Y, y' \in \Y', z \in \Z, z' \in \Z'$, $$T \otimes T'((x,x')(y,y')(z,z')) = T(xyz) \cdot T'(x'y'z').$$

Notice in particular that for positive integers $q,q',q_k,q_k'$ we have $\langle q,q,q_k \rangle \otimes \langle q',q',q_k' \rangle = \langle qq', qq', q_k q_k' \rangle$. (Here, we say two tensors are equal if they are the same up to renaming variables.) We can view $\langle qq', qq', q_k q_k' \rangle$ as a tensor whose $\X$-variables are either $\{\X_{i,k}\}_{i \in [q \cdot q'], k \in [q_k \cdot q_k']}$ or $\{\X_{(i,i'),(k,k')}\}_{i \in [q], i' \in [q'], k \in [q_k], k' \in [q_k']}$. These are the same up to a natural bijection, and we will use both notations interchangeably.

For a tensor $T$ over $\X,\Y,\Z$ and positive integer $k$, we define the Kronecker power $T^{\otimes k}$ to be the Kronecker product of $k$ copies of $T$. It is a tensor over $\X^k,\Y^k,\Z^k$, and its coefficients are all the products of $k$ coefficients of $T$.

\paragraph{Applying a Tensor to Matrices}

If $T$ is a tensor over $\X,\Y,\Z$, and $A,B \in \R^{q \times q_k}$ are matrices, then the result of applying $T$ to $A$ and $B$ is a matrix $C \in \R^{q \times q}$ given by $$C[i,j] = \sum_{i',j' \in [q], k,k' \in [q_k]} T(\X_{i',k} \Y_{j',k'} \Z_{i,j}) \cdot A[i',k] \cdot B[j',k'].$$

The usual recursive algorithm (similar to Strassen's algorithm) shows that, for positive integers $N$, the tensor $T^{\otimes N}$ can be applied using only $\tilde{O}(\rank(T)^{N})$ field operations, or the improved bound $O(\rank(T)^{N})$ when $\rank(T) > q^2$.

\paragraph{Tensor Reflection} For a tensor $T$ over $\X,\Y,\Z$, its reflection $T^{\top}$ is another tensor over $\X,\Y,\Z$ given by, for $i,i',j,j' \in [q]$ and $k,k' \in [q_k]$, 
$$T^{\top}(\X_{i,k} \Y_{j,k'} \Z_{i',j'}) = T(\X_{j,k'} \Y_{i,k} \Z_{i',j'}).$$
This swaps the roles of the $\X$ and $\Y$ variables.

\paragraph{Kronecker Products of Matrices and Vectors} If $A \in \R^{n_a \times m_a}$ and $B \in \R^{n_b \times m_b}$ are matrices, one can analogously define their Kronecker product $A \otimes B \in \R^{n_a n_b \times m_a m_b}$ by, for $i \in [n_a], i' \in [n_b], j \in [m_a], j' \in [m_b]$, $A\otimes B[(i,i'), (j,j')] = A[i,j] \cdot B[i',j']$. Similarly, for vectors $u \in \R^{n_a}, v \in \R^{n_b}$, one can define $u \otimes v  \in \R^{n_a n_b}$ by $u \otimes v[(i,i')] = u[i] \cdot v[i']$.

Suppose $P$ is a property of vectors which is preserved under Kronecker product, i.e., if $u,v$ have the property, then so does $u \otimes v$. One example is the property of whether $\|v\|_2 \geq 1$. For a tensor $T$ over $\X,\Y,\Z$, let $S_P(T) \in \R^{q \times q}$ denote the matrix such that $S_P(T)[i,j] = 1$ if the vector $(T(\X_{i',k} \Y_{j',k'} \Z_{i,j}))_{i',j' \in [q], k,k' \in [q_k]}$ has property $P$, and $S_P(T)[i,j] = 0$ otherwise. Then, we can see that $S_P(T^{\otimes N}) = S_P(T)^{\otimes N}$. This will be particularly helpful to us in the case when $P$ is the property that $\eff_{i,j}(T) \geq f$ for some threshold $f$. (See Definition~\ref{def:eff} below for the the definition of $\eff$.)

\section{Algorithm for the light bulb problem} \label{sec:firstalg}

\begin{definition}[Efficacy] \label{def:eff}
Given any $\langle q,q,q_k \rangle$-sized tensor $T$, for $i,j \in [q]$, we define
the \emph{$(i,j)$-efficacy of $T$} as:
$$\eff_{i,j}(T) := \frac{ \sum_{k \in [q_k]} T(\X_{i,k} \Y_{j,k} \Z_{i,j}) }{ \sqrt{\sum_{i', j' \in [q], k,k' \in [q_k]} T(\X_{i',k} \Y_{j',k'} \Z_{i,j})^2} }.
$$

We further define the \emph{efficacy of $T$} as: $$\eff(T) := \sqrt{\sum_{i \in [q]} \sum_{j \in [q]} \left( \eff_{i,j}(T) \right)^2}.$$
\end{definition}

Note that if $T, T'$ are two tensors,
for $(i,i'), (j,j') \in [q]^2$, 
we have $\eff_{(i,i'), (j,j')}(T \otimes T') = \eff_{i,j}(T) \cdot \eff_{i',j'}(T').$

We now begin giving our algorithm for the light bulb problem. Our goal is to analyze Algorithm~\ref{alg:intro} in order to prove Theorem~\ref{thm:mainintro}. In particular, we assume throughout this section that $T$ is such that the aggregation step (lines~\ref{line:construct_a_b}, \ref{line:construct_A}, , \ref{line:construct_B}) take negligible time compared to the rest of the algorithm; in Appendix~\ref{sec:agg} below, we show how to modify $T$, if necessary, so that this is the case.

\begin{theorem} \label{thm:main1}
Suppose $T$ is a $\langle q, q, q_k\rangle$-sized tensor. For any $f \geq 1$, and any set $S_f \subseteq [q]^2$ such that $\eff_{i,j}(T) \geq f$ for all $(i,j) \in S_f$, we have $\omega_\ell \leq \log(\rank(T) \cdot q^2 / |S_f|)/\log(f \cdot q)$.
\end{theorem}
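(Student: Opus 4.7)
The plan is to analyze Algorithm~\ref{alg:intro} when applied to the tensor $T$, with the Kronecker power exponent set to $N = \lceil \log n/\log(fq) \rceil$ so that $(fq)^N \approx n$, and the efficacy threshold parameter set to $g = f^N$; this yields $t = q^N g/n \approx 1$, so each input vector is hashed into a single random bucket out of $q^N$ total buckets. The key structural observation is the multiplicativity of efficacy under Kronecker products: the Cartesian power $S_f^N \subseteq [q]^{2N}$ (viewed via the natural bijection with indices in $[q]^N \times [q]^N$) provides at least $|S_f|^N$ ``good'' bucket pairs $(\vec I, \vec J)$ with $\eff_{\vec I, \vec J}(T^{\otimes N}) \geq f^N$.

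A single run with $t = 1$ succeeds (places the planted pair in some good bucket pair) with probability at least $|S_f|^N/q^{2N}$. To boost this to $\Omega(1)$, I would wrap the inner procedure in an outer loop of $K = \lceil n^{\log(q^2/|S_f|)/\log(fq)} \rceil = \lceil q^{2N}/|S_f|^N \rceil$ independent repetitions, each with fresh random hashing; since the success events are independent across repetitions, a Chernoff bound on $K \cdot |S_f|^N/q^{2N} \geq 1$ gives the desired overall success probability. Within any successful run, $C[\vec I, \vec J]$ has an expected signal of $\rho$ times the numerator of $\eff_{\vec I, \vec J}(T^{\otimes N})$, while its noise standard deviation scales as the bucket size ($\approx f^N$) times the corresponding denominator. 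The definition of efficacy is precisely calibrated so that the signal-to-noise ratio is $\Omega(\rho)$ at good pairs, and the $100 \log n$ inner iterations combined with the random sign flips on rows amplify this into a reliable count-based detection via Chebyshev/Chernoff concentration.

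The per-run cost is dominated by applying $T^{\otimes N}$ to the aggregated matrices, costing $\tilde O(\rank(T)^N) = \tilde O(n^{\log \rank(T)/\log(fq)})$ (aggregation is negligible by assumption). Summing over the $K$ outer repetitions yields total time $\tilde O(K \cdot \rank(T)^N) = \tilde O(n^{(\log \rank(T) + \log(q^2/|S_f|))/\log(fq)}) = \tilde O(n^{\log(\rank(T) q^2/|S_f|)/\log(fq)})$, matching the claimed exponent. The main technical obstacle lies in the per-run variance analysis: one must carefully bound the covariance of noise contributions from pairs of input vectors that share a common member (which creates correlated terms inside $C[\vec I, \vec J]$) and rule out false positives at non-planted bucket pairs via a union bound over the $|S_f|^N$ candidate good pairs. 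A further subtlety appears when $S_f$ is concentrated in a few rows or columns of the efficacy matrix, in which case a symmetrization step (alternating $T$ and its reflection $T^\top$ across Kronecker factors) may be required to balance the second-moment estimate of the success event across outer repetitions.
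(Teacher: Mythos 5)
Your proposal follows essentially the same route as the paper's proof: hash/aggregate into roughly $q^N$ buckets, apply $T^{\otimes N}$, use multiplicativity of $\eff$ so that $S_f^{\otimes N}$ supplies $|S_f|^N$ good bucket pairs each with $\eff \geq f^N$, observe that a single run succeeds with probability $|S_f|^N/q^{2N}$, and wrap everything in $K \approx q^{2N}/|S_f|^N$ independent repetitions, giving total time $\tilde O(K\cdot \rank(T)^N)$ and hence the claimed exponent; the random sign flips making the noise terms pairwise independent and the Chebyshev-based mean/variance analysis are exactly the paper's steps. (Your closing worry about skewed $S_f$ and symmetrization is not needed here: since the outer repetitions are fully independent, no second-moment argument is required for this theorem; that machinery only enters in the paper's later improvement where copies of each vector are placed in several buckets within one run.)

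The one step that, as written, does not go through is the detection calibration. With your choice of bucket size $g = f^N$ (i.e., $t = q^N g/n \approx 1$), the planted entry has mean only $\rho\cdot\sqrt{\var}$, i.e., signal-to-noise ratio $\Omega(\rho) < 1$. Chebyshev provides only deviation bounds, not anti-concentration, so a threshold or count-based test cannot distinguish an entry of mean $\rho\sigma$ from one of mean $0$ with any guaranteed constant advantage; moreover, repeating with fresh bucketing relocates the planted pair to a different bucket pair each time, so you cannot average or count exceedances at a fixed entry to amplify a sub-unit SNR. The paper avoids this by making the buckets a factor $\Theta(\rho)$ smaller: it sets $m = (20n/\rho)^{1/(1+\log_q f)}$ and $g = n/m$, so that at a good pair the ratio of mean to standard deviation is at least $20$; then a single Chebyshev threshold separates the planted and null cases with probability $0.99$ per run, and $O(\log n)$ repetitions drive the error polynomially low. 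Since $\rho$ is a constant, this rescaling changes the number of buckets and hence the runtime only by constant factors and leaves your exponent computation intact, so the fix is a one-line change to your parameter setting rather than a new idea—but as stated, the ``amplify an $\Omega(\rho)$ SNR by $100\log n$ iterations'' step is the gap.
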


\begin{proof}
Suppose we are given as input $x_1, \ldots, x_n,$ $y_1, \ldots, y_n \in \{-1,1\}^d$ which are all generated independently and uniformly at random except for an unknown $(i^*, j^*) \in [n]^2$ with $\langle x_{i^*}, y_{j^*} \rangle \geq \rho \cdot d$. 
Permute the inputs at random so that $(i^*, j^*)$ is a uniformly random pair in $[n]^2$. 

We can solve the light bulb problem using $O(\log n)$ calls of its decision version -- we randomly take half $x$ and half $y$, and for the decision problem, we need to distinguish between two cases 1). all inputs are uniformly at random in $\{-1,1\}^d$, and 2). there exists one correlated $(i^*,j^*)$ pair.  From now on, we will only consider the decision version.

Let $m = (\frac{20 n}{\rho})^{\frac{1}{1 + \log_q(f)}}$ and let $g = n/m$. Partition $x_1, \ldots, x_n$ into $m$ groups $X_1, \ldots, X_m$ of size $g$ each, and partition $y_1, \ldots, y_n$ into $m$ groups $Y_1, \ldots, Y_m$ of size $g$ each. For each $i \in [m]$, create vectors $a_i, b_i \in \R^d$ given by $a_i = \sum_{u \in X_i} u$ and $b_i = \sum_{v \in Y_i} v$. (These vectors aggregate all the data points which were put into the same group; we will see soon that if groups $i$ and $j$ contain the correlated pair, then $a_i$ and $b_j$ are still somewhat correlated.)
Let $s_a,s_b\in \{-1,1\}^m$ be two vectors whose entries are independently uniformly sampled from $\{-1,1\}$.
Finally, we form the matrices $A, B \in \R^{m \times d}$ whose rows are $s_a[1] \cdot a_1, \ldots, s_a[m] \cdot a_m$ and $s_b[1] \cdot b_1, \ldots, s_b[m] \cdot b_m$, respectively.

For simplicity, let us assume that $d = m^{\log(q_k)/\log(q)}$ so that $T^{\otimes c}$ is a $\langle m,m,d\rangle$-sized tensor that can be used on $A$ and $B$, where $c = \log(m)/\log(q)$.
As discussed in the introduction, if one would like to remove this requirement, then using the `compressed matrices' method introduced in~\cite[Section 4.2]{karppa2018faster}, one can `expand' lower-dimensional vectors, and hence relax this assumption to only require $d \geq \Omega(\log n)$ while only decreasing $\rho$ by a negligible factor\footnote{\label{footnote:aggregation}In fact, the result of the compressed matrices method gives that if $x_i, y_j$ are not the correlated pair, the the entries of the entry-wise product $x_i \circ y_j$ are only pairwise-independent of each other, and not fully independent. (This is because they are products of different entries of the original vectors.) As we will see below, this pairwise-independence suffices for our algorithm.

The compressed matrices technique particularly speeds up the time to compute the aggregated vectors $a_i, b_j$ so that it is negligible compared to the remaining running time of the algorithm, and one can confirm that it remains negligible here. We refer the reader to~\cite[Section 4.4]{karppa2018faster} for more details.}; see Appendix~\ref{sec:agg} below for more details.

We now apply the usual recursive algorithm using the tensor $T$ to the matrices $A$ and $B^T$, resulting in the matrix $C \in \R^{m \times m}$. The running time is $\tilde{O}(m^{\log(\rank(T))/\log(q)}) = \tilde{O}(n^{\log(\rank(T))/\log(qf)})$. The output $C$ is the result of applying the tensor $T^{\otimes c}$ to the matrices $X = A$ and $Y = B^T$, so that each entry $C[i,j]$ is the sum:
\begin{align}\label{eq:cij}
C[i,j] = \sum_{i_a, j_b\in [q^c], k_a, k_b \in [q_k^c]} T^{\otimes c}(X_{i_a,k_a} Y_{j_b,k_b} Z_{i,j}) \cdot A[i_a, k_a] \cdot B[j_b, k_b].
\end{align}

Consider the product of two terms $A[i,k_a] \cdot B[j,k_b]$. These are distributed as follows:
\begin{itemize}
    \item If there's a planted pair $(x_{i^*},y_{j^*})$ and $k_a = k_b$, $x_{i^*} \in X_i$, and $y_{j^*} \in Y_j$, then this is the sum of $g^2$ random $\{-1,1\}$ variables which are pairwise-independent from each other. They all have mean $0$ except one of them has mean $\rho$, so the entire variable $A[i,k_a] \cdot B[j,k_b]$ has mean $\rho$ and variance $g^2$.
    \item Otherwise, it is the sum of $g^2$ uniformly random pairwise-independent $\{-1,1\}$ variables, which has mean $0$ and variance $g^2$.
\end{itemize}

Let's compute the variance of $C[i,j]$. We showed earlier that each term $A[i_a, k_a] \cdot B[j_b, k_b]$ has variance $g^2$, so $T^{\otimes c}(X_{i_a,k_a} Y_{j_b,k_b} Z_{i,j}) \cdot A[i_a, k_a] \cdot B[j_b, k_b]$ has variance $T^{\otimes c}(X_{i_a,k_a} Y_{j_b,k_b} Z_{i,j})^2 \cdot g^2$. 
Since we use $s_a,s_b$ entry-wise independently sampled from $\{-1,1\}$, the terms in the sum (\ref{eq:cij}) are pairwise-independent of each other. It follows that regardless of whether there's a planted pair, every $C[i,j]$ has variance:

$$\var[C[i,j]] = g^2 \cdot \sum_{i_a, j_b \in [q^c], k_a, k_b \in [q_k^c]} T^{\otimes c}(X_{i_a,k_a} Y_{j_b,k_b} Z_{i,j})^2.$$

Next let's compute the mean of $C[i,j]$. If there's no planted pair, every $C[i,j]$ has mean $0$. If the planted pair $(x_{i^*},y_{j^*})$ exists, let's assume $x_{i^*}\in X_i$ and $y_{j^*} \in Y_j$ and only consider the mean of $C[i,j]$.
Recall that $A[i_a, k_a] \cdot B[j_b, k_b]$ has mean nonzero only if $k_a = k_b$, $i_a = i$ and $j_b = j$. It follows by linearity of expectation that
\begin{align*}
\E[C[i,j]] = & ~  \sum_{k \in [q_k^c]} T^{\otimes c}(X_{i,k} Y_{j,k} Z_{i,j}) \cdot \E[A[i, k] \cdot B[j, k]]\\
= & ~ \rho \cdot s_a[i]\cdot s_b[j]\cdot \sum_{k \in [q_k^c]} T^{\otimes c}(X_{i,k} Y_{j,k} Z_{i,j}).
\end{align*}

To summarize: when there's no planted pair, every $C[i,j]$ has mean $0$ and variance
$$g^2 \cdot \sum_{i_a, j_b \in [q^c], k_a, k_b \in [q_k^c]} T^{\otimes c}(X_{i_a,k_a} Y_{j_b,k_b} Z_{i,j})^2.$$ 
When the planted pair exists, and $x_{i^*}$ is in $X_i$, $y_{j^*}$ is in $Y_j$, and the entry $(i,j)$ is `good' (We say $(i,j)$ is `good' if $\eff_{i,j}(T^{\otimes c}) \geq f^c$), the ratio of its mean and standard deviation is at least:

\begin{align*}
& ~ \left| \frac{ \rho \cdot s_a[i]\cdot s_b[j]\cdot \sum_{k \in [q_k^c]} T^{\otimes c}(X_{i,k} Y_{j,k} Z_{i,j}) }{ g^2 \cdot \sum_{i_a, j_b \in [q^c], k_a, k_b \in [q_k^c]} T^{\otimes c}(X_{i_a,k_a} Y_{j_b,k_b} Z_{i,j})^2 }  \right| \\
= & ~  \frac{\rho}{g} \cdot \eff_{i,j}(T^{\otimes c}) \\
\geq & ~  \frac{\rho}{g} \cdot f^c \\
= & ~ \frac{\rho \cdot m}{n} \cdot f^c \\
\geq & ~ \frac{\rho \cdot m}{n} \cdot f^{\log(m)/\log(k)} \\ 
= & ~ \frac{\rho}{n} \cdot m^{1 + \log(f)/\log(k)}\\ 
= & ~ \frac{\rho}{n} \cdot \frac{20 \cdot n}{\rho} \\ 
= & ~  20.
\end{align*}

Therefore, follows by Chebyshev's inequality, when there's no planted pair, $C[i,j] \leq  10 \var[C[i,j]]$ with probability $\geq 0.99$ for all $(i,j)\in [m]^2$, and when there exists a planted pair in a `good' entry $(i^*,j^*)$, then $C[i^*,j^*] \geq 10 \var[C[i^*,j^*]]$ with probability $\geq 0.99$. 
Thus, we can independently repeating $O(\log n)$ times to distinguish the two cases with polynomially-low error. 

Let $|S_f| = q^{\alpha}$. Since $\eff_{(i,i'), (j,j')}(T \otimes T') = \eff_{i,j}(T) \cdot \eff_{i',j'}(T')$, there are at least $|S_f|^c$ pairs of $(i,j)\in [m]^2$ that $\eff_{i,j}(T^{\otimes c}) \geq f^c$. So the planted pair has $|S_f|^c/q^{2c} = q^{-(2-\alpha)c}$ probability going to a `good' entry. We repeat $q^{(2-\alpha)c} \log n$ times to make sure the planted pair goes to a `good' entry at least once with high probability, therefore we can distinguish the planted-pair case and the non planted-pair case.

\textbf{Running time:}
Each run cost $\tilde{O}(n^{\log(\rank(T))/\log(qf)})$ time.
We repeat the whole procedure $O(n^{(2-\alpha) \log(q) / \log(qf)} \cdot \log^2 n)$ times to succeed with high probability. 

The total running time is $\tilde{O}(n^{\log(\rank(T))/\log(qf)} \cdot n^{(2-\alpha) \log(q) / \log(qf)}) = \tilde{O}(n^{\log(\rank(T) q^{2-\alpha})/\log(qf)}) = \tilde{O}(n^{\log(\rank(T) q^{2}/|S_f|)/\log(qf)})$, as desired.
\end{proof}

\subsection{Improvement when $S_f$ is not `skewed'}\label{sec:imp1}

We next show that in the special case when $S_f$ is not too `skewed', we can improve the bound of Theorem~\ref{thm:main1}.

Recall that for $f \geq 1$, we chose a subset $S_f \subseteq [q]^2$ consisting of pairs $(i,j) \in [q]^2$ for which $\eff_{i,j}(T) \geq f$. Let's define the following measurement of how a set $S_f$ is closed to 'skewed'.

\begin{definition}[$V_x(S)$ and $V_y(S)$]\label{def:v_x_v_y}
For any set $S\subseteq  [q]^2$, let's define $V_{x}(S) := \sum_{i \in [q]} |\{ j \in [q] \mid (i,j) \in S \}|^2$, and similarly define $V_{y}(S) := \sum_{j \in [q]} |\{ i \in [q] \mid (i,j) \in S \}|^2$.
\end{definition}

\begin{theorem} \label{thm:main2}
Suppose $T$ is a $\langle q, q, q_k\rangle$-sized tensor. For any $f \geq 1$, and any set $S_f \subseteq [q]^2$ such that $\eff_{i,j}(T) \geq f$ for all $(i,j) \in S_f$, if $V_{x}(S_f), V_{y}(S_f) \leq |S_f|^{1.5}$, then, $\omega_\ell \leq \log(\rank(T)) / \log(f \cdot \sqrt{|S_f|})$.
\end{theorem}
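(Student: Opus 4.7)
The plan is to strengthen the algorithm of Theorem~\ref{thm:main1} by placing each input vector into $t$ buckets (each sampled independently and uniformly from $[m]$, with replacement) rather than just one, and to replace the first-moment amplification of Theorem~\ref{thm:main1} with a second-moment argument. Following the structure of the earlier proof, I fix a Kronecker power $c$, set $m = q^c$, form the aggregated $m \times d$ matrices $A, B$ with $a_i := \sum_{u \in X_i} u$ and $b_j := \sum_{v \in Y_j} v$, apply $T^{\otimes c}$ to get $C$, and test each entry with Chebyshev as before. Writing $S_f^{(c)} := \{((i_1,\ldots,i_c),(j_1,\ldots,j_c)) \in [q^c]^2 : (i_\ell, j_\ell) \in S_f \text{ for all }\ell\}$ for the ``good'' coordinates of $T^{\otimes c}$, the multiplicativity $\eff_{(i,i'),(j,j')}(T \otimes T') = \eff_{i,j}(T) \cdot \eff_{i',j'}(T')$ ensures that $\eff_{i,j}(T^{\otimes c}) \geq f^c$ for $(i,j) \in S_f^{(c)}$ and $|S_f^{(c)}| = |S_f|^c$. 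The signal-to-noise analysis from Theorem~\ref{thm:main1} carries over with the modification that bucket sizes are now $g = nt/m$, so I take $m = \Theta(nt/(\rho f^c))$ in order to preserve $(\rho/g)\cdot \eff_{i,j}(T^{\otimes c}) = \Omega(1)$ at every good entry.

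The key new ingredient is a lower bound on the probability that the planted pair falls into some good bucket pair. Let $I_1, \ldots, I_t$ be the buckets chosen for $x_{i^*}$ and $J_1, \ldots, J_t$ those for $y_{j^*}$ --- altogether $2t$ independent uniform samples from $[q^c]$. Define $Z := \sum_{a,b \in [t]} [(I_a, J_b) \in S_f^{(c)}]_{\mathbf{1}}$; success requires $Z > 0$, and a direct calculation gives $\E[Z] = t^2 |S_f|^c / q^{2c}$. Splitting $\E[Z^2]$ according to whether $a = a'$ and whether $b = b'$, the $a \neq a',\, b \neq b'$ terms contribute at most $\E[Z]^2$, the $a = a',\, b = b'$ terms contribute $\E[Z]$, and the two half-diagonal families contribute $t^2(t-1)\cdot V_x(S_f^{(c)})/q^{3c}$ and $t^2(t-1)\cdot V_y(S_f^{(c)})/q^{3c}$, where $V_x, V_y$ are as in Definition~\ref{def:v_x_v_y}. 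Because $V_x$ and $V_y$ factor under Kronecker product, the hypothesis $V_x(S_f), V_y(S_f) \leq |S_f|^{1.5}$ yields $V_x(S_f^{(c)}), V_y(S_f^{(c)}) \leq |S_f|^{1.5c}$. Choosing $t := \lceil (q/\sqrt{|S_f|})^c \rceil$ then forces each of the four families to contribute at most $O(\E[Z]^2)$, and the second moment method gives $\Pr[Z > 0] \geq (\E[Z])^2/\E[Z^2] = \Omega(1)$.

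Combining this $\Omega(1)$ success probability with Chebyshev (which detects the planted pair in a good bucket pair with constant probability, exactly as in the proof of Theorem~\ref{thm:main1}), $O(\log n)$ independent repetitions of the whole procedure suffice for high probability. Substituting $t = (q/\sqrt{|S_f|})^c$ into $m = q^c = \Theta(nt/(\rho f^c))$ yields $(f\sqrt{|S_f|})^c = \Theta(n/\rho)$, hence $c = \Theta(\log n / \log(f\sqrt{|S_f|}))$, and the total running time across all repetitions is $\tilde{O}(\rank(T)^c) = \tilde{O}(n^{\log \rank(T)/\log(f\sqrt{|S_f|})})$, giving the claimed bound on $\omega_\ell$. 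I expect the main technical obstacle to be the second-moment calculation: carefully enumerating the four families of terms of $\E[Z^2]$ and recognizing that $V_x(S_f^{(c)})$ and $V_y(S_f^{(c)})$ are the quantities whose control is required is exactly what makes the ``non-skew'' hypothesis on $S_f$ do its work.
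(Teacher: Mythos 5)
Your overall strategy is the one the paper uses: duplicate each input vector $t$ times, place the copies into random buckets, and replace the first-moment repetition of Theorem~\ref{thm:main1} by a second-moment argument showing that with constant probability some pair of copies of $(x_{i^*},y_{j^*})$ lands in a good bucket pair. Your direct computation of $\E[Z^2]$ for $Z=\sum_{a,b}[(I_a,J_b)\in S_f^{(c)}]_{\mathbf{1}}$, split into the four families according to whether $a=a'$ and $b=b'$, is correct and is essentially the content of the paper's Lemmas~\ref{lem:combrect2} and~\ref{lem:combrect} (stated there for random subsets rather than i.i.d.\ samples), and your use of the multiplicativity $V_x(S_f^{\otimes c})=V_x(S_f)^c$ together with the hypothesis $V_x(S_f),V_y(S_f)\leq|S_f|^{1.5}$ is exactly how the non-skew condition is used in the paper. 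The parameter bookkeeping $(f\sqrt{|S_f|})^c=\Theta(n/\rho)$ and the resulting exponent also check out.

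The gap is in the detection step. You claim the Chebyshev analysis ``carries over \dots exactly as in the proof of Theorem~\ref{thm:main1},'' but duplicating vectors breaks that analysis: once $x_{i^*}$ sits in $t$ different buckets and $y_{j^*}$ in $t$ others, the conditional mean of $C[i,j]$ (given the signs and the bucketing) is no longer just $\rho\, s_a[i]s_b[j]\sum_k T^{\otimes c}(\X_{i,k}\Y_{j,k}\Z_{i,j})$; every other bucket pair $(i_a,j_b)$ containing copies of the planted pair contributes an additional term $\rho\, s_a[i_a]s_b[j_b]\sum_k T^{\otimes c}(\X_{i_a,k}\Y_{j_b,k}\Z_{i,j})$, and for a general tensor these signed contributions can be comparable to, and can cancel, the main signal. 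This is precisely why the paper's proof of this theorem does not reuse the $0.99$ success probability of Theorem~\ref{thm:main1}: it decomposes $C[i,j]=\sum_{i_a,j_b}P[i_a,j_b]\,s_a[i_a]s_b[j_b]$, invokes the random-sign Lemma~\ref{lem:random_-1_1} to get $|C[i,j]|\geq|P[i,j]|$ with probability at least $1/4$ over $s_a,s_b$, and only then applies Chebyshev to the single-bucket-pair quantity $P[i,j]$, obtaining success probability about $0.24$. Your proposal needs this (or an equivalent) argument; as written, the claim that a planted pair in a good bucket pair is detected with constant probability is unjustified. A smaller point: you sample the $t$ buckets with replacement, so two copies of the same vector can land in the same bucket, whereas the paper explicitly forbids this so that the within-bucket pairwise-independence underlying the variance computation of Theorem~\ref{thm:main1} is preserved; you should either forbid it as well or account for the (minor) change in the variance bound.
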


\begin{proof}
Let $E_1 = \log(\rank(T))/\log(qf)$ and $E_2 = \log(q^2 / |S_f|) / \log(qf)$.

Recall that in Theorem~\ref{thm:main1}, we randomly partitioned the inputs $x_1, \ldots, x_n$ into sets $X_1, \ldots, X_m$, and the inputs $y_1, \ldots, y_n$ into sets $Y_1, \ldots, Y_m$, then we ran an algorithm which takes time $\tilde{O}(n^{E_1})$, and which will (with high probability) distinguish the 
all-random case and the planted-pair case, if $x_{i^*}$ was put into set $X_i$ and $y_{j^*}$ was put into set $Y_j$ such that $(i,j) \in S_f^{\otimes c}$. Call such $(i,j)$ `good'. The probability $(i,j)$ is good is $(|S_f|^c / q^{2c}) = n^{-E_2}$.

In Theorem~\ref{thm:main1}, we then repeated $\tilde{O}(n^{E_2})$ times, resulting in a final running time of $\tilde{O}(n^{E_1 + E_2})$, but we will now instead do something more clever. 

Let $t = n^{E_2 / (2 - E_2)}$. We will make $t$ copies of each $x_i$ and $y_j$ vector, and then run the above algorithm on this new instance with $n \cdot t$ vectors, with the caveat that we never put two copies of the same $x$ vector in the same group $X_i$, or two copies of the same $y$ vector in the same group $Y_j$. The running time of this is $\tilde{O}((nt)^{E_1}) = \tilde{O}(n^{2 \cdot E_1 / (2 - E_2)})$, and the probability that a particular correlated pair will be put in a good pair of groups is now $(nt)^{-E_2}$. Since we made $t$ copies of $x_{i^*}$ and $t$ copies of $y_{j^*}$, there are now $t^2$ correlated pairs, so the expected number of correlated pairs in a good pair of groups is $$t^2 \cdot (nt)^{-E_2}  = t^{2 - E_2} \cdot n^{- E_2} = n^{\frac{E_2}{2-E_2} \cdot (2 - E_2) - E_2} = n^{E_2 - E_2} = 1.$$ With more work, and using our bounds on $V_{x}(S_f)$ and $V_{y}(S_f)$, we can show that there is a positive constant probability that a correlated pair was put in a good pair of groups. See Lemma~\ref{lem:combrect} below for the details. Hence, if we repeat $O(\log n)$ times, a correlated pair will be put in a good pair of groups with polynomially low error.

The proof that our algorithm can distinguish the planted-pair case with the non planted-pair case is almost identical to the proof in Theorem~\ref{thm:main1}, except the following. 
In the old proof, when the planted pair is in $X_i$ and $Y_j$ and $\eff_{i,j}(T^{\otimes c}) \geq f^c$, there's $\geq 0.99$ probability that $C[i,j] \geq 10 \var[C[i,j]]$. Now, we can only prove the probability is $\geq 0.24$ because duplicated vectors created correlation. Nonetheless, we can distinguish from the non-planted-pair case, in which every $C[i,j] \geq 10\var[C[i,j]]$ with probability $\leq 0.01$. We show probability $\geq 0.24$ as follows.

Let 
\[
P[i_a,j_b]:=\sum_{k_a, k_b \in [d]} T^{\otimes c}(X_{i_a,k_a} Y_{j_b,k_b} Z_{i,j}) \cdot a_{i_a, k_a} \cdot b_{j_b, k_b},
\]
so that $C[i,j]$ can be written as
\begin{align}\label{eq:cij_new}
C[i,j] = \sum_{i_a, j_b\in [m]} P[i_a,j_b] \cdot s_a[i_a]\cdot s_b[j_b],
\end{align}
where $s_a,s_b\in\{-1,1\}^m$ has i.i.d. $\{-1,1\}$ entries.

Use Lemma~\ref{lem:random_-1_1} on $C[i,j]$, we conclude that with $\geq 1/4$ probability over random choice of $s_a$, $s_b$, $|C[i,j]| \geq |P[i,j]|$.

$\E[P[i,j]] = \rho \cdot \sum_{k \in [q_k^c]} T^{\otimes c}(X_{i,k} Y_{j,k} Z_{i,j})$ and $\var[P[i,j]] \leq \var[C[i,j]]$. By the same analysis in Theorem~\ref{thm:main1}, with $\geq 1/4-0.01$ probability, $|C[i,j]| \geq 
|\E[P[i,j]]| - 10\var[P[i,j]] \geq 10\var[C[i,j]]$.

The resulting exponent is hence:
$$2 \cdot E_1 / (2 - E_2) = 2 \cdot E_1 / (\log(f^2 \cdot |S_f|) / \log(qf)) = \log(\rank(T)) / \log(f \cdot \sqrt{|S_f|}).$$
\end{proof}

\begin{lemma}\label{lem:random_-1_1}
Let $P \in \R^{n\times n}$ be a matrix. Let $a,b\in \{-1,1\}^n$ be entry-wise i.i.d. uniformly sampled from $\{-1,1\}$. Then with probability $\geq 1/4$, 
\begin{align}\label{eq:sum1}
\left| \sum_{i,j\in[n]} a[i]\cdot b[j]\cdot P[i,j] \right| \geq \left| P[1,1] \right|.
\end{align}
\end{lemma}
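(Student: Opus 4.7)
The plan is to condition on all of $a[2],\ldots,a[n],b[2],\ldots,b[n]$ and then exploit the fact that the four outcomes for the pair $(a[1],b[1])\in\{\pm1\}^2$ are equiprobable. First, I would decompose
$$S := \sum_{i,j\in[n]} a[i]\,b[j]\,P[i,j] = a[1]b[1]\,P[1,1] + a[1]\,A + b[1]\,B + C,$$
where $A := \sum_{j\geq 2} b[j]P[1,j]$, $B := \sum_{i\geq 2} a[i]P[i,1]$, and $C := \sum_{i,j\geq 2} a[i]b[j]P[i,j]$ are all measurable with respect to the conditioned variables.

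Next, I would enumerate the four possible values of $S$ as $(a[1],b[1])$ ranges over $\{\pm1\}^2$:
$$v_{++} = P[1,1] + A + B + C, \quad v_{+-} = -P[1,1] + A - B + C,$$
$$v_{-+} = -P[1,1] - A + B + C, \quad v_{--} = P[1,1] - A - B + C.$$
A direct calculation cancels the $A$, $B$, and $C$ contributions to give the identity
$$v_{++} - v_{+-} - v_{-+} + v_{--} = 4\,P[1,1].$$

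By the triangle inequality applied to this identity, at least one of $|v_{++}|,|v_{+-}|,|v_{-+}|,|v_{--}|$ must be at least $|P[1,1]|$. Since each of the four sign patterns for $(a[1],b[1])$ has probability exactly $1/4$ and is independent of the conditioning, this gives the desired probability bound conditionally, and hence unconditionally by the tower property. There isn't really a main obstacle here: the only potentially subtle point is writing down the right linear combination of the four outcomes that isolates $P[1,1]$, but choosing the combination $v_{++}-v_{+-}-v_{-+}+v_{--}$ (which is a standard ``Fourier/parity'' extraction at the coordinate $(1,1)$) makes this immediate.
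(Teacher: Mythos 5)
Your proof is correct, but it takes a different (equally elementary) route from the paper's. The paper argues sequentially: it first conditions on everything except $b[1]$ and notes that the partial sum $T=\overline{T}+P[1,1]\cdot b[1]$ (with $\overline{T}=\sum_{j\geq 2}P[1,j]b[j]$) satisfies $|T|\geq |P[1,1]|$ with probability $\geq 1/2$ over the sign $b[1]$, and then repeats the same one-coordinate sign-alignment argument over $a[1]$ to get $|S|\geq |T|$ with probability $\geq 1/2$, multiplying the two bounds. You instead condition on all of $a[2],\ldots,a[n],b[2],\ldots,b[n]$ at once, enumerate the four equiprobable outcomes of $(a[1],b[1])$, and use the parity identity $v_{++}-v_{+-}-v_{-+}+v_{--}=4P[1,1]$ together with the triangle inequality to conclude that at least one outcome, hence an event of conditional probability $1/4$, has $|S|\geq |P[1,1]|$. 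Both arguments are sound and give the same constant $1/4$; the paper's version is modular (the same one-variable fact is applied twice, which is convenient if one wants to iterate over more coordinates), while yours is a one-shot averaging/Fourier-extraction argument that isolates the coefficient $P[1,1]$ directly and arguably makes the cancellation mechanism more transparent.
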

\begin{proof}
Define $\overline{T} = \sum_{j\geq 2} P[1,j]\cdot b[j]$, and $T = \overline{T} + P[1,1]\cdot b[1]$. 
Regardless of how large $\overline{T}$ is, with $\geq 1/2$ probability over $b[1]$, $|T| \geq |P[1,1]|$.

Let $S$ be the left side of Eq.~\eqref{eq:sum1}. Notice that
\[
S=\left|\sum_{i \geq 2} a[i](\sum_{j\in[n]} P[i,j]\cdot b[j]) + T\cdot a[1] \right|,
\]
use the same argument, we have $|S|\geq  |T| \geq |P[1,1]|$ with $\geq 1/4$ probability.
\end{proof}

\subsection{Probabilistic lemma for when $S_f$ is not `skewed'} \label{subsec:skew}

\begin{lemma}\label{lem:combrect2}
Suppose $q$ is a positive integer and $S \subseteq [q]^2$ is a nonempty subset. Let $V_{x}(S) := \sum_{i \in [q]} |\{ j \in [q] \mid (i,j) \in S \}|^2$ and $V_{y}(S) := \sum_{j \in [q]} |\{ i \in [q] \mid (i,j) \in S \}|^2$, and suppose that $V_x(S)  \leq |S|^{1.5}$ and $V_y(S)  \leq |S|^{1.5}$. 

Suppose we pick $S_x, S_y \subseteq [q]$ of size $|S_x|, |S_y| \geq q/\sqrt{|S|}$ independently and uniformly at random. Then, the probability that $|(S_x \times S_y) \cap S| > 0$ is at least $1/4$.
\end{lemma}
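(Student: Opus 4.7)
The plan is to apply the second moment method to the count $U := |(S_x \times S_y) \cap S|$. First, I would observe that $\Pr[U > 0]$ is monotone nondecreasing in $|S_x|$ and $|S_y|$: one may couple a uniform size-$k$ subset with a size-$(k{+}1)$ superset obtained by adjoining a uniformly random extra element, so the worst case is when both sets are as small as allowed. Hence I may assume $|S_x| = |S_y| = \lceil q/\sqrt{|S|}\,\rceil$. Writing $s := |S|$ and $p := |S_x|/q = |S_y|/q \geq 1/\sqrt{s}$, the two sampling bounds I need are $\Pr[i \in S_x] = p$ and, for distinct $i_1 \neq i_2$, $\Pr[i_1, i_2 \in S_x] = \frac{|S_x|(|S_x|-1)}{q(q-1)} \leq p^2$, together with the symmetric bounds for $S_y$.

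By linearity of expectation, $\E[U] = \sum_{(i,j)\in S} \Pr[i\in S_x]\Pr[j\in S_y] = s p^2 \geq 1$. For the second moment, I would expand
\[
\E[U^2] = \sum_{(i_1,j_1),(i_2,j_2)\in S} \Pr[i_1,i_2 \in S_x]\cdot \Pr[j_1,j_2 \in S_y]
\]
and split into four cases by whether $i_1=i_2$ and whether $j_1=j_2$. The diagonal case $(i_1,j_1)=(i_2,j_2)$ contributes $sp^2$; the case $i_1=i_2,\,j_1\neq j_2$ has exactly $V_x(S)-s$ ordered pairs (since there are $\sum_i r_i(r_i-1) = V_x(S)-s$ where $r_i := |\{j:(i,j)\in S\}|$) and each contributes at most $p\cdot p^2$, for a total $\leq V_x(S)\,p^3$; the symmetric case contributes $\leq V_y(S)\,p^3$; and the fully distinct case contributes $\leq s^2 p^4 = (\E[U])^2$.

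Substituting the hypotheses $V_x(S), V_y(S) \leq s^{1.5}$ yields $\E[U^2] \leq sp^2 + 2 s^{1.5} p^3 + s^2 p^4$, hence
\[
\frac{(\E[U])^2}{\E[U^2]} \;\geq\; \frac{s^2 p^4}{sp^2 + 2 s^{1.5}p^3 + s^2 p^4} \;=\; \frac{1}{\tfrac{1}{sp^2} + \tfrac{2}{\sqrt{s}\,p} + 1} \;\geq\; \frac{1}{4},
\]
where the last step uses $sp^2 \geq 1$, which implies both $1/(sp^2) \leq 1$ and $1/(\sqrt{s}\,p) \leq 1$. The second moment method then delivers $\Pr[U > 0] \geq (\E[U])^2/\E[U^2] \geq 1/4$, as desired. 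There is no real obstacle in this plan; the proof is a clean application of the second moment method, and the hypotheses $V_x(S),V_y(S) \leq |S|^{1.5}$ are precisely tuned so that the contributions from row- and column-concentration of $S$ are each at most $(\E[U])^2$, which is exactly what makes the variance bound square with the threshold $|S_x|,|S_y| = q/\sqrt{|S|}$ that makes $\E[U] = 1$.
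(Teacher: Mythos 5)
Your proof is correct and follows essentially the same route as the paper: the second moment method applied to $U = |(S_x \times S_y)\cap S|$, with $\E[U^2]$ split into the diagonal, shared-row, shared-column, and fully distinct cases, the middle two bounded via $V_x(S), V_y(S) \leq |S|^{1.5}$, giving $(\E[U])^2/\E[U^2] \geq 1/4$. Your treatment is in fact slightly more careful than the paper's (handling sizes above the minimum via a coupling and bounding the without-replacement pair probabilities by $p^2$ rather than treating them as exact), but these are refinements of the same argument, not a different approach.
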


\begin{proof}
Let $U$ denote the random variable $|(S_x \times S_y) \cap S|$. We will use the second moment method, which says that $$\Pr[U>0] \geq \frac{(\E[U])^2}{\E[U^2]}.$$

First, by linearity of expectation, we compute that $\E[U] = |S_x| \cdot |S_y| \cdot (|S|/q^2) = 1$.

Next, again by linearity of expectation, we compute:

\begin{align*}
\E[U^2] =& \sum_{(i,j)\in S} \sum_{(i',j')\in S} \Pr[i, i' \in S_x \text{ and } j, j' \in S_y]
\\=& \sum_{(i,j)\in S} \Bigg( \left( \frac{1}{\sqrt{|S|}} \right)^2 + \left( \frac{1}{\sqrt{|S|}} \right)^3 \cdot |\{ (i',j') \in S \mid i=i' \text{ xor } j=j' \}| \\ &+ \left( \frac{1}{\sqrt{|S|}} \right)^4 \cdot |\{ (i',j') \in S \mid i\neq i' \text{ and } j\neq j' \}| \Bigg)
\\ =&  \left( \frac{1}{\sqrt{|S|}} \right)^2 \cdot |S| + \left( \frac{1}{\sqrt{|S|}} \right)^3 \cdot (V_x(S) + V_y(S) -2|S|) \\ &+ \left( \frac{1}{\sqrt{|S|}} \right)^4 \cdot (|S|^2 - V_x(S) - V_y(S) + |S|) 
\\ =&  2 + \frac{1}{|S|} - \frac{2}{|S|^{0.5}} + (V_x(S) + V_y(S)) \cdot \left( \frac{1}{|S|^{1.5}} - \frac{1}{|S|^{2}}\right)
\\ \leq&  4 + \frac{1}{|S|} - \frac{4}{|S|^{0.5}}
\\ <&  4,
\end{align*}
where the last step follows since $\frac{1}{s} - \frac{4}{\sqrt{s}} < 0$ for all $s \geq 1$.

In total, we get as desired that $$\Pr[U>0] \geq \frac{(\E[U])^2}{\E[U^2]} = \frac{1}{\E[U^2]} > \frac{1}{4}.$$
\end{proof}

\begin{lemma}\label{lem:combrect}
Suppose $q$ is a positive integer and $S \subseteq [q]^2$ is a nonempty subset. Let $V_{x}(S) := \sum_{i \in [q]} |\{ j \in [q] \mid (i,j) \in S \}|^2$ and $V_{y}(S) := \sum_{j \in [q]} |\{ i \in [q] \mid (i,j) \in S \}|^2$, and suppose that $V_x(S) + V_y(S) \leq |S|^{1.5}$. Let $c$ be a positive integer.

Suppose we pick $S_x, S_y \subseteq [q^c]$ of size $|S_x|, |S_y| \geq q^c/\sqrt{|S|^c}$ independently and uniformly at random. Then, the probability that $|(S_x \times S_y) \cap S^{\otimes c}| > 0$ is at least $1/4$.
\end{lemma}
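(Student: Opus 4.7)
The plan is to reduce directly to Lemma~\ref{lem:combrect2} applied to $S^{\otimes c}$ viewed as a subset of $[q^c]^2$ via the natural bijection $[q^c] \cong [q]^c$. The key observation that makes this work is that the three relevant quantities $|S|$, $V_x(S)$, and $V_y(S)$ are all multiplicative under Kronecker product of the index set, so the hypothesis of Lemma~\ref{lem:combrect2} transfers automatically.

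First I would verify the multiplicativity. Writing $r_i := |\{j \in [q] : (i,j) \in S\}|$ so that $V_x(S) = \sum_{i \in [q]} r_i^2$, one checks that for a tuple $(i_1,\ldots,i_c) \in [q]^c$ the number of tuples $(j_1,\ldots,j_c)$ with $((i_1,\ldots,i_c),(j_1,\ldots,j_c)) \in S^{\otimes c}$ is exactly $\prod_{\ell=1}^{c} r_{i_\ell}$. Squaring and summing over $(i_1,\ldots,i_c) \in [q]^c$, the sum factors as
$$V_x(S^{\otimes c}) = \sum_{(i_1,\ldots,i_c)} \prod_{\ell=1}^c r_{i_\ell}^2 = \Bigl(\sum_{i \in [q]} r_i^2\Bigr)^c = V_x(S)^c.$$
The identical argument gives $V_y(S^{\otimes c}) = V_y(S)^c$, and trivially $|S^{\otimes c}| = |S|^c$.

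Second, I would check the hypothesis transfers. From $V_x(S) + V_y(S) \leq |S|^{1.5}$ and the nonnegativity of $V_x, V_y$, we get each of $V_x(S), V_y(S) \leq |S|^{1.5}$ individually. Raising to the $c$-th power gives $V_x(S^{\otimes c}) = V_x(S)^c \leq |S|^{1.5c} = |S^{\otimes c}|^{1.5}$, and similarly for $V_y$. This is exactly the hypothesis of Lemma~\ref{lem:combrect2} applied to the set $S^{\otimes c} \subseteq [q^c]^2$.

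Finally, I would invoke Lemma~\ref{lem:combrect2} with ground set $[q^c]$. Its sample-size lower bound becomes $q^c/\sqrt{|S^{\otimes c}|} = q^c/\sqrt{|S|^c}$, which is precisely the size bound in the hypothesis of Lemma~\ref{lem:combrect}, and its conclusion gives probability at least $1/4$ that $(S_x \times S_y) \cap S^{\otimes c}$ is nonempty, as required. I do not anticipate any real obstacle here: the only thing to be careful about is that the uniformly random subsets $S_x, S_y \subseteq [q^c]$ do not need to respect the product structure $[q]^c$, but Lemma~\ref{lem:combrect2} is stated for arbitrary uniformly random subsets of the ambient ground set, so this is automatic.
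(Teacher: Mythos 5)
Your proposal is correct and matches the paper's own proof: both reduce to Lemma~\ref{lem:combrect2} applied to $S^{\otimes c} \subseteq [q^c]^2$ using the multiplicativity $V_x(S^{\otimes c}) = V_x(S)^c$, $V_y(S^{\otimes c}) = V_y(S)^c$, and $|S^{\otimes c}| = |S|^c$. Your write-up is simply more explicit about why the hypothesis $V_x(S)+V_y(S) \leq |S|^{1.5}$ yields the individual bounds needed by Lemma~\ref{lem:combrect2}, which the paper leaves implicit.
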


\begin{proof}
Apply Lemma~\ref{lem:combrect2} to $S^{\otimes c} \subseteq [q^c]^2$. Note that $V_x(S^{\otimes c}) = (V_x(S))^c$ and $V_y(S^{\otimes c}) = (V_y(S))^c$, so the conditions are still satisfied after taking the $c^{th}$ Kronecker power.
\end{proof}

\subsection{Symmetrizing a tensor to avoid skew} \label{sec:symmetrizing}

\begin{definition}
For a positive integer $q$, we say a set $S \subseteq [q]^2$ is \emph{regular} if there are positive integers $a$ and $b$ such that, for all $i \in [q]$, $|\{ j \in [q] \mid (i,j) \in S \}|$ is either equal to $a$ or equal to $0$, and similarly for all $j \in [q]$, $|\{ i \in [q] \mid (i,j) \in S \}|$ is either equal to $b$ or equal to $0$.
\end{definition}

\begin{lemma} \label{lem:symmetrize}
Suppose $q$ is a positive integer and $S \subseteq [q]^2$ is regular. Let $V_{x}(S) := \sum_{i \in [q]} |\{ j \in [q] \mid (i,j) \in S \}|^2$ and $V_{y}(S) := \sum_{j \in [q]} |\{ i \in [q] \mid (i,j) \in S \}|^2$.
Then, $V_x(S) \cdot V_y(S) \leq |S|^3$.
\end{lemma}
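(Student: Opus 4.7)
The plan is to exploit the regularity of $S$ via double counting. Let $r$ denote the number of rows $i \in [q]$ with $|\{j : (i,j) \in S\}| = a$ (call these the ``full'' rows; the remaining rows are empty), and let $c$ denote the number of columns $j \in [q]$ with $|\{i : (i,j) \in S\}| = b$ (the ``full'' columns). Counting the elements of $S$ by rows and by columns gives the identity $|S| = r a = c b$.

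Plugging into the definitions, $V_x(S) = r a^2 = a \cdot |S|$ and $V_y(S) = c b^2 = b \cdot |S|$, so $V_x(S) \cdot V_y(S) = ab \cdot |S|^2$. Thus it suffices to prove $ab \leq |S|$.

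To see this, note that every nonzero entry of $S$ lies in one of the $c$ full columns (the other columns are empty), so each full row's $a$ entries are confined to those $c$ columns. This forces $a \leq c$, and by the symmetric argument $b \leq r$. Hence $ab \leq cb = |S|$, which gives $V_x(S) \cdot V_y(S) \leq |S|^3$ as desired. There is no real obstacle here; the only thing one has to notice is the inequality $a \leq c$ (equivalently $b \leq r$), which comes immediately from the fact that full rows must avoid empty columns.
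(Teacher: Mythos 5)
Your proof is correct and follows essentially the same route as the paper's: both derive $V_x(S)=a\cdot|S|$, $V_y(S)=b\cdot|S|$ from regularity and then reduce to the counting bound $ab\leq |S|$ (the paper gets it by taking the union of the $b$-element columns meeting a single nonempty row, you get it via $a\leq c$ and $cb=|S|$, which is the same double count). The only nit is that your inequality $a\leq c$ presupposes a full row exists, i.e.\ $S\neq\emptyset$; the empty case should be noted separately, though it is trivial since then $V_x(S)\cdot V_y(S)=0=|S|^3$, exactly as the paper does in its first line.
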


\begin{proof}
If $S$ is empty, then the result holds since $V_x(S) = V_y(S) = |S| = 0$. Otherwise, assume without loss of generality that $(1,1) \in S$.

Let $a = |\{ j \in [q] \mid (1,j) \in S \}|$, and $b = |\{ i \in [q] \mid (i,1) \in S \}|$. Since $S$ is regular, there are $|S|/a$ choices of $i \in [q]$ for which $|\{ j \in [q] \mid (i,j) \in S \}| = a$, and so $V_x(S) = \frac{|S|}{a} \cdot a^2 = a \cdot |S|$. Similarly, $V_y(S) = b \cdot |S|$, which means $V_x(S) \cdot V_y(S) = a \cdot b \cdot |S|^2$.

Let $W := \{ j \in [q] \mid (1,j) \in S \}$, so $|W| = a$. Next, for each $j \in W$, let $W_j := \{ (i,j) \mid i \in [q] \text{ and } (i,j) \in S \} \subseteq S$. Let $W' := \bigcup_{j \in W} W_j \subseteq S$, and note that the $W_j$ sets are disjoint, so $|W'| = \sum_{j \in W} |W_j|$. By definition of $W$, we know that $W_j$ is nonempty for each $j \in W$, and so $|W_j| = b$. It follows that $|S| \geq |W'| = |W| \cdot b = a \cdot b$. 

We thus get as desired that $V_x(S) \cdot V_y(S) = a \cdot b \cdot |S|^2 \leq |S|^3$.
\end{proof}

\begin{theorem}\label{thm:main3}
Suppose $T$ is a $\langle q, q, q_k\rangle$-sized tensor. For any $f \geq 1$, and any regular set $S_f \subseteq [q]^2$ such that $\eff_{i,j}(T) \geq f$ for all $(i,j) \in S_f$, we have $\omega_\ell \leq \log(\rank(T)) / \log(f \cdot \sqrt{|S_f|})$.
\end{theorem}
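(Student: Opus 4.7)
\begin{proofof}{Theorem~\ref{thm:main3}}
The plan is to reduce to Theorem~\ref{thm:main2} by a \emph{symmetrization} step: pass from $T$ to the Kronecker product $T' := T \otimes T^{\top}$, which is a $\langle q^2, q^2, q_k^2 \rangle$-sized tensor with $\rank(T') \leq \rank(T)^2$. The point is that regularity of $S_f$ only gives $V_x(S_f)\cdot V_y(S_f)\le |S_f|^3$ via Lemma~\ref{lem:symmetrize}, but we need \emph{both} $V_x$ and $V_y$ to be individually bounded before we can invoke Theorem~\ref{thm:main2}. Tensoring with the reflection $T^{\top}$ balances the two quantities, replacing each by the product $V_x(S_f) \cdot V_y(S_f)$, after which Lemma~\ref{lem:symmetrize} can finish the job.

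First I would compute efficacies of $T'$. By multiplicativity of efficacy under Kronecker products, and the fact that $\eff_{i,j}(T^{\top}) = \eff_{j,i}(T)$, we have
\[
\eff_{(i_1,i_2),(j_1,j_2)}(T\otimes T^{\top}) \;=\; \eff_{i_1,j_1}(T) \cdot \eff_{j_2,i_2}(T).
\]
Hence the set
\[
S' \;:=\; \{\,((i_1,i_2),(j_1,j_2)) : (i_1,j_1)\in S_f,\ (j_2,i_2)\in S_f\,\} \;\subseteq\; [q^2]^2
\]
consists entirely of pairs where $\eff \ge f^2$, and has size $|S'| = |S_f|^2$.

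Next I would compute $V_x(S')$ and $V_y(S')$. For each row index $(i_1,i_2)\in[q]^2$, the number of column indices $(j_1,j_2)$ such that $((i_1,i_2),(j_1,j_2))\in S'$ factors as $|\{j_1:(i_1,j_1)\in S_f\}|\cdot|\{j_2:(j_2,i_2)\in S_f\}|$. Squaring and summing over $(i_1,i_2)$, the two factors decouple, giving
\[
V_x(S') \;=\; V_x(S_f)\cdot V_y(S_f),
\]
and by the same calculation $V_y(S') = V_x(S_f)\cdot V_y(S_f)$. Now apply Lemma~\ref{lem:symmetrize} to conclude $V_x(S')=V_y(S')\le |S_f|^3 = |S'|^{1.5}$, verifying the skew hypothesis of Theorem~\ref{thm:main2} for $(T',S')$.

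Finally I would plug into Theorem~\ref{thm:main2} with parameters $f^2$ and $|S'| = |S_f|^2$ to obtain
\[
\omega_\ell \;\le\; \frac{\log \rank(T')}{\log\!\left(f^2 \cdot \sqrt{|S_f|^2}\right)} \;\le\; \frac{\log \rank(T)^2}{\log(f^2\cdot |S_f|)} \;=\; \frac{\log \rank(T)}{\log(f\cdot\sqrt{|S_f|})},
\]
as desired. The only nontrivial step is recognizing that tensoring with $T^{\top}$ is the right way to symmetrize $V_x$ and $V_y$; the rest is a bookkeeping exercise confirming that the bounds, sizes, and efficacies multiply correctly under the Kronecker product.
\end{proofof}
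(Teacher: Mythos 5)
Your proposal is correct and follows essentially the same route as the paper: define $T' = T \otimes T^{\top}$ and $S' = S_f \otimes S_f^{\top}$, note $V_x(S') = V_y(S') = V_x(S_f)\cdot V_y(S_f) \leq |S_f|^3 = |S'|^{1.5}$ via Lemma~\ref{lem:symmetrize}, and invoke Theorem~\ref{thm:main2} with threshold $f^2$. Your write-up just makes explicit the bookkeeping (the efficacy multiplicativity, $\rank(T')\le\rank(T)^2$, and the final exponent simplification) that the paper's proof leaves implicit.
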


\begin{proof}
Define $T' = T \otimes T^{\top}$ and $S'_f = S_f \otimes S_f^{\top}$. We can see that $V_x(S'_f) = V_y(S'_f) = V_x(S_f) \cdot V_y(S_f)$. By Lemma~\ref{lem:symmetrize}, this is at most $|S_f|^3 = |S'_f|^{1.5}$. Furthermore, for any $i,i',j,j' \in [q]$ such that $((i,j'), (j,i')) \in S'_f$, we have that $\eff_{(i,j'), (j,i')}(T') = (\eff_{i,j}(T)) \cdot (\eff_{j',i'}(T^{\top})) \geq f^2$. We may thus apply Theorem~\ref{thm:main2} to $T'$ and $S'_f$ to yield the desired result.
\end{proof}

\begin{theorem}[Restatement of Theorem~\ref{thm:mainintro}]\label{thm:main_new}
Suppose $T$ is a $\langle q,q,q_k\rangle$-sized tensor, then $$\omega_\ell \leq \frac{\log(\rank(T))}{\log(\eff(T))}.$$
\end{theorem}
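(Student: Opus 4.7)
The plan is to apply Theorem~\ref{thm:main3} to a large Kronecker power $T^{\otimes N}$, using a regular set built from a carefully chosen joint coordinate type. Since $\rank(T^{\otimes N}) \le \rank(T)^N$ and efficacy is multiplicative, $\eff_{(I,J)}(T^{\otimes N}) = \prod_{\ell=1}^N \eff_{I_\ell, J_\ell}(T)$, it suffices to exhibit, for each $\varepsilon > 0$ and all sufficiently large $N$, a regular set $S \subseteq [q^N]^2$ and a threshold $f \ge 1$ with $\eff_{I,J}(T^{\otimes N}) \ge f$ on all of $S$ and $f \sqrt{|S|} \ge \eff(T)^{(1-\varepsilon)N}$; Theorem~\ref{thm:main3} applied to $T^{\otimes N}$ and letting $N \to \infty$ then yields the claimed bound.

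For counts $n = \{n_{ij}\}_{ij \in [q]^2}$ with $\sum_{ij} n_{ij} = N$, let $S_n \subseteq [q^N]^2$ consist of those pairs $(I, J)$ whose coordinate-wise joint type is exactly $n$, i.e., $|\{\ell : I_\ell = i,\ J_\ell = j\}| = n_{ij}$ for all $(i,j)$. Every pair in $S_n$ then has efficacy exactly $f_n := \prod_{ij} \eff_{ij}(T)^{n_{ij}}$. I claim $S_n$ is regular: the only rows $I \in [q^N]$ appearing in $S_n$ are those whose marginal type is $\{n_{i \cdot}\}_i$, where $n_{i\cdot} := \sum_j n_{ij}$, and every such row is paired with exactly $a := \prod_i \binom{n_{i\cdot}}{\{n_{ij}\}_j}$ columns; by symmetry, each column of the correct marginal type $\{n_{\cdot j}\}$ is paired with exactly $b := \prod_j \binom{n_{\cdot j}}{\{n_{ij}\}_i}$ rows. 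The size is $|S_n| = \binom{N}{\{n_{ij}\}}$.

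To choose $n$, I will use the multinomial averaging identity
\[
\eff(T)^{2N} = \Bigl(\sum_{ij} \eff_{ij}(T)^2\Bigr)^N = \sum_n \binom{N}{\{n_{ij}\}} \prod_{ij} \eff_{ij}(T)^{2n_{ij}} = \sum_n |S_n|\, f_n^2.
\]
Since the number of joint types summing to $N$ is only $\binom{N + q^2 - 1}{q^2 - 1} = \poly(N)$, by pigeonhole there exists an $n^*$ with $|S_{n^*}|\, f_{n^*}^2 \ge \eff(T)^{2N}/\poly(N)$, and hence $f_{n^*}\sqrt{|S_{n^*}|} \ge \eff(T)^N/\poly(N)$. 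Plugging into Theorem~\ref{thm:main3} gives
\[
\omega_\ell \le \frac{\log \rank(T^{\otimes N})}{\log(f_{n^*}\sqrt{|S_{n^*}|})} \le \frac{N \log \rank(T)}{N \log \eff(T) - O(\log N)},
\]
which tends to $\log \rank(T) / \log \eff(T)$ as $N \to \infty$.

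The main obstacle is the hypothesis $f_{n^*} \ge 1$ required by Theorem~\ref{thm:main3}. The pigeonhole-optimizer is concentrated at $n^*_{ij} \approx (\eff_{ij}(T)^2/\eff(T)^2)\, N$, for which $\log f_{n^*} \approx \frac{N}{\eff(T)^2} \sum_{ij} \eff_{ij}(T)^2 \log \eff_{ij}(T)$, and this can a priori be negative even when $\eff(T) > 1$. When the bound $\log \rank(T)/\log \eff(T)$ is vacuous (i.e., $\eff(T) \le 1$, or the bound already exceeds the trivial $\omega_\ell \le 2$) the statement holds for free; in the non-vacuous regime I expect to recover $f_{n^*} \ge 1$ either by restricting the averaging to types with $f_n \ge 1$ and verifying that this subsum still captures $\eff(T)^{2N}/\poly(N)$, or by perturbing $n^*$ towards higher-efficacy coordinates at a cost of only a $1 + o(1)$ factor in $\log(f\sqrt{|S|})/\log \eff(T)$.
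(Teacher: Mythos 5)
Your proposal is essentially the paper's own proof of Theorem~\ref{thm:main_new}: pass to $T^{\otimes N}$, partition $[q]^N\times[q]^N$ into joint-type classes (the paper's sets $S_p$), which are regular and on which the efficacy is the constant $f_p=\prod_{i,j}\eff_{i,j}(T)^{p(i,j)}$ by multiplicativity, use the fact that there are only $\poly(N)$ types to find one with $f_p^2\,|S_p|\geq \eff(T)^{2N}/\poly(N)$, apply Theorem~\ref{thm:main3}, and let $N\to\infty$. The only point where you go beyond the paper is your (legitimate) worry about the hypothesis $f\geq 1$ in Theorem~\ref{thm:main3}, which the paper's proof applies without comment; your suggested patch of restricting the average to types with $f_p\geq 1$ is the natural one, since the discarded types contribute at most $q^{2N}$ to the total $\eff(T)^{2N}$, which is negligible whenever $\eff(T)>q$ (as holds for every tensor the paper actually uses).
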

\begin{proof}
Let $N$ be a sufficiently large positive integer. We will partition the set $[q]^N\times [q]^N$ into many regular sets in the following way. For any $(I,J)=((i_1,\cdots,i_N),(j_1,\cdots,j_N))\in ([q]^N)^2$, let $p\in [q]^2\rightarrow \N$ be the counter, such that $p(u,v)$ counts the number of pairs $(i_\ell,j_\ell)$ that equal to $(u,v)$, and let $S_p \subseteq [q]^{2N}$ be the set including all such pairs $(I,J)$ whose counter is $p$. We have that $\{S_p\}_{\{p\}}$ form a partition of $[q]^N\times [q]^N$. We can also see that for every $p$, $S_p$ is a regular set (since its definition does not depend on the order of the $N$ indices). Let $f_p = \Pi_{i,j\in[q]}{\eff}_{i,j}(T)^{p(i,j)}$, every pair $(I,J)\in S_p$ has ${\eff}_{I,J}(T^{\otimes N}) = f_p$. 
By Theorem~\ref{thm:main3}, we have
\begin{align*}
    w_{\ell} \leq \frac{\log(\rank(T^{\otimes N}))}{\log(\sqrt{f_p^2 \cdot |S_p|})}.
\end{align*}
The next step is to choose the best $p$ that maximize $f_p^2\cdot |S_p|$. Note that the number of different $p$ is upper bounded by $N^{q^2}$. And thus
\begin{align*}
    \max_p f_p^2\cdot |S_p| 
    \geq \frac{1}{N^{q^2}} \sum_p f_p^2|S_p|
    = \frac{1}{N^{q^2}} \sum_p \sum_{(I,J)\in S_p} {\eff}_{I,J}(T^{\otimes N})^2
    = \frac{1}{N^{q^2}} {\eff}(T^{\otimes N})^2,
\end{align*}
where the last step is because $\{S_p\}_{\{p\}}$ is the partition of $[q]^N\times [q]^N$ and the definition of ${\eff}(T)$.

Therefore, we have
\begin{align*}
    w_{\ell} \leq \frac{\log(\rank(T^{\otimes N}))}{\log(\sqrt{f_p^2 \cdot |S_p|})} \leq
    \frac{\log(\rank(T^{\otimes N}))}{\log\left(\sqrt{  \frac{1}{N^{q^2}} {\eff}(T^{\otimes N})^2 }\right)} =
    \frac{N\log(\rank(T))}{\log(\sqrt{  \frac{1}{N^{q^2}}}) + N\log({\eff}(T))} \leq \frac{\log(\rank(T))}{\log({\eff}(T))} + o(1),
\end{align*}
and the result follows from taking $N\rightarrow \infty$.
\end{proof}

\section{Solving the $P$-light bulb problem with locality-sensitive hashing} \label{sec:lsh}

\paragraph{General Faster Algorithm.} The general statement of Theorem~\ref{thm:intro2112} which applies to any tensor needs a few definitions. Let $q \geq 2$ be an integer, and $P \in \R^{q \times q}$ be a matrix of nonnegative real numbers whose entries sum to $1$, but whose entries are \emph{not} all equal to $1/q^2$. We say that two vectors $x, y \in [q]^d$ are \emph{jointly sampled according to $P$} if, for each $\ell \in [d]$, the coordinates $x[\ell], y[\ell]$ are sampled independently of all other coordinates, and $(x[\ell], y[\ell]) = (i,j)$ with probability $P[i,j]$ for all $(i,j) \in [q]^2$.

We focus on a generalization of the light bulb problem which our algorithm is naturally able to solve. In the \emph{$P$-light bulb problem}, one is given as input vectors $x_1, \ldots, x_n, y_1, \ldots, y_n \in [q]^d$ which are all independent and uniformly random except for a planted pair which has been jointly sampled according to $P$, and the goal is to find the planted pair. The light bulb problem with correlation $\rho$ is a special case of this problem with $q=2$, $P[0,0] = P[1,1] = (1 + \rho)/4$, $P[0,1] = P[1,0] = (1 - \rho)/4$. It could alternatively be viewed as a special case of this problem for any $q$ which is a power of $2$, along with the appropriately defined $P$.

\subsection{Overview of the proof}
Let $x^*,y^*$ be the correlated pair. Since each bit of $(x^*_i,y^*_i)$ is sampled according to the joint distribution $P$, the number of coordinates $l$ such that $(x^*_l,y^*_l) = (i,j)$ will be proportional to $P[i,j]$ in expectation. In fact, we will assume that the number is equal to its expectation for all $(i,j)$; this happens with decent probability, and will simplify our analysis. The assumption below that $(x^*,y^*)$ falls into the set $V_N$ (defined in Eq.~\eqref{eq:vn} below) captures this property.

Given a tensor $T$ of size $\langle q,q,q \rangle$ along with its $\eff$ matrix, our hope is that $(x^*,y^*)$ falls into a bucket $(i,j)\in [q]^2$ with high $\eff_{i,j}$. However, depending on how the $\eff$ matrix correlates with the $P$ distribution matrix, this may not be the case. To address this, we will choose a pair of stochastic matrices $Q_x,Q_y$ (Def.~\ref{def:effQ_gamma} below) which we use to process vectors after they have been sampled, so that $P$ after this transformation will be correlated with $\eff$.

More precisely, we use $Q_x$ and $Q_y$ to decide which bucket every vector goes into as follows. Take a vector $x$ as example. For every coordinate $l$, if $x_l = i$, we switch $x_l$ to $j$ with probability $Q_x[i,j]$. The final $x$ after this transformation is the bucket we put this vector into. Vectors $y$ are transformed in a similar way, but with the matrix $Q_y$. Note that different buckets may have different numbers of points since the matrices $Q_x,Q_y$ are not necessarily doubly-stochastic. This needs to be taken into account since $(x^*_i,y^*_i)$ wil only be detected it they are put into a bucket where $\eff_{i,j}^2$ is larger than the number of pairs of points. Below we will rescale $\eff$ by $\partial_x$ and $\partial_y$ (Def.~\ref{def:effQ_gamma} below) to ``normalize'' this effect.

Ultimately we need to optimize over choices of $Q_x,Q_y$ to achieve the best running time. The number $\gamma_{Q_x,Q_y}$ (Def.~\ref{def:effQ_gamma} below) indicates the performance of particular matrices $Q_x,Q_y$. The higher this number is, the better the choice of $Q_x,Q_y$.

Ultimately we will find using properties of the Kronecker power that many buckets shares the same property: they have the same chance that of containing the correlated pair $(x^*,y^*)$, and they have the same $\eff$ value. We cluster these buckets into many groups, each specified by a mapping $\tau$ (Def.~\ref{def:tau_partition} below). We find the best cluster $S_{\tau}$ in Lemma~\ref{lem:best_tau} below, and our algorithm will only consider the buckets inside this cluster to find $(x^*,y^*)$. (We will calculate that other clusters give a negligible additional probability of finding $(x^*, y^*)$.) Similar to Theorem~\ref{thm:main2} above, we copy $x^*$ and $y^*$ multiple times to guarantee that there is a constant probability that at least one copy falls into a bucket in that cluster.

Similar to before, we will aim to use Lemma~\ref{lem:combrect2}, and toward this goal, we need to ensure our cluster $S_\tau$ is not too ``skewed''. Similar to Section~\ref{sec:symmetrizing} above, we consider the Kronecker power of $S_{\tau}$ with its transpose $S_{\tau}^{\top}$ to ``symmetrize'' the cluster and avoid this issue. Section~\ref{sec:symmetrizing2} below is devoted to dealing with this issue.

See Algorithm~\ref{alg:alg1} below for the full algorithm description.

\subsection{Preliminaries}
Given any tensor $T$, we now define its $P\text{-}\eff(T)$, a generalization of $\eff(T)$. We start by defining some useful functions.

\begin{definition}[$\effQ$ and $\gamma$]\label{def:effQ_gamma}
    Suppose $T$ is a  $\langle q,q,q_k\rangle$-sized tensor. Given two stochastic matrices $Q_x,Q_y \in \R^{q\times q}$, define $\partial_x(i) := \sum_{j\in[q]} Q_x[i,j]$ for $i\in [q]$, and same for $\partial_y$. 
    we define the Q version of $\eff(T)$ as follows. For $i,j\in [q]$,
\[
\effQ_{i,j}(Q_x,Q_y,T) := \frac{\sum_{k \in [q_k]} T(\X_{i,k} \Y_{j,k} \Z_{i,j})}{\sqrt{\sum_{i', j' \in [q], k,k' \in [q_k]} T(\X_{i',k} \Y_{j',k'} \Z_{i,j})^2\partial_x(i')\partial_y(j')}}.
\]

Given a joint probability matrix $P\in \R^{p\times p}$, we further define the performance of $Q_x,Q_y$ as
\begin{align*}
\gamma_{Q_x,Q_y} := \prod_{i,j\in[q]}\left( \sum_{u,v\in [q]} Q_x[i,u] Q_y[j,v] (\effQ_{u,v}(Q_x,Q_y,T))^2 \right)^{P[i,j]}.
\end{align*}

Let $\gamma$ be the best $\gamma_{Q_x,Q_y}$ over all stochastic matrices $Q_x, Q_y$.
\begin{align*}
    \gamma := \max_{Q_x,Q_y}\gamma_{Q_x,Q_y},
\end{align*}
\end{definition}

Note that $\effQ$ is multiplicative: suppose $T, T'$ both have size $\langle q,q,q_k \rangle$, and $Q_x$, $Q_x'$, $Q_y$, $Q_y'$ are all stochastic matrices, then for $(i,i'), (j,j') \in [q]^2$, we have $\effQ_{(i,i'), (j,j')}(Q_x\otimes Q_x', Q_y\otimes Q_y', T \otimes T') = \effQ_{i,j}(Q_x,Q_y,T) \cdot \effQ_{i',j'}(Q_x',Q_y',T').$

\begin{remark} \label{remark:uniform}
By choosing matrix $Q_x,Q_y$ to be the matrix where each entry is $1/q$, $\effQ(T)$ is the same as $\eff(T)$ and thus $\gamma \geq \eff(T)^2 / q^2$.
\end{remark}

We finally define:
\begin{align*}
\Peff(T):= \sqrt{\gamma \cdot q^2}.
\end{align*}

Here's our main theorem of this section.
\begin{theorem}\label{thm:main_new_2}
Suppose $T$ is a $\langle q,q,q_k\rangle$-sized tensor. Suppose there are $n$ vectors uniformly independently sampled from $[q]^d$, with a planted pair $(x^*,y^*)$ where each bit of them is sampled from a symmetric joint probability matrix $P\in \R^{q\times q}$.

Let $\gamma$ be as defined in Definition~\ref{def:effQ_gamma}, then $(x^*,y^*)$ can be found in $O(n^{\omega_P+o(1)})$ time, where 
\begin{align*}
    \omega_P \leq \frac{\log \rank(T)}{\log (q \gamma^{1/2})}.
\end{align*}
\end{theorem}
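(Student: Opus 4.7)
The plan is to generalize the proof of Theorem~\ref{thm:main_new} by replacing the uniform bucketing used there with a $(Q_x, Q_y)$-biased randomized hashing step that correlates with the tensor's efficacy. First, I would fix a large Kronecker power $N$ and work with $T^{\otimes N}$, $Q_x^{\otimes N}$, and $Q_y^{\otimes N}$; since $\effQ$ and $\gamma_{Q_x, Q_y}$ are multiplicative under Kronecker product, any exponent bound obtained for $T^{\otimes N}$ transfers back to $T$. I would then let $Q_x, Q_y$ be the stochastic matrices achieving $\gamma$.

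The algorithm hashes each input vector coordinatewise: an $x$-coordinate with value $i$ is replaced by $j$ with probability $Q_x[i,j]$ (and analogously for $y$ using $Q_y$), and the resulting string serves as the bucket index. Under this hashing, a planted pair $(x^*, y^*)$ whose coordinates are jointly drawn from $P$ is placed into a bucket pair $(I,J) \in [q]^N \times [q]^N$ with probability determined by a ``type'' $\tau$, a counter recording the number of coordinate positions where the original pattern $(i,j)$ was hashed to $(u,v)$. All $(I,J)$ sharing the same type are symmetric: they have equal landing probability and equal value of $\effQ_{I,J}(Q_x^{\otimes N}, Q_y^{\otimes N}, T^{\otimes N})^2$. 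The weights $\partial_x, \partial_y$ appearing in the denominator of $\effQ$ are exactly what is needed, because the buckets are no longer uniformly populated --- bucket $I$ is expected to contain roughly $n \cdot \prod_\ell \partial_x(I_\ell)/q$ of the $n$ input vectors, so efficacy must be normalized relative to these bucket sizes.

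Next, I would maximize the product of landing probability and $\effQ^2$ over types. Summing over all types using the multinomial theorem, and invoking the definition of $\gamma_{Q_x, Q_y}$ as the geometric mean, weighted by $P[i,j]$, of the inner expectations $\sum_{u,v} Q_x[i,u] Q_y[j,v]\, \effQ_{u,v}^2$, the best type achieves this product of at least $\gamma^{N-o(N)}/\mathrm{poly}(N)$, since the number of types is polynomial in $N$. This is the analog of the averaging step in the proof of Theorem~\ref{thm:main_new} that produced the factor $\eff(T^{\otimes N})^2$.

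With the best type class fixed, the algorithm proceeds analogously to Theorems~\ref{thm:main1}--\ref{thm:main3}: replicate each input vector, distribute the copies into buckets via the $Q$-hashing, aggregate within buckets, apply $T^{\otimes N}$ recursively, and read off the planted pair via Chebyshev. To control the skew $V_x, V_y$ of the chosen bucket-pair class (which need not be regular when $Q_x \neq Q_y$), I would symmetrize by replacing $T$ with $T \otimes T^\top$ and exchanging the roles of $Q_x$ and $Q_y$ on the transpose side, as in Section~\ref{sec:symmetrizing}. The final exponent becomes $\log \rank(T^{\otimes N})/\log(q^N \cdot \gamma^{N/2})$, which tends to $\log \rank(T)/\log(q \gamma^{1/2})$ as $N \to \infty$ and the $o(N)$ losses are absorbed. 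The main obstacle I anticipate is verifying that the second-moment and copy-based arguments of Theorem~\ref{thm:main2} still go through in the biased-hashing regime: the $t$ independently $Q$-hashed copies of $(x^*,y^*)$ must be ``sufficiently independent'' even though they share the same underlying coordinates, and the symmetrization argument must play well with the asymmetric $\partial_x, \partial_y$ weights in $\effQ$.
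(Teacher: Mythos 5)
Your proposal follows essentially the same route as the paper's proof: the $Q$-biased coordinatewise hashing, the partition of bucket pairs into types $\tau$ with an averaging step showing the best type attains $\mathcal{D}_\tau\cdot|S_\tau|\cdot\effQ_\tau^2 \geq \gamma^{N}/\mathrm{poly}(N)$ (the paper's Lemma~\ref{lem:best_tau}), the replication-plus-second-moment argument, and the symmetrization via $T\otimes T^{\top}$ with the roles of $Q_x,Q_y$ swapped on the transposed half are all exactly what the paper does, and the obstacles you flag (independence of the hashed copies, compatibility of symmetrization with the $\partial_x,\partial_y$ weights) are resolved there by using fresh coordinates for the bucketing versus the multiplication step and by Claims~\ref{cla:self_tensor_property} and~\ref{cla:x'_y'_in S}. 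The only ingredient you do not mention is the conversion of the $[q]$-valued vectors into $\{-1,1\}$ vectors preserving the planted correlation (the paper's Lemma~\ref{lem:mapto1}), a short greedy construction needed before the Chebyshev-based detection applies.
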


Theorem~\ref{thm:intro2112} follows from Theorem~\ref{thm:main_new_2} by finding the optimal $Q$ for the tensor $T_{2112}$ (see Section~\ref{sec:newtensor} for the definition of $T_{2112}$) and the matrix $P$ arising from $\rho$ in the light bulb problem; see the Example~\ref{example:2112} below for more details.

\begin{example} \label{example:2112}
Consider our standard light-bulb problem where the correlated pair has $\rho$-correlation, i.e., 
\begin{align*}
    P_{\rho} = \begin{pmatrix}
    \frac{1+\rho}{4} & \frac{1-\rho}{4}\\
    \frac{1-\rho}{4} & \frac{1+\rho}{4}
    \end{pmatrix}.
\end{align*}
Let $a \in [0,1]$ be some parameter, let both $Q_x$ and $Q_y$ be
\begin{align*}
    Q_x = Q_y = \begin{pmatrix}
    1-a & a\\
    a & 1-a
    \end{pmatrix}.
\end{align*}
Given our $T_{2112}$ tensor, by definition of $\gamma_{Q_x,Q_y}$ (Definition~\ref{def:effQ_gamma}), we can calculate
\begin{align*}
    \gamma_{Q_x,Q_y} = \left( 2\cdot \big((1-a)^2+a^2\big) + 1\cdot \big(2a(1-a)\big) \right)^{\frac{1+\rho}{2}}\left( 2\cdot \big(2a(1-a)\big) + 1\cdot \big((1-a)^2+a^2\big) \right)^{\frac{1-\rho}{2}}.
\end{align*}
The optimal $a$ is given as $a = \max\{0, (1 - \sqrt{3\rho})/2\}$.
Write $\omega_{\rho}$ to be the exponent $\omega_P$ in Theorem~\ref{thm:main_new_2} given our $P = P_{\rho}$. Then we get
\begin{align*}
    w_{\rho} = \begin{cases}
    \frac{2\log 5}{\log\left(6(1-\rho)^{-\rho/2}(1+\rho)^{\rho/2}(1-\rho^2)^{1/2}\right) } & \textrm{,when~}\rho < 1/3\\
    \frac{4\log 5}{(5+\rho)\log 2} & \textrm{,when~}1/3\leq \rho \leq 1.
    \end{cases}
\end{align*}
\end{example}

In the remainder of this section, we prove Theorem~\ref{thm:main_new_2}.

\subsection{Preparation before symmetrization}
We will show that for every stochastic $Q_x,Q_y$, our algorithm gives an exponent $\omega_{P} \leq 2\frac{\log \rank(T)}{\log(\gamma_{Q_x,Q_y})}$ and therefore, the theorem follows by choosing the best $Q_x$ and $Q_y$. In the following, we assume $Q_x, Q_y$ are fixed stochastic matrices. 

Fix a joint probability matrix $P$. Let $N$ be a sufficiently large positive integer so that $P[i,j]N$ are all integers.\footnote{We only need this property when constructing the set $V_N$. There is a negligible change in our algorithm if we round $P[i,j]\cdot N$ to be the integer closest to $P[i,j]\cdot N$ in the construction of $V_N$ for large $N$.} Define the set $V_N$ as all pairs of $(x,y)$ where there are exactly $P[i,j]\cdot N$ number of coordinates $l$ such that $(x[l],y[l])=(i,j)$,
\begin{align}\label{eq:vn}
V_N := \{ (x,y)\in [q]^{N} : \forall i,j\in [q],~|\{l\mid (x[l],y[l]) = (i,j)\}| = P[i,j]\cdot N\}.
\end{align}
If the planted pair $(x^*,y^*)$ is drawn from the joint distribution $P$, then there's a descent chance that $(x^*,y^*)\in V_N$.

\begin{definition}[Distribution $\mathcal{D}^{x}$ and $\mathcal{D}^{y}$] \label{def:Dxy}
Let $i \in [q]$ and $Q_x \in \R^{q \times q}$ be a stochastic matrix, we define the distribution $\mathcal{D}_{Q_x}^{i} \in \R^{q}$ as
\[
\mathcal{D}_{Q_x}^{i}(j) := Q_x[i,j], ~\forall j\in [q].
\]
In another words, the distribution $\mathcal{D}_{Q_x}^{i}$ is generated by transforming $i$ to $j$ with probability $Q_x[i,j]$. 

Let $x\in [q]^{N}$ be any vector, we define the distribution $\mathcal{D}_{Q_x^{\otimes N}}^{x}  \in \R^{q^N}$ as 
\[
\mathcal{D}_{Q_x^{\otimes N}}^{x} := \otimes_{l=1}^N \mathcal{D}_{Q_x}^{x[l]}.
\]
In another words, the distribution $\mathcal{D}_{Q_x^{\otimes N}}^{x}$ is generated by transforming  independently each entry $x[\ell]$ to $x'[\ell]$ with probability $Q_x[x[\ell],x'[\ell]]$.

Similarly, we define
\begin{align*}
    \mathcal{D}^{y}_{Q_y^{\otimes N}} := \otimes_{l=1}^N \mathcal{D}^{y[l]}_{Q_y} \in \R^{q^N}.
\end{align*}

For simplicity, if $Q_x$ and $Q_y$ are clear from the context, we write $\mathcal{D}^{x}_{Q_x^{\otimes N}}$ as $\mathcal{D}^{x}$ and $\mathcal{D}^{y}_{Q_y^{\otimes N}}$ as  $\mathcal{D}^{y}$.

We define $\mathcal{D}^{x,y} := \mathcal{D}^x \otimes \mathcal{D}^y$ to be their joint distribution. In particular, for $x', y' \in [q]^n$, we will write $\mathcal{D}^{x,y}(x',y') = \mathcal{D}^{x}(x') \cdot \mathcal{D}^{y}(y')$ to denote the probability that $\mathcal{D}^{x,y}$ outputs $(x',y')$.
\end{definition}

Next, similar to the proof of Theorem~\ref{thm:main_new}, we will partition the entire space $[q]^N \times [q]^N$ into several regular sets.

\begin{definition}[$\tau$-partition]\label{def:tau_partition}
Call a mapping $\tau :[q]^4\rightarrow \{0,1,\cdots,N\}$ \emph{valid} if for all $i,j\in[q]$, we have $\sum_{u,v\in[q]} \tau(i,j,u,v) = N \cdot P[i,j]$. Fix a pair $(x^*,y^*)\in V_N$.
Every pair $(x,y)\in [q]^{N} \times [q]^N$ \emph{corresponds to} one valid mapping $\tau$ defined by $\tau(i,j,u,v) = |\{ l : (x^*[l],y^*[l]) = (i,j)~\textrm{and}~(x[l],y[l])=(u,v)\}|$. For a valid mapping $\tau$, let $S_{\tau}^{x^*,y^*}\subseteq[q]^{2N}$ be the set of all pairs $(x,y)$ that correspond to $\tau$. When $x^*,y^*$ is clear from the context, we simply write $S_{\tau}$ instead of $S_{\tau}^{x^*,y^*}$.
\end{definition}

We next make some key observations about valid mappings $\tau$.

\begin{fact}\label{fac:tau_partition}
Fix $(x^*,y^*)\in V_N$, then
\begin{enumerate}
    \item $\{S_{\tau}\}_{\{\text{valid }\tau\}}$ is a partition of $[q]^N \times [q]^N$;\label{fac:part1}
    \item $S_{\tau}$ is regular for all valid $\tau$;\label{fac:part2}
    \item Every $(x,y)\in S_{\tau}$ has the same $\mathcal{D}^{x^*,y^*}(x,y)$, since $\mathcal{D}^{x^*,y^*}(x,y) = \prod_{i,j,u,v}(Q_x[i,u]Q_y[j,v])^{\tau(i,j,u,v)}$ only depends on $\tau$. For simplicity, we denote them as $\mathcal{D}_{\tau}$;
    \label{fac:part3}
    \item Every $(x,y)\in S_{\tau}$ has the same $\effQ_{x,y}(T^{\otimes N})$, since 
    \[
    \effQ_{x,y}(Q_x^{\otimes N},Q_y^{\otimes N},T^{\otimes N}) = \prod_{i,j,u,v}\effQ_{u,v}(Q_x,Q_y,T)^{\tau(i,j,u,v)}
    \]
    only depends on $\tau$. For simplicity, we denote them as $\effQ_{\tau}$. \label{fac:part4}
\end{enumerate}
\end{fact}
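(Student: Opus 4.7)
The plan is to handle the four parts in turn, all of which ultimately reduce to regrouping a coordinate-wise product according to the ``type'' $(i,j,u,v)$ that $\tau$ records. The key observation is that $\tau$ only records the multiset of quadruples $(x^*[l],y^*[l],x[l],y[l])$ appearing as $l$ ranges over $[N]$, so any quantity that factors across coordinates and depends only on this multiset will be constant on $S_\tau$.

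For part (\ref{fac:part1}), the construction in Definition~\ref{def:tau_partition} assigns to every $(x,y)\in[q]^N\times[q]^N$ a unique mapping $\tau$ by counting; I would check that this $\tau$ is valid using $(x^*,y^*)\in V_N$, since $\sum_{u,v}\tau(i,j,u,v)=|\{l:(x^*[l],y^*[l])=(i,j)\}|=P[i,j]\cdot N$. Uniqueness of the $\tau$ associated to each $(x,y)$ then gives the partition. For part (\ref{fac:part3}), by definition of $\mathcal{D}^{x,y}$ (Definition~\ref{def:Dxy}),
\[
\mathcal{D}^{x^*,y^*}(x,y)=\prod_{l=1}^N Q_x[x^*[l],x[l]]\,Q_y[y^*[l],y[l]],
\]
and grouping the factors according to the value of $(x^*[l],y^*[l],x[l],y[l])$ gives $\prod_{i,j,u,v}(Q_x[i,u]Q_y[j,v])^{\tau(i,j,u,v)}$, which depends only on $\tau$.

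For part (\ref{fac:part4}), I would invoke the multiplicativity of $\effQ$ under Kronecker product (noted just below Definition~\ref{def:effQ_gamma}), giving $\effQ_{x,y}(Q_x^{\otimes N},Q_y^{\otimes N},T^{\otimes N})=\prod_{l=1}^N\effQ_{x[l],y[l]}(Q_x,Q_y,T)$. Since $\effQ_{u,v}$ depends only on $(u,v)$ and not on the corresponding $(i,j)=(x^*[l],y^*[l])$, the exponent of $\effQ_{u,v}(Q_x,Q_y,T)$ in this product is $\sum_{i,j}\tau(i,j,u,v)$, so the product equals $\prod_{i,j,u,v}\effQ_{u,v}(Q_x,Q_y,T)^{\tau(i,j,u,v)}$ as claimed.

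Part (\ref{fac:part2}), showing regularity, is the step that requires the most care and is the main obstacle. View $S_\tau\subseteq [q^N]\times[q^N]$. Fix $x\in[q]^N$, and for each $(i,j,u)\in[q]^3$ let $M_{i,j,u}:=\{l:(x^*[l],y^*[l],x[l])=(i,j,u)\}$; these sets are determined by $x$ and partition $[N]$. A vector $y$ satisfies $(x,y)\in S_\tau$ iff, for every $(i,j,u)$, the positions in $M_{i,j,u}$ get values $v$ with exactly $\tau(i,j,u,v)$ positions receiving each $v$. Such a $y$ exists iff $|M_{i,j,u}|=\sum_v\tau(i,j,u,v)$ for all $(i,j,u)$, and when it exists the number of choices is $\prod_{i,j,u}\binom{\sum_v\tau(i,j,u,v)}{\{\tau(i,j,u,v)\}_v}$, a constant $a$ depending only on $\tau$. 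Hence $|\{y:(x,y)\in S_\tau\}|\in\{0,a\}$ for every $x$. An identical argument with the roles of $x$ and $y$ swapped produces a column-constant $b$, establishing regularity. The only subtlety is checking that the compatibility constraint $|M_{i,j,u}|=\sum_v\tau(i,j,u,v)$ reduces the multinomial to one that is independent of $x$, which is immediate from substituting the constraint into the numerator of each multinomial coefficient.
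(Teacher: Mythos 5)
Your proposal is correct, and parts (1), (3), (4) coincide with the paper's treatment: the paper disposes of (1) with the same ``each pair corresponds to exactly one valid $\tau$'' observation (your check of validity via $(x^*,y^*)\in V_N$ is the right thing to verify), and it regards (3) and (4) as proved in the statement, which is exactly the coordinate-wise factorization and regrouping you spell out. Where you genuinely diverge is part (2), the only part with real content. The paper writes $S_{\tau}=\otimes_{i,j}(S_{\tau})_{i,j}$, where $(S_\tau)_{i,j}$ lives on the coordinate block $Z_{i,j}=\{\ell:(x^*[\ell],y^*[\ell])=(i,j)\}$, argues each $(S_\tau)_{i,j}$ is regular because its definition is invariant under permuting the coordinates of $Z_{i,j}$, and then uses that regularity is preserved under Kronecker products (row and column counts multiply). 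You instead fix a row $x$, observe that the completions $y$ are counted by $\prod_{i,j,u}\binom{\sum_v\tau(i,j,u,v)}{\{\tau(i,j,u,v)\}_v}$ whenever the compatibility condition $|M_{i,j,u}|=\sum_v\tau(i,j,u,v)$ holds and by $0$ otherwise, and symmetrically for columns. Both routes are valid. Your counting argument is more explicit and self-contained: it produces the constants $a$ and $b$ outright, and it makes explicit the point that the paper's ``defined independently of the order of the indices'' step quietly relies on, namely that every row with a nonzero count has the same type (so permutation-invariance really does force all nonzero row counts to agree). The paper's route is terser and fits the Kronecker-product machinery used throughout (e.g.\ the analogous $S_P(T^{\otimes N})=S_P(T)^{\otimes N}$ observation and Lemma~\ref{lem:combrect}), which is convenient when these sets are later tensored with their transposes in Section~\ref{sec:symmetrizing2}.
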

\begin{proof}
1. This is because each pair $(x,y)$ corresponds to exactly one valid $\tau$.

2. For $i,j \in [q]$, let $Z_{i,j}:=\{l\mid x^*[l]=i~\textrm{and}~y^*[l]=j\}\subseteq[N]$, and let $(S_{\tau})_{i,j} \subseteq [q]^{Z_{i,j}}$ be the set of pairs $(x,y) \in [q]^{Z_{i,j}}$ for which, for all $(u,v) \in [q]^2$, we have $|\{ \ell \in Z_{i,j} \mid (x[\ell],y[\ell])=(u,v) \}| = \tau(i,j,u,v)$. We can see that $(S_{\tau})_{i,j}$ is regular since it is defined independently of the order of the indices. Thus, $S_{\tau}=\otimes_{i,j} (S_{\tau})_{i,j}$, which is a Kronecker product of regular sets $(S_{\tau})_{i,j}$, is also regular.

3 and 4. Proved in the statement.
\end{proof}

\begin{lemma}\label{lem:best_tau}
Let $N$ be a sufficient large integer.
Let $(x^*,y^*)\in V_N$ be any pair.
Then there exists a valid mapping $\tau$ such that
\[
\mathcal{D}_{\tau} \cdot |S_{\tau}| \cdot {\effQ}_{\tau}^2 \geq \frac{1}{(N+1)^{q^4}} \gamma_{Q_x,Q_y}^N,
\]
where $\gamma_{Q_x,Q_y}$ is defined in Definition~\ref{def:effQ_gamma}.
\end{lemma}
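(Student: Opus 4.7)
\textbf{Proof plan for Lemma~\ref{lem:best_tau}.} The plan is to use a straightforward averaging argument over the set of valid mappings $\tau$. Concretely, I would compute the sum $\sum_{\text{valid } \tau} \mathcal{D}_\tau \cdot |S_\tau| \cdot \effQ_\tau^{2}$ exactly, show that it equals $\gamma_{Q_x,Q_y}^{N}$, and then divide by the (at most $(N+1)^{q^4}$) number of valid mappings to get the existence claim.

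First I would use the three explicit product formulas from Fact~\ref{fac:tau_partition} to observe that
\[
\mathcal{D}_\tau \cdot \effQ_\tau^{2} \;=\; \prod_{i,j,u,v \in [q]} \bigl( Q_x[i,u]\,Q_y[j,v]\,\effQ_{u,v}(Q_x,Q_y,T)^{2} \bigr)^{\tau(i,j,u,v)}.
\]
For $|S_\tau|$, I would decompose according to the blocks $Z_{i,j}=\{\ell: (x^*[\ell],y^*[\ell])=(i,j)\}$ (which each have size $P[i,j]\cdot N$ because $(x^*,y^*)\in V_N$), and note that choosing $(x,y)\in S_\tau$ amounts independently, for each $(i,j)$, to distributing the $P[i,j]\cdot N$ indices of $Z_{i,j}$ among the $q^2$ classes $(u,v)$ with prescribed multiplicities $\tau(i,j,u,v)$. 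This gives the multinomial factorization
\[
|S_\tau| \;=\; \prod_{i,j \in [q]} \binom{P[i,j]\cdot N}{\{\tau(i,j,u,v)\}_{u,v \in [q]}}.
\]

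Next I would carry out the sum over $\tau$. Since the constraints ``$\sum_{u,v}\tau(i,j,u,v)=P[i,j]\cdot N$'' are independent across $(i,j)$, the sum factorizes, and for each fixed $(i,j)$ the inner sum is exactly the multinomial theorem applied to the $q^2$-term expression $\sum_{u,v} Q_x[i,u]Q_y[j,v]\,\effQ_{u,v}(Q_x,Q_y,T)^{2}$ raised to the power $P[i,j]\cdot N$. Collecting factors and comparing with Definition~\ref{def:effQ_gamma} yields
\[
\sum_{\text{valid } \tau} \mathcal{D}_\tau \cdot |S_\tau| \cdot \effQ_\tau^{2} \;=\; \prod_{i,j}\Bigl(\sum_{u,v} Q_x[i,u]Q_y[j,v]\,\effQ_{u,v}(Q_x,Q_y,T)^{2}\Bigr)^{P[i,j]\cdot N} \;=\; \gamma_{Q_x,Q_y}^{\,N}.
\]

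Finally, since a valid mapping is specified by $q^4$ integers in $\{0,1,\ldots,N\}$, the number of valid $\tau$ is at most $(N+1)^{q^4}$, so by averaging some $\tau$ must achieve at least a $(N+1)^{-q^4}$ fraction of the total, giving the desired bound. The only mild subtlety is purely bookkeeping: making sure the partition $\{S_\tau\}$ in Fact~\ref{fac:tau_partition} is used to index exactly the terms that appear in the multinomial expansion (which it is, since the definition of $S_\tau$ records precisely the same joint statistics $\tau(i,j,u,v)$ that the multinomial expansion groups by). No serious obstacle is expected here; the main care is just in lining up the three product formulas from Fact~\ref{fac:tau_partition} with the multinomial expansion so that the telescoping to $\gamma_{Q_x,Q_y}^N$ is transparent.
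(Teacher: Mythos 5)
Your proposal is correct and takes essentially the same route as the paper: both bound $\max_\tau \mathcal{D}_\tau \cdot |S_\tau| \cdot \effQ_\tau^2$ by the average over the at most $(N+1)^{q^4}$ valid mappings and then show the total sum equals $\gamma_{Q_x,Q_y}^N$. The only cosmetic difference is in how the sum is evaluated — you use the explicit multinomial factorization of $|S_\tau|$ together with the multinomial theorem, while the paper rewrites the sum as $\E_{(x,y)\sim \mathcal{D}^{x^*,y^*}}\left[\effQ^2_{x,y}(T^{\otimes N})\right]$ and factorizes the expectation coordinate-by-coordinate; these are the same computation in different guises.
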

\begin{proof}
\begin{align*}
    & ~ \max_{\tau} \mathcal{D}_{\tau} \cdot |S_{\tau}| \cdot {\effQ}_{\tau}^2 \\
    \geq & ~ \frac{1}{(N+1)^{q^4}} \sum_{\tau} \mathcal{D}_{\tau} \cdot |S_{\tau}| \cdot {\effQ}_{\tau}^2\\
    = & ~ \frac{1}{(N+1)^{q^4}}\sum_{\tau} \E_{(x,y)\sim \mathcal{D}^{x^*,y^*}}[{\effQ}_{x,y}^2(T^{\otimes N})\cdot \big[(x,y)\in S_{\tau}\big]_{\mathbf{1}} ]\\
    = & ~ \frac{1}{(N+1)^{q^4}}\E_{(x,y)\sim \mathcal{D}^{x^*,y^*}}[{\effQ}^2_{x,y}(T^{\otimes N})],
\end{align*}
where the first step follows because there are at most $(N+1)^{q^4}$ different valid $\tau$, the second step follows from part~\ref{fac:part3} and part~\ref{fac:part4} of Fact~\ref{fac:tau_partition}, the third step follows from part~\ref{fac:part1} of Fact~\ref{fac:tau_partition}. We conclude by computing that
\begin{align*}
& ~ \E_{(x,y)\sim \mathcal{D}_{x^*,y^*}}[{\effQ}^2_{x,y}(T^{\otimes N})]\\
= & ~ \E_{ \substack{(x_1,y_1)\sim \mathcal{D}^{x_1^*,y_1^*}\\~\vdots~\\(x_N,y_N)\sim\mathcal{D}^{x_N^*,y_N^*}}}[ \Pi_{i=1}^N {\effQ}^2_{x_i,y_i}(T)]\\
= & ~ \prod_{i=1}^N\E_{ (x_i,y_i)\sim\mathcal{D}^{x_i^*,y_i^*}} [{\effQ}^2_{x_i,y_i}(T)] \\
= & ~ \prod_{i,j\in[q]}\left(\sum_{u,v\in[q]} Q_x[i,u]Q_y[j,v]{\effQ}_{u,v}^2(T)\right)^{P_{i,j}N}
\\= & \gamma_{Q_x,Q_y}.\qedhere
\end{align*}

\end{proof}

We also give this lemma for later use.
\begin{lemma}\label{lem:mapto1}
Suppose $q$ is even, and $P \in \R^{q\times q}$ is a joint probability matrix. There exist two mappings $h,g: [q]\rightarrow \{-1,1\}$ and a constant $\rho>0$ such that: 
\begin{itemize}
    \item If $b_0, b_1$ are sampled independently from $[q]$, and at least one of them is sampled uniformly, then $\E[g(b_0) \cdot h(b_1)] = 0$, but
    \item If $b_0, b_1$ are sampled from $[q]$ according to $P$ (so $(b_0, b_1) = (i,j)$ with probability $P[i,j]$), then $\E [g(b_0) \cdot h(b_1)] = \rho$.
\end{itemize}

\end{lemma}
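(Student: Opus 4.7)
The plan is to convert the two bullets into linear-algebraic conditions on balanced $\pm 1$ vectors and then use a spanning argument to produce $g$ and $h$.

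First, I would show that the first bullet is equivalent to $g$ and $h$ each being \emph{balanced}, meaning $\sum_{i} g(i) = 0 = \sum_{j} h(j)$. If $b_0$ is uniform on $[q]$ and $b_1$ is any independent random variable, then $\E[g(b_0) h(b_1)] = \E[g(b_0)] \cdot \E[h(b_1)] = \tfrac{1}{q}\bigl(\sum_i g(i)\bigr) \cdot \E[h(b_1)]$, which vanishes for every distribution on $b_1$ iff $\sum_i g(i) = 0$; the symmetric argument forces $\sum_j h(j) = 0$. Since $q$ is even, balanced $\pm 1$ vectors exist.

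Second, I would reduce the second bullet to showing that a certain bilinear form is nonzero on the subspace $U := \{v \in \R^q : \mathbf{1}^T v = 0\}$. Working in the standard $P$-light-bulb setting where $P$ has uniform marginals, let $P' := P - \tfrac{1}{q^2} J$ with $J$ the all-ones matrix, so that $P' \mathbf{1} = 0$ and $\mathbf{1}^T P' = 0$. For balanced $g, h$ we have $g^T J h = 0$, hence $\E[g(b_0) h(b_1)] = g^T P h = g^T P' h$. Since $P \neq \tfrac{1}{q^2} J$, we have $P' \neq 0$, and together with the kernel properties this shows that the bilinear form $(u,v) \mapsto u^T P' v$ does not vanish identically on $U \times U$.

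Finally, I would verify that the balanced vectors in $\{-1,1\}^q$ span $U$: for any distinct $i, j \in [q]$, choose balanced $v^{(1)}, v^{(2)} \in \{-1,1\}^q$ that agree on $[q] \setminus \{i,j\}$ but satisfy $v^{(1)}(i) = -v^{(2)}(i) = 1$ and $v^{(1)}(j) = -v^{(2)}(j) = -1$ (possible since $q - 2$ is even, so the remaining coordinates can be filled with a balanced sign pattern). Then $v^{(1)} - v^{(2)} = 2(e_i - e_j)$, and these differences span $U$. Combining the three steps, some balanced $g, h \in \{-1,1\}^q$ satisfy $g^T P h \neq 0$; replacing $g$ by $-g$ if this quantity is negative yields $\rho = g^T P h > 0$. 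The main obstacle is really just the spanning claim, which crucially uses that $q$ is even; the rest is direct linear algebra.
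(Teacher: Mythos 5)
Your proof is correct, but it takes a genuinely different route from the paper's. The paper argues greedily and constructively: it picks a non-constant row of $P$, defines $g$ by sending the column indices of the $q/2$ largest entries of that row to $+1$ and the rest to $-1$, sets $v := Pg$, defines $h$ by median-thresholding $v$, and takes $\rho = \langle v,h\rangle$; positivity follows because a median-threshold $\pm1$ vector has strictly positive inner product with any non-constant vector. Your argument is instead an abstract existence proof: balancedness handles the first bullet, the centered matrix $P' = P - \tfrac{1}{q^2}J$ satisfies $P'\mathbf{1}=0$ and $\mathbf{1}^\top P'=0$ and is nonzero, hence the form $(u,v)\mapsto u^\top P' v$ cannot vanish on $U\times U$, and balanced $\pm1$ vectors span $U$ (your $e_i-e_j$ trick, which indeed uses $q$ even), so some balanced pair works after a sign flip. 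What each buys: the paper's proof hands the algorithm explicit $g,h,\rho$ immediately, whereas yours is non-constructive as written (though easily made effective by testing the $O(q^2)$ spanning vectors pairwise) but isolates more cleanly the hypothesis actually needed, namely that the centered form is nonzero. On that point, note that you imported uniform marginals of $P$, which is not in the lemma statement; however, the paper's own proof tacitly needs the same thing, since its step ``$\langle v,h\rangle>0$ because $v$ is non-zero'' really requires $v$ to be non-constant, which follows when the column marginals are uniform (then $\sum_i v[i]=0$) but can fail otherwise --- e.g.\ for $q=2$ and $P$ with rows $(0.5,0.25)$ and $(0.25,0)$, every balanced choice gives $g^\top P h = 0$, so some such assumption (satisfied by the $P$ arising in the light bulb application) is needed in either proof; you are simply explicit about it. One minor nit: in your step 1, the ``only if'' direction also uses that $h$ takes values in $\{-1,1\}$ (so $\E[h(b_1)]$ cannot vanish for every distribution of $b_1$), though only the ``if'' direction is needed for the existence claim.
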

\begin{proof}
We construct $g,h$ in a greedy fashion. Pick any row of $P$ that is not uniform. Such a row exists since we assume $P$ is not the uniform matrix. Fix $g$ to map the column indices of the $q/2$ largest entries in that row to $1$, and the others to $-1$. Let $v:=P\cdot g\in \R^{q}$. Fix $h$ to map the indices of the $q/2$ largest entries of $v$ to $1$ and the others to $-1$.

If $b_0$ or $b_1$ is sampled uniformly from $[q]$, then $g(b_0)$ or $h(b_1)$, respectively, is uniformly chosen from $\{-1,1\}$ since $h$ and $g$ each map half of $[q]$ to $1$ and the other half to $-1$. Hence, in this case, if $b_0$ and $b_1$ are sampled independently, then $\E[g(b_0) \cdot h(b_1)] = 0$.

Meanwhile, by our construction of $h$,
\[
\E_{(b_0, b_1) \sim P}[g(b_0)\cdot h(b_1)] = \langle v, h\rangle > 0,
\]
which is strictly larger than $0$ because $v$ is non-zero. We may thus pick $\rho = \langle v, h\rangle$.
\end{proof}

\begin{algorithm}[!t]\caption{}\label{alg:alg1} 
\begin{algorithmic}[1]
\State \textbf{Input}
\State Let $T$ be a $\langle q,q,q_k \rangle$-sized tensor.
\State Let $x_1,\cdots,x_n,y_1,\cdots,y_n \in [q]^d$ be $2n$ vectors with one planted pair $(x^*,y^*)$.
\State Let $P\in \R^{q \times q}$ be joint-probability symmetric matrix for planted pair, that every bit $(x^*[l],y^*[l])$ is sampled according to $P$.
\State \textbf{Algorithm}
\State Let $\gamma$ and its corresponding stochastic matrices $Q_x,Q_y\in \R^{q\times q}$ be from Definition~\ref{def:effQ_gamma}.
\State Let $N$ be such that $q^{2N}\gamma^{N} = 20n\cdot (N+1)^{q^4}$
\State $T'\leftarrow (T\otimes T^{\top})^{\otimes N}$
\For{$g$ a power of $2$ from $1,2,4\cdots,\max_{i,j}\eff_{i,j}(T')$} \label{line:for_start}
    \State Prepare $q^{2N}$ sets indexed by vector in $[q]^{2N}$ for both $x$ and $y$, $X_1,\cdots,X_{q^{2N}}$, $Y_1,\cdots,Y_{q^{2N}}$.
    \State $c\leftarrow q^{2N}\cdot g/n$ \label{line:c}
    \State for each $x_i$, independently generate $c$ indices $i_1,\cdots,i_{c}$ from 
     $\mathcal{D}^{x_i}_{Q_x^{\otimes N}\otimes Q_y^{\otimes N}}$ and add $\{i\}$ to every $X_{i_j}$.
     \label{line:make_copy_1}
    \State for each $y_i$, independently generate $c$ indices $i_1,\cdots,i_{c}$ from
     $\mathcal{D}^{y_i}_{Q_y^{\otimes N}\otimes Q_x^{\otimes N}}$ and add $\{i\}$ to every $Y_{i_j}$. \label{line:make_copy_2}
    \State Find correlated pair under such groupings. \Comment{ See details in Section~\ref{sec:step4}} 
\EndFor \label{line:for_end}
\end{algorithmic}
\end{algorithm}

\subsection{Proof of Theorem~\ref{thm:main_new_2}} \label{sec:proof}
In Lemma~\ref{lem:best_tau} from the previous section, we found the best mapping $\tau$, and we aimed to use $S_{\tau}$ to detect the correlated pair. However, although $S_{\tau}$ is a regular set, it still can be ``skewed'' (in the sense of Section~\ref{subsec:skew}). In this section, at a high level, we are going to symmetrize $S_{\tau}$ using a Kronecker product with its transpose $S_{\tau}^{\top}$ to avoid skew. We will prove Theorem~\ref{thm:main_new_2} in four steps.

\subsubsection{Step 1. General start}

We may assume $\gamma > 1/q$, since otherwise, the bound on $\omega_P$ we claim in Theorem~\ref{thm:main_new_2} is worse than the trivial exponent of $2$. We first fix $N$ such that 
\begin{align}\label{eq:q2n}
    q^{2N} = 20n\left(\frac{1}{\gamma}\right)^{N}(N+1)^{q^4}.
\end{align}
Note that $N = \Theta_q(\log n)$.

For simplicity, let us assume that the input vectors are long enough; as discussed in the introduction and early proofs, that one can use the `compressed matrices' method introduced in~\cite[Section 4.2]{karppa2018faster} to `expand' lower-dimensional vectors without losing too much correlations on correlated pair. 

Let the planted pair be $(x^*,y^*)$. We abuse notation here and also write $x^*,y^*\in [q]^{2N}$ to denote the first $2N$ coordinates of the planted pair. We write $x^* = x_1^* \circ x_2^*$, where $x^*_1,x^*_2\in [q]^N$, and write $y^* = y_1^* \circ y_2^*$ in the same way. (Here $\circ$ denotes vector concatenation.)

We will use Algorithm~\ref{alg:alg1} to solve the problem. As suggested by line~\ref{line:for_start} to line~\ref{line:for_end} of the algorithm, our goal here is to copy every vector $c$ times and partition them into $q^{2N}$ groups, with each group containing roughly $g$ vectors (so $nc/g\approx q^{2N}$). Our algorithm enumerates over $g$ from $1$ to $(\max_{i,j} \eff_{i,j})^{2N}$ by doubling each time. We are going to prove that we will successfully find the correlated pair for one of these choices of $g$.

We set aside the first $2N$ entries of each input vector which we will use to decide the grouping. We will later use fresh entries from the input vectors in later parts of the algorithm (when we perform matrix multiplication) so that there is no correlation between the independent random vectors which are placed in the same group.

With probability $1/(q^N)^{o(1)}$, we have both pairs $(x^*_1,y^*_1),(x^*_2,y^*_2) \in V_N$. We will assume this happens in the later analysis, since it only cost a $1/(q^N)^{o(1)}$ overhead on the running time by repeating the algorithm using fresh bits. 

\subsubsection{Step 2. Symmetrizing $S_{\tau}$}\label{sec:symmetrizing2}

Recall that $Q_x,Q_y \in \R^{q\times q}$ are the given stochastic matrices. We apply $Q_x$ to $x_1^*$ and $Q_y$ to $y_1^*$, and let $\tau_1$ be the best $\tau$ chosen from Lemma~\ref{lem:best_tau} with respect to $x_1^*,y_1^*,Q_x,Q_y$. For ease of presentation, we will still use notation $S_{\tau}$, $\mathcal{D}_{\tau}$, $\effQ_{\tau}$ in the Lemma~\ref{lem:best_tau} and note that they are with respect to $\tau_1$ and $x_1^*$, $y_1^*$, $Q_x$, $Q_y$, i.e., we have
\begin{align}\label{eq:StauDtaueffQtau}
   \notag S_{\tau} = & ~  S_{\tau_1}^{x_1^*,y_1^*}, ~~~\textrm{and}\\ 
    \forall (x,y)\in S_{\tau},~
    \mathcal{D}_{\tau} = & ~ \mathcal{D}_{Q_x^{\otimes N},Q_y^{\otimes N}}^{x_1^*,y_1^*}(x,y),\\ \notag
    \forall (x,y)\in S_{\tau}, ~\effQ_{\tau} = & ~ \effQ_{x,y}(Q_x^{\otimes N},Q_y^{\otimes N},T^{\otimes N}).
\end{align}

Now, consider symmetrizing $S_{\tau}$ as follows. Let $\tau_2$ be the (``transposed'') mapping such that, for all $i,j,u,v \in [q]$, we have $\tau_2(i,j,u,v) := \tau_1(j,i,v,u)$. Since $P$ is a symmetric matrix, we know that $\tau_2$ is also \emph{valid}. Define $S := S_{\tau_1}^{x_1^*,y_1^*}\otimes S_{\tau_2}^{x_2^*,y_2^*} \subseteq [q]^{2N}$, and we will show in Claim~\ref{cla:transpose}, $S = S_{\tau}\otimes S_{\tau}^{\top}$. For the purpose of symmetricity, we will further apply $Q_y$ to $x_2$ and $Q_x$ to $y_2$. We are able to do so because $P$ is symmetric and thus, $(y_2^*,x_2^*)$ is also in $V_N$.

We will prove that, with a careful choice of $c$ (the number of copies we make of each vector), there will be a decent probability that a copy of the planted pair falls into some bucket in $S$. To do this, we first need to prove $S$ is not skewed. The following are some useful fact about $S$.

\begin{claim}\label{cla:transpose}
$(S_{\tau_1}^{x_1^*,y_1^*})^{\top} = S_{\tau_2}^{y_1^*,x_1^*}$.
\end{claim}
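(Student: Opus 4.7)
The plan is to directly unwind the two definitions involved and check that the two sets contain exactly the same pairs. By definition, the transpose of a set of ordered pairs $(x,y)$ is obtained by swapping the two coordinates, i.e.\ $(S_{\tau_1}^{x_1^*,y_1^*})^\top = \{(y,x) : (x,y)\in S_{\tau_1}^{x_1^*,y_1^*}\}$. So I only need to verify that, for any pair $(y,x)\in[q]^N\times[q]^N$,
\[
(x,y)\in S_{\tau_1}^{x_1^*,y_1^*} \iff (y,x)\in S_{\tau_2}^{y_1^*,x_1^*}.
\]

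First I would write out the count that defines membership in each side. Membership of $(x,y)$ in $S_{\tau_1}^{x_1^*,y_1^*}$ means
\[
|\{l : (x_1^*[l],y_1^*[l])=(i,j)\text{ and }(x[l],y[l])=(u,v)\}| = \tau_1(i,j,u,v) \quad \text{for all } i,j,u,v\in[q].
\]
Membership of $(y,x)$ in $S_{\tau_2}^{y_1^*,x_1^*}$ means
\[
|\{l : (y_1^*[l],x_1^*[l])=(i,j)\text{ and }(y[l],x[l])=(u,v)\}| = \tau_2(i,j,u,v) \quad \text{for all } i,j,u,v\in[q].
\]
The second count equals $|\{l : (x_1^*[l],y_1^*[l])=(j,i)\text{ and }(x[l],y[l])=(v,u)\}|$, which by the first equation equals $\tau_1(j,i,v,u)$. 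Since $\tau_2(i,j,u,v)$ is defined to be $\tau_1(j,i,v,u)$, this matches in one direction. The reverse direction is symmetric: from the $\tau_2$-count one re-indexes via $(i,j,u,v)\mapsto(j,i,v,u)$ to recover the $\tau_1$-count. Hence membership is equivalent, which gives the claimed set equality.

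There isn't really a hard step here — the claim is a bookkeeping identity. The only place where one has to be slightly careful is to notice that the definition of $\tau_2$ swaps both the $(i,j)$ pair (coming from $(x_1^*,y_1^*)$) and the $(u,v)$ pair (coming from $(x,y)$) simultaneously, which is exactly what swapping the two coordinates in each of the reference pair and the candidate pair demands. Once that is observed the equivalence is immediate, so I would keep the write-up to a handful of lines.
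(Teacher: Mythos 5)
Your proof is correct and follows essentially the same route as the paper's: both unwind the definition of $S_\tau^{x^*,y^*}$, use $\tau_2(i,j,u,v)=\tau_1(j,i,v,u)$, and re-index $(i,j,u,v)\mapsto(j,i,v,u)$; you phrase it as a membership equivalence while the paper writes a chain of set equalities, which is an immaterial difference.
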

\begin{proof}
\begin{align*}
& ~ (S_{\tau_1}^{x_1^*,y_1^*})^{\top} \\
= & ~ \{ (x,y)\in [q]^{2N} : \forall i,j,u,v\in[q]~\left| \{ \ell\in [N] : x_1^*[\ell]=i~\textrm{and}~y_1^*[\ell]=j~\textrm{and}~x[\ell]=v~\textrm{and}~y[\ell]=u\}\right| = \tau_1(i,j,u,v) \} \\
= & ~ \{ (x,y)\in [q]^{2N} : \forall i,j,u,v\in[q]~\left| \{ \ell\in [N] : x_1^*[\ell]=i~\textrm{and}~y_1^*[\ell]=j~\textrm{and}~x[\ell]=v~\textrm{and}~y[\ell]=u\}\right| = \tau_2(j,i,v,u) \} \\
= & ~ \{ (x,y)\in [q]^{2N} : \forall i,j,u,v\in[q]~\left| \{ \ell\in [N] : x_1^*[\ell]=j~\textrm{and}~y_1^*[\ell]=i~\textrm{and}~x[\ell]=u~\textrm{and}~y[\ell]=v\}\right| = \tau_2(i,j,u,v) \} \\
= & ~ S_{\tau_2}^{y_1^*,x_1^*},
\end{align*}
where the first step is by definition, the second step replaces $\tau_1$ with $\tau_2$, and the third step is by switching variables.
\end{proof}

\begin{definition}
For a ground set $U$, we say two sets $S_1, S_2 \subseteq U^2$ are \emph{isomorphic} if they are equal up to permuting first and second coordinates, i.e., there are permutations $\pi_1, \pi_2 : U \to U$ such that, for all $(a,b) \in U^2$, $(a,b) \in S_1$ if and only if $(\pi_1(a), \pi_2(b)) \in S_2$.
\end{definition}

\begin{claim}\label{cla:self_tensor_property}
1. $V_x(S),V_y(S)\leq |S|^{1.5}$;~~~~~~~~~~~ ($V_x(S)$, $V_y(S)$ is defined in Definition~\ref{def:v_x_v_y})

2. $|S| = |S_{\tau}|^2$;

3. $\forall (x,y)\in S$, $\mathcal{D}^{x^*}_{Q_x^{\otimes N}\otimes Q_y^{\otimes N}}(x) = \mathcal{D}^{y^*}_{Q_y^{\otimes N}\otimes Q_x^{\otimes N}}(y) = \mathcal{D}_{\tau}$;

4. Let $\wt{T}:=T^{\otimes N}$. For all $(x,y)\in S$, we have ${\effQ}_{x,y}(Q_x^{\otimes N}\otimes Q_y^{\otimes N}, Q_y^{\otimes N}\otimes Q_x^{\otimes N}, \wt{T}\otimes \wt{T}^{\top}) = \effQ_{\tau}^2$.
\end{claim}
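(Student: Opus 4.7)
The plan is to derive each of the four items from the product structure $S = S_{\tau_1}^{x_1^*,y_1^*} \otimes S_{\tau_2}^{x_2^*,y_2^*}$, the identity $\tau_2(i,j,u,v) = \tau_1(j,i,v,u)$, the symmetry of $P$, and the regularity of $S_\tau$ established in Fact~\ref{fac:tau_partition}(2). The key preliminary observation is that for any $(\tilde x, \tilde y) \in V_N$ and any valid mapping $\sigma$, the set $S_\sigma^{\tilde x, \tilde y}$ is determined, up to a permutation of the $N$ coordinates, by $\sigma$ and $P$ alone---one can simply match positions by their $(i,j)$ label. In particular $|S_\sigma^{\tilde x, \tilde y}|$, $V_x(S_\sigma^{\tilde x, \tilde y})$, and $V_y(S_\sigma^{\tilde x, \tilde y})$ depend only on $\sigma$ and $P$. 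Combining this with Claim~\ref{cla:transpose} and the symmetry $P^\top = P$ (which ensures $(y_1^*, x_1^*) \in V_N$), I obtain an isomorphism $S_{\tau_2}^{x_2^*,y_2^*} \cong S_{\tau_2}^{y_1^*,x_1^*} = (S_\tau)^\top$, together with the size equality $|S_{\tau_2}^{x_2^*,y_2^*}| = |S_\tau|$; Part 2 is then immediate from multiplicativity of sizes under Kronecker products.

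Part 1 uses multiplicativity of $V_x$ under Kronecker products, together with the isomorphism above, to get $V_x(S) = V_x(S_\tau) \cdot V_x((S_\tau)^\top) = V_x(S_\tau) \cdot V_y(S_\tau)$; since $S_\tau$ is regular, Lemma~\ref{lem:symmetrize} then bounds this by $|S_\tau|^3 = |S|^{1.5}$. The argument for $V_y(S)$ is identical. For Part 3, since $Q_x^{\otimes N} \otimes Q_y^{\otimes N}$ acts coordinate-wise on $x^* = x_1^* \circ x_2^*$, the distribution factors as $\mathcal{D}^{x^*}_{Q_x^{\otimes N}\otimes Q_y^{\otimes N}}(x) = \prod_\ell Q_x[x_1^*[\ell], x^{(1)}[\ell]] \cdot \prod_\ell Q_y[x_2^*[\ell], x^{(2)}[\ell]]$ for $x = x^{(1)} \circ x^{(2)}$. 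Grouping positions by their $\tau_1$- and $\tau_2$-types respectively, and then substituting $\tau_2(i,j,u,v) = \tau_1(j,i,v,u)$, the two products combine to exactly $\prod_{i,j,u,v}(Q_x[i,u]\,Q_y[j,v])^{\tau_1(i,j,u,v)} = \mathcal{D}_\tau$. The computation for $\mathcal{D}^{y^*}_{Q_y^{\otimes N}\otimes Q_x^{\otimes N}}(y)$ is symmetric.

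For Part 4, I would use multiplicativity of $\effQ$ under Kronecker products to write the target quantity as
\[
\effQ_{x^{(1)},y^{(1)}}(Q_x^{\otimes N}, Q_y^{\otimes N}, \wt T) \cdot \effQ_{x^{(2)},y^{(2)}}(Q_y^{\otimes N}, Q_x^{\otimes N}, \wt T^\top).
\]
The first factor equals $\effQ_\tau$ directly by Fact~\ref{fac:tau_partition}(4). For the second, the analogous expansion on $S_{\tau_2}^{x_2^*,y_2^*}$ gives $\prod_{i,j,u,v}\effQ_{u,v}(Q_y, Q_x, T^\top)^{\tau_2(i,j,u,v)}$, and substituting $\tau_2(i,j,u,v) = \tau_1(j,i,v,u)$ followed by the single-coordinate reflection identity $\effQ_{v,u}(Q_y, Q_x, T^\top) = \effQ_{u,v}(Q_x, Q_y, T)$ collapses it to $\effQ_\tau$ as well. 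The main bookkeeping hurdle is verifying this reflection identity, which amounts to matching numerators and denominators in Definition~\ref{def:effQ_gamma} after a consistent relabeling of $X$- and $Y$-indices (and an index swap $k \leftrightarrow k'$ in the denominator sum); once that is in hand, multiplying the two factors gives $\effQ_\tau^2$.
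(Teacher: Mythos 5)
Your proposal is correct and follows essentially the same route as the paper's own proof: the coordinate-permutation isomorphism between $S_\sigma^{\tilde x,\tilde y}$ sets for pairs in $V_N$, Claim~\ref{cla:transpose} together with the symmetry of $P$, multiplicativity of $|\cdot|$, $V_x$, $V_y$, $\mathcal{D}$, and $\effQ$ under Kronecker products, regularity plus Lemma~\ref{lem:symmetrize} for Part 1, and the substitution $\tau_2(i,j,u,v)=\tau_1(j,i,v,u)$ combined with the reflection identity $\effQ_{v,u}(Q_y,Q_x,T^{\top})=\effQ_{u,v}(Q_x,Q_y,T)$ for Parts 3 and 4. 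The only step you defer as bookkeeping (verifying that reflection identity from Definition~\ref{def:effQ_gamma}) is likewise asserted without detail in the paper, so nothing is missing relative to the paper's level of rigor.
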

\begin{proof}
\textbf{Part 1 and 2.}

For any valid $\tau$ and $(x_1,y_1),(x_2,y_2)\in V_N$, the sets $S^{x_1,y_1}_{\tau}$ and $S^{x_2,y_2}_{\tau}$ are isomorphic, since the definition of $S_{\tau}$ does not depend on the order of the $N$ indices. Therefore, because both $(y_1^*,x_1^*)$ and $(x_2^*,y_2^*)$ are in $V_N$, we know that $S^{y_1^*,x_1^*}_{\tau_2}$ is isomorphic to $S^{x_2^*,y_2^*}_{\tau_2}$. By Claim~\ref{cla:transpose}, 
$(S_{\tau_2}^{y_1^*,x_1^*})^{\top} = S_{\tau_1}^{x_1^*,y_1^*}$, so
\begin{align*}
    |S| = |S_{\tau_1}^{x_1^*,y_1^*}|\cdot |S_{\tau_2}^{x_2^*,y_2^*}| = |S_{\tau_1}^{x_1^*,y_1^*}|\cdot |S_{\tau_2}^{y_1^*,x_1^*}| = |S_{\tau_1}^{x_1^*,y_1^*}|^2,
\end{align*}
and
\begin{align*}
    V_x(S) = V_x(S_{\tau_1}^{x_1^*,y_1^*}) \cdot V_x(S_{\tau_2}^{x_2^*,y_2^*}) = V_x(S_{\tau_1}^{x_1^*,y_1^*})\cdot V_y(S_{\tau_1}^{x_1^*,y_1^*}) \leq |S_{\tau_1}^{x_1^*,y_1^*}|^3 = |S|^{1.5},
\end{align*}
where the third step follows because $S_{\tau_1}^{x_1^*,y_1^*}$ is regular (by part~\ref{fac:part2} of Fact~\ref{fac:tau_partition}) and by Lemma~\ref{lem:symmetrize}.

$V_y(S)\leq |S|^{1.5}$ follows for the same reason.

\textbf{Part 3.}
For all $(x,y) \in S$, 
\begin{align*}
    \mathcal{D}^{x^*}_{Q_x^{\otimes N}\otimes Q_y^{\otimes N}}(x) = 
    \prod_{i,j,u,v}(Q_x)_{i,u}^{\tau_1(i,j,u,v)}
    \prod_{i,j,u,v}(Q_y)_{i,u}^{\tau_2(i,j,u,v)} 
    = 
    \prod_{i,j,u,v}(Q_x)_{i,u}^{\tau_1(i,j,u,v)}
    \prod_{i,j,u,v}(Q_y)_{j,v}^{\tau_1(i,j,u,v)}
    =
    \mathcal{D}_{\tau},
\end{align*}
where the second step is by the symmetry of $\tau_1$ and $\tau_2$, and the last step is by the definition of $\mathcal{D}_{\tau}$ (part~\ref{fac:part3} of Fact~\ref{fac:tau_partition}). Similarly,

\begin{align*}
    \mathcal{D}^{y^*}_{Q_y^{\otimes N}\otimes Q_x^{\otimes N}}(y) 
    = \prod_{i,j,u,v}(Q_y)_{j,v}^{\tau_1(i,j,u,v)}
    \prod_{i,j,u,v}(Q_x)_{j,v}^{\tau_2(i,j,u,v)}
    = \prod_{i,j,u,v}(Q_y)_{j,v}^{\tau_1(i,j,u,v)}
    \prod_{i,j,u,v}(Q_x)_{i,u}^{\tau_1(i,j,u,v)}
    = \mathcal{D}_{\tau}.
\end{align*}

\textbf{Part 4.} For all $x,y\in S$, we write $x = x_1\circ x_2$ and $y = y_1\circ y_2$ to partition them into two halves. Then, 
\[
{\effQ}_{x,y}(Q_x^{\otimes N}\otimes Q_y^{\otimes N}, Q_y^{\otimes N}\otimes Q_x^{\otimes N}, \wt{T}\otimes \wt{T}^{\top}) 
= {\effQ}_{x_1,y_1}(Q_x^{\otimes N}, Q_y^{\otimes N}, \wt{T})
\cdot {\effQ}_{x_2,y_2}(Q_y^{\otimes N}, Q_x^{\otimes N}, \wt{T}^{\top}).
\]

We can calculate that
\begin{align*}
{\effQ}_{x_2,y_2}(Q_y^{\otimes N}, Q_x^{\otimes N}, \wt{T}^{\top})
= & ~ \prod_{i,j,u,v}{\effQ}_{u,v}(Q_y,Q_x,T^{\top})^{\tau_2(i,j,u,v)}
= \prod_{i,j,u,v}{\effQ}_{v,u}(Q_x,Q_y,T)^{\tau_2(i,j,u,v)}\\
= & ~ \prod_{i,j,u,v}{\effQ}_{u,v}(Q_x,Q_y,T)^{\tau_1(j,i,u,v)}
= \prod_{u,v}{\effQ}_{u,v}(Q_x,Q_y,T)^{\sum_{i,j}\tau_1(j,i,u,v)}\\
= & ~ {\effQ}_{x_1,y_1}(Q_x,Q_y,T) = {\effQ}_{\tau},
\end{align*}
where the second step is because $\effQ_{u,v}(Q_y,Q_x,T^{\top}) = \effQ_{v,u}(Q_x,Q_y,T)$, the third step is because $\tau_2(i,j,u,v) = \tau_1(j,i,v,u)$ and switching $u$ and $v$, and the last step is by definition of $\eff_{\tau}$ (part~\ref{fac:part4} of Fact~\ref{fac:tau_partition}).

Therefore, the statement holds.
\end{proof}

\subsubsection{Step 3. Detect $(x^*,y^*)$ using $S$}
To simplify notation, we write the distribution $\mathcal{D}^{x}_{Q_x^{\otimes N}\otimes Q_y^{\otimes N}}$ as $\mathcal{D}^x$, omitting the matrix $Q_x^{\otimes N}\otimes Q_y^{\otimes N}$ acting on $x$. Similarly, we write $\mathcal{D}^{y^*}_{Q_y^{\otimes N}\otimes Q_x^{\otimes N}}$ as $\mathcal{D}^{y}$.

We let $S_x:= \{x \mid \mathcal{D}^{x^*}(x) = \mathcal{D}_{\tau}\}$ and $S_y:= \{y \mid \mathcal{D}^{y^*}(y) = \mathcal{D}_{\tau}\}$. By Claim~\ref{cla:self_tensor_property}, $S \subseteq S_x \otimes S_y$, and also $|S_x|=|S_y|$ follows by symmetry.

In line~\ref{line:make_copy_1},~\ref{line:make_copy_2} of our Algorithm~\ref{alg:alg1}, for each $x_i \in [q]^{2N}$ (also $y_i$), we make $c$ copies $i_1, \ldots, i_c$ independently drawn from $\mathcal{D}^{x_i}$, and put $x_i$ into the buckets $X_{i_1}, \ldots, X_{i_c}$. (If two copies turned out to be identical, we only put $x_i$ into that bucket once.) Let $g^*$ be the least power of $2$ that is larger than $\frac{n}{q^{2N} \cdot |S_{\tau}|\cdot D_{\tau}}$ (this number is roughly $\effQ_{\tau}^2$; see Eq.~\eqref{eq:g*}). Since our algorithm iterates over all $g$ which are powers of $2$, there's an iteration where $g = g^*$. In the remaining analysis, we focus on this case. Note that
\begin{align}\label{eq:l}
c = q^{2N}g^*/n \geq \frac{1}{|S_{\tau}|\cdot \mathcal{D}_{\tau}} \geq 1.
\end{align}

\begin{claim}\label{cla:x'_y'_in S}
Suppose $c\geq \frac{1}{|S_{\tau}|\cdot \mathcal{D}_{\tau}}$, then with $\geq 1/4$ probability, there's one copy $x'$ of $x^*$ and one copy $y'$ of $y^*$ such that $(x',y') \in S$.
\end{claim}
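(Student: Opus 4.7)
\textbf{Proof proposal for Claim~\ref{cla:x'_y'_in S}.}

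The plan is to apply the second moment method, in the same spirit as Lemma~\ref{lem:combrect2}, to the random variable $U$ that counts the number of pairs of copies $(x',y')$ falling in $S$. Concretely, let $x'_1,\dots,x'_c$ be the $c$ copies drawn i.i.d.\ from $\mathcal{D}^{x^*}$ and $y'_1,\dots,y'_c$ the $c$ copies drawn i.i.d.\ from $\mathcal{D}^{y^*}$, and set
\[
U := \sum_{a,b \in [c]} \bigl[(x'_a,y'_b)\in S\bigr]_{\mathbf{1}}.
\]
The goal is to show $\Pr[U>0]\ge 1/4$, and then $\Pr[U>0]\ge (\E[U])^2/\E[U^2]$ will do the job.

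First I would compute $\E[U]$. By Part~3 of Claim~\ref{cla:self_tensor_property}, every $(x,y)\in S$ satisfies $\mathcal{D}^{x^*}(x)=\mathcal{D}^{y^*}(y)=\mathcal{D}_{\tau}$, so $\Pr[(x'_a,y'_b)\in S] = |S|\cdot \mathcal{D}_{\tau}^2$. Combined with $|S|=|S_{\tau}|^2$ (Part~2) this gives $\E[U]=c^2|S_{\tau}|^2\mathcal{D}_{\tau}^2$. Writing $K := c\,|S_{\tau}|\,\mathcal{D}_{\tau}$, the hypothesis on $c$ becomes exactly $K\ge 1$, and $\E[U]=K^2$.

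Next I would compute $\E[U^2]$ by splitting the sum over $(a,b,a',b')$ into four cases: (i) $a=a',b=b'$; (ii) $a=a', b\ne b'$; (iii) $a\ne a', b=b'$; (iv) $a\ne a', b\ne b'$. Case (i) contributes $c^2\cdot|S|\mathcal{D}_{\tau}^2$. In case (ii) we condition on $x'_a$ and use independence of $y'_b,y'_{b'}$, again invoking Claim~\ref{cla:self_tensor_property} Part~3 to collapse $\mathcal{D}^{y^*}(y)$ and $\mathcal{D}^{x^*}(x)$ to $\mathcal{D}_{\tau}$ on the support of $S$; this gives $c^2(c-1)\mathcal{D}_{\tau}^3\,V_x(S)$. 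Case (iii) symmetrically gives $c^2(c-1)\mathcal{D}_{\tau}^3\,V_y(S)$. Case (iv) gives $c^2(c-1)^2\,|S|^2\mathcal{D}_{\tau}^4$. Using $V_x(S),V_y(S)\le |S|^{1.5}=|S_\tau|^3$ from Part~1 of Claim~\ref{cla:self_tensor_property}, every one of these terms is a power of $K$:
\[
\E[U^2] \;\le\; K^2 + 2K^3 + K^4.
\]

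Finally, the second moment bound yields $\Pr[U>0]\ge K^4/(K^2+2K^3+K^4)$, and it suffices to verify $3K^4 \ge K^2+2K^3$, i.e.\ $(3K+1)(K-1)\ge 0$, which is immediate from $K\ge 1$. This gives $\Pr[U>0]\ge 1/4$, finishing the proof. I do not foresee a real obstacle: the only subtlety is making sure that in the cross terms every factor $\mathcal{D}^{x^*}(x)$ or $\mathcal{D}^{y^*}(y)$ appearing in the computation is evaluated at a point in the support of $S$ so that Claim~\ref{cla:self_tensor_property} Part~3 applies and the $V_x,V_y$ bounds cleanly match the powers of $K$.
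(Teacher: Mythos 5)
Your proof is correct, and it reaches the stated $1/4$ bound cleanly; but it takes a somewhat different route from the paper. The paper argues in two stages: first it notes that the number of copies of $x^*$ landing in $S_x$ (respectively of $y^*$ in $S_y$) is binomial with mean at least $|S_x|/\sqrt{|S|}$, so with constant probability enough copies land there; then, conditioning on this, it observes that those copies are uniform in $S_x$ (resp.\ $S_y$) and invokes Lemma~\ref{lem:combrect2} (the second-moment lemma for uniformly random subsets, already used in Section~3) together with $V_x(S),V_y(S)\le|S|^{1.5}$ from Claim~\ref{cla:self_tensor_property}. You instead run the second moment method directly on the pair-count $U=\sum_{a,b}[(x'_a,y'_b)\in S]_{\mathbf{1}}$ over the i.i.d.\ copies, computing $\E[U]=K^2$ and $\E[U^2]\le K^2+2K^3+K^4$ with $K=c|S_\tau|\mathcal{D}_\tau\ge 1$, where the diagonal and the two partially-collided cases are controlled exactly by Parts~1--3 of Claim~\ref{cla:self_tensor_property} (in particular $\mathcal{D}^{x^*}(x)=\mathcal{D}^{y^*}(y)=\mathcal{D}_\tau$ on the support of $S$, which you correctly flag as the one point needing care). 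What each approach buys: the paper's version is shorter because it reuses Lemma~\ref{lem:combrect2}, but it pays for this with a conditioning step and an implicit coupling between i.i.d.\ samples (with possible repetitions) and the fixed-size uniform subsets of that lemma, and as written it stacks several ``constant probability'' events so the final constant is really only $\Omega(1)$ rather than exactly $1/4$; your direct computation is self-contained, sidesteps the conditioning and the sampling-model mismatch, and delivers the $1/4$ threshold exactly at $K=1$ via $(3K+1)(K-1)\ge 0$. Either constant suffices for the algorithm, which only needs a constant success probability before repetition.
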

\begin{proof}
The copies of the planted vector $x^*$ are drawn independently according to $\mathcal{D}^{x^*}$. Hence, the number of copies which fall into $S_x$ follows a binomial distribution with mean $c\cdot \mathcal{D}_{\tau} \cdot |S_x| \geq 1$. It follows that, with constant probability, there are at least $c\cdot \mathcal{D}_{\tau} \cdot |S_x|$ many copies of $x^*$ which fall into $S_x$. For the same reason, there is a constant probability that $c\cdot \mathcal{D}_{\tau} \cdot |S_y|$ many copies of $y^*$ fall into $S_y$.

Furthermore, since for all $x\in S_x,y\in S_y$, $\mathcal{D}^{x^*}(x) =\mathcal{D}^{y^*}(y) = \mathcal{D}_{\tau}$, it follows that: conditioned on some particular copies of $x^*$ falling into $S_x$, those copies will be independently uniformly random elements of $S_x$ (and similarly for $S_y$). Also, since $|S_x|=|S_y|$ and $c \cdot \mathcal{D}_{\tau}\cdot |S_x|\geq |S_x| / \sqrt{|S|}$ (by the choice of our $c$ and $|S|=|S_{\tau}|^2$) and $V_x(S),V_y(S)\leq |S|^{1.5}$ (Claim~\ref{cla:self_tensor_property}), it follows by Lemma~\ref{lem:combrect2} that with probability at least $1/4$, there is a pair of one copy of $x^*$ and one copy of $y^*$ which falls into $S$.
\end{proof}

\begin{lemma}\label{lem:bound_x_i_y_y}
Given $i,j \in [q]^{2N}$, let $|X_i|$ and $|Y_j|$ be the number of input vectors that placed a copy into $X_i$ and $Y_j$, respectively. Then, over the randomness of the first $2N$ coordinates of the input vectors and the process of making random copies, 
\begin{align*}
    \E[|X_i|] & ~ = g^*\cdot \Pi_{l=1}^{N} \partial_x(i_l) \cdot \Pi_{l=N+1}^{2N} \partial_y(i_l), ~~~ and\\
    \E[|Y_i|] & ~ = g^*\cdot \Pi_{l=1}^{N} \partial_y(j_l) \cdot \Pi_{l=N+1}^{2N} \partial_x(j_l).
\end{align*}
\end{lemma}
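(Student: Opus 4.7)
The plan is to compute each expectation directly by linearity of expectation, exploiting the coordinate-wise factorization of $\mathcal{D}^{x_k}_{Q_x^{\otimes N} \otimes Q_y^{\otimes N}}$ together with the fact that each input vector is uniform on $[q]^{2N}$.

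Fix $i = (i_1, \ldots, i_{2N}) \in [q]^{2N}$. By construction in Line~\ref{line:make_copy_1}, one may write
$$|X_i| \;=\; \sum_{k=1}^n \sum_{r=1}^c \mathbf{1}[i^{(k,r)} = i],$$
where $i^{(k,r)}$ denotes the $r$-th index drawn from $\mathcal{D}^{x_k}_{Q_x^{\otimes N} \otimes Q_y^{\otimes N}}$. (If one insists on the set-valued reading of $X_i$, the resulting count of distinct inputs differs from the above by a negligible collision term since $c \cdot q^{-2N} \ll 1$ in our parameter regime.) Because the copies are independent across $k$ and $r$, linearity gives $\E[|X_i|] = nc \cdot \Pr[i^{(1,1)} = i]$.

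To evaluate $\Pr[i^{(1,1)} = i]$, recall from Definition~\ref{def:Dxy} that $\mathcal{D}^{x_1}_{Q_x^{\otimes N} \otimes Q_y^{\otimes N}}$ is the product distribution which, coordinate by coordinate, maps $x_1[\ell]$ via $Q_x$ for $\ell \in [N]$ and via $Q_y$ for $\ell \in \{N+1,\ldots,2N\}$. Since $x_1$ has i.i.d.\ uniform coordinates in $[q]$, averaging first over the copy and then over $x_1[\ell]$ yields
$$\Pr[i^{(1,1)} = i] \;=\; \prod_{\ell=1}^{N} \frac{1}{q}\sum_{u \in [q]} Q_x[u, i_\ell] \;\cdot\; \prod_{\ell=N+1}^{2N} \frac{1}{q}\sum_{u \in [q]} Q_y[u, i_\ell] \;=\; \frac{1}{q^{2N}} \prod_{\ell=1}^{N} \partial_x(i_\ell) \prod_{\ell=N+1}^{2N} \partial_y(i_\ell),$$
where the last step uses the $\partial$-notation from Definition~\ref{def:effQ_gamma}. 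Substituting the value $c = q^{2N} g^*/n$ from Line~\ref{line:c} then cancels $q^{2N}$ and yields the stated formula for $\E[|X_i|]$.

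The computation of $\E[|Y_j|]$ is identical, with one bookkeeping change: in Line~\ref{line:make_copy_2} the copies of $y_k$ are drawn from $\mathcal{D}^{y_k}_{Q_y^{\otimes N} \otimes Q_x^{\otimes N}}$, so $Q_y$ governs the first $N$ coordinates and $Q_x$ the last $N$. Consequently $\partial_y$ appears in the first $N$ factors and $\partial_x$ in the last $N$, matching the second equation in the lemma. This proof has no genuine obstacle; it is a one-line linearity-of-expectation calculation. The only subtlety is the bookkeeping of which of $Q_x$ and $Q_y$ acts on which half of the $2N$ coordinates on the $x$-side versus the $y$-side, as dictated by the symmetrization performed in Section~\ref{sec:symmetrizing2}, which is precisely what produces the asymmetric products in the two formulas.
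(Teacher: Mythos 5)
Your proposal is correct and is essentially the paper's own proof: both compute $\Pr[\text{a single copy lands in bucket } i]$ coordinate-by-coordinate (averaging the uniform input coordinate against the relevant column of $Q_x$ or $Q_y$ to produce the $\partial$ factors and the $q^{-2N}$), then apply linearity of expectation over all $nc$ copies and substitute $g^* = nc/q^{2N}$. Even the subtlety you flag about counting distinct inputs versus copies with multiplicity is handled the same way (i.e., glossed over) in the paper's proof, so there is no substantive difference.
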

\begin{proof}
For every input vector $x$ other than the planted pair, $x$ is uniformly sampled from $[q]^{2N}$. For a copy $x'$ of $x$ drawn from the distribution $\mathcal{D}^{x}_{Q_x^{\otimes N}\otimes Q_y^{\otimes N}}$, we have
\begin{align*}
    \Pr[x' \in X_i] = \frac{1}{q^{2N}}\Pi_{l=1}^{N} \partial_x(i_l) \cdot \Pi_{l=N+1}^{2N} \partial_y(i_l).
\end{align*}

By linearity of expectation over all input vectors and all copies, we have
\begin{align*}
    \E[|X_i|] = nc\cdot \Pr[x' \in X_i].
\end{align*}

Similarly, on the $y$ side, we have
\begin{align*}
    \E[|Y_j|] = \frac{nc}{q^{2N}}\Pi_{l=1}^{N} \partial_y(j_l) \cdot \Pi_{l=N+1}^{2N} \partial_x(j_l).
\end{align*}

Thus, the claim follows since $g^*=\frac{nc}{q^{2N}}$.
\end{proof}

The following gives an upper bound on $g^*$,
\begin{align}\label{eq:g*}
    g^* \leq  \frac{2n}{q^{2N}|S_{\tau}|\mathcal{D}_{\tau}} = \frac{n\gamma^N}{10(N+1)^{q^4}\cdot |S_{\tau}|\mathcal{D}_{\tau}n}\leq \frac{N^{q^4}\mathcal{D}_{\tau}|S_{\tau}|{\effQ}_{\tau}^2}{10(N+1)^{q^4}\cdot |S_{\tau}|\mathcal{D}_{\tau}} = {\effQ}_{\tau}^2/10,
\end{align}
where the second step is by our choice of $N$ from Equation~\eqref{eq:q2n} and $c$  from Equation \eqref{eq:l}, and the third step is by Lemma~\ref{lem:best_tau}.

\subsubsection{Step 4. Matrix multiplication}\label{sec:step4}
Our vectors currently come from $[q]^N$, but we would like to map them to vectors in $\{-1,1\}^N$ so that the independent uniformly random vectors are still independent uniformly random, and the planted pair is correlated.
If $q$ is even, we use the mappings $g$, $h$ from Lemma~\ref{lem:mapto1}. If $q$ is odd, we first map each bit of vectors from $[q]$ to $[2q]$ by adding a uniform bit in $\{0,1\}$, so that the planted pair still has non-zero correlation, then we use the mappings $g$, $h$.

As we discussed earlier, we use fresh bits (different from the ones used in the bucketing process above) for each matrix multiplication. Sample $q_k^{2N}$ coordinates, and apply the mapping $g$ to $x$, and $h$ to $y$, bit-wise. Here we abuse notation and still write $x_i,y_i\in \{-1,1\}^{q_k^{2N}}$ to denote the mapped input vectors $x_i,y_i$. The result is that the mapped vectors $x_i$ and $y_i$ are independently uniformly chosen from $\{-1,1\}^{q_k^{2N}}$, except the correlated pair $x^*,y^*$ has 
\[
\langle x^*,y^*\rangle = \Omega(q_k^{2N}).
\]

For each $i \in [q]^{2N}$, create vectors $a_i, b_i \in \R^{m}$ given by $a_i = \sum_{j \in X_i} x_j$ and $b_i = \sum_{j \in Y_i} y_j$. Let $s_a,s_b\in \{-1,1\}^{q^{2N}}$ be random vectors whose entries are i.i.d. uniformly sampled from $\{-1,1\}$.
Form the matrices $A, B \in \R^{q^{2N} \times m}$ whose rows are $s_a[1]\cdot a_1, \ldots, s_a[q^{2N}]\cdot a_{q^{2N}}$ and $s_b[1]\cdot b_1, \ldots, s_b[q^{2N}]\cdot b_{q^{2N}}$, respectively. 

We now apply the tensor $T'$ to the matrices $A$ and $B^{\top}$, resulting in the matrix $C \in \R^{q^{2N} \times q^{2N}}$. 
By Claim~\ref{cla:x'_y'_in S}, with $\geq 1/4$ probability, one copy of $x^*$ and one copy of $y^*$ fall into $S$. Denote the index by $(i,j)\in S$.

Using the same variance-based analysis from Theorem~\ref{thm:main1} and Theorem~\ref{thm:main2}, we have
\begin{align*}
    \E[C[i,j]] & ~ = \Omega(1) \cdot \sum_{k} 
 T'(X_{i,k}Y_{j,k}Z_{i,j})\\
    \mathrm{var}[C[i,j]] & ~ = \sum_{i',j'k,k'}T'(X_{i',k}Y_{j',k'}Z_{i,j})^2 \cdot \E[|X_i'|]\cdot \E[|Y_j'|].
\end{align*}

\begin{align*}
    & ~ \frac{\E[C[i,j]]}{\mathrm{var}[C[i,j]]^{1/2}}\\
    = & ~ \Omega(1) \cdot 
    \frac{\sum_{k} T'(X_{i,k}Y_{j,k}Z_{i,j})}
         {\effQ_{\tau}^2\cdot \sqrt{\sum_{i',j'k,k'} T'(X_{i',k}Y_{j',k'}Z_{i,j})^2 \cdot \Pi_{l=1}^{N} \partial_x(i'_l) \cdot \Pi_{l=N+1}^{2N} \partial_y(i'_l) \cdot \Pi_{l=1}^{N} \partial_y(j'_l) \cdot \Pi_{l=N+1}^{2N} \partial_x(j'_l) }}\\
    = & ~ \Omega(1)\cdot \frac{\effQ_{i,j}(Q_x^{\otimes N}\otimes Q_y^{\otimes N}, Q_y^{\otimes N}\otimes Q_x^{\otimes N}, T')}{\effQ_{\tau}^2}\\
    = & ~ \Omega(1),
\end{align*}
where the first step is by replacing $\E[|X_i'|],\E[|Y_j'|]$ from Lemma~\ref{lem:bound_x_i_y_y} and Eq.~\eqref{eq:g*}, the second step is by definition of $\eff_Q$ (Def.\ref{def:effQ_gamma}), the third step is because 
$\effQ_{\tau} = \effQ_{x,y}(Q_x^{\otimes N},Q_y^{\otimes N},T^{\otimes N})$ for any $(x,y)\in S_{\tau}$ from Eq.\ref{eq:StauDtaueffQtau} and the fact that $(i,j) \in S = S_{\tau}\otimes S_{\tau}^{\top}$.

As before, because the expectation exceeds the square root of variance, we can detect $x^*$ and $y^*$ by repeatedly running $T'$-matrix multiplication $O(\log n)$ times, which takes total running time $\tilde{O}(\rank(T'))$.

Overall, by repeating $(q^{2N})^{o(1)}$ times, we can boost the success probability to nearly $1$. Once we can detect if the planted pair ($x^*$, $y^*$) exists, we can do binary search to find them with comparably negligible time overhead.

The exponent $\omega_P$ we get is
\begin{align*}
    \omega_P = \frac{\log\left( (q^{2N})^{o(1)}\cdot \rank(T)^{2N} \right)}{\log n} = \frac{\log\left( (q^{2N})^{o(1)}\cdot \rank(T)^{2N} \right)}{\log \left( q^{2N}\gamma^NN^{-q^4} \right)} = \frac{\log \rank(T)}{\log (q\gamma^{1/2})} + o(1),
\end{align*}
where the second step is by \eqref{eq:q2n}.

\section{New Tensor Construction} \label{sec:newtensor}

In this section, we formally give the tensors summarized in Figure~\ref{fig:tensors} from the introduction. (In Figure~Figure~\ref{fig:tensors}, the bounds on $\omega_\ell$ from each of these tensors is calculated.) We begin with our new tensor $T_{2112}$.

For any $\eps>0$, define the rank-$5$ tensor $T_{2112}$ as the sum of the following five rank-1 tensors:

\begin{align*}&(\X_{0,0} + \X_{1,0} / \eps + \X_{0,1} / \eps^3 + \X_{1,1}) (\Y_{0,0} / \eps^3 + \Y_{1,0} + \Y_{0,1}  + \Y_{1,1} / \eps) (\eps^3 \Z_{0,0} + \eps^4 \Z_{1,0} + \eps^4 \Z_{0,1}  + \eps \Z_{1,1})/4 \\
+&(\X_{0,0} + \X_{1,0} / \eps - \X_{0,1} / \eps^3 - \X_{1,1}) (\Y_{0,0} / \eps^3 - \Y_{1,0} - \Y_{0,1}  + \Y_{1,1} / \eps) (\eps^3 \Z_{0,0} + \eps^4 \Z_{1,0} - \eps^4 \Z_{0,1}  - \eps \Z_{1,1})/4 \\
+&(\X_{0,0} - \X_{1,0} / \eps - \X_{0,1} / \eps^3 + \X_{1,1}) (\Y_{0,0} / \eps^3 - \Y_{1,0} + \Y_{0,1}  - \Y_{1,1} / \eps) (\eps^3 \Z_{0,0} - \eps^4 \Z_{1,0} + \eps^4 \Z_{0,1}  - \eps \Z_{1,1})/4 \\
+&(\X_{0,0} - \X_{1,0} / \eps + \X_{0,1} / \eps^3 - \X_{1,1}) (\Y_{0,0} / \eps^3 + \Y_{1,0} - \Y_{0,1}  - \Y_{1,1} / \eps) (\eps^3 \Z_{0,0} - \eps^4 \Z_{1,0} - \eps^4 \Z_{0,1}  + \eps \Z_{1,1})/4 \\
-& \X_{0,1} \Y_{0,0} \Z_{1,1} / \eps^5 \\
=& (\X_{0,0}\Y_{0,0} + \X_{0,1}\Y_{1,0} + \eps^3 \X_{1,1} \Y_{0,1} + \eps \X_{1,0} \Y_{1,1})\Z_{0,0}\\
+& (\eps^4 \X_{0,0}\Y_{0,1} + \X_{0,1}\Y_{1,1} + \eps \X_{1,1} \Y_{0,0} + \eps^3 \X_{1,0} \Y_{1,0})\Z_{0,1}\\
+& (\X_{1,0}\Y_{0,0} + \eps^4 \X_{1,1}\Y_{1,0} + \eps \X_{0,1} \Y_{0,1} + \eps^3 \X_{0,0} \Y_{1,1})\Z_{1,0}\\
+& (\X_{1,0}\Y_{0,1} + \X_{1,1}\Y_{1,1} + \eps \X_{0,0} \Y_{1,0})\Z_{1,1}.\end{align*}

(As discussed earlier, one might normally interpret this as a border rank expression, but here we substitute fixed values of $\eps>0$ and view it as a rank expression instead.) 
We can see that for any $\eps>0$, it has efficacies:
$$\eff_{0,0}(T) = \frac{1 + 1}{\sqrt{1 + 1 + \eps^6 + \eps^2}} = \sqrt{2} - O(\eps^2),$$
$$\eff_{0,1}(T) = \frac{\eps^4 + 1}{\sqrt{\eps^8 + 1 + \eps^2 + \eps^6}} = 1 - O(\eps^2),$$
$$\eff_{1,0}(T) = \frac{1 + \eps^4}{\sqrt{1 + \eps^8 + \eps^2 + \eps^6}} = 1 - O(\eps^2),$$
$$\eff_{1,1}(T) = \frac{1 + 1}{\sqrt{1 + 1 + \eps^2}} = \sqrt{2} - O(\eps^2).$$

Hence, $$\eff(T) = \sqrt{(\sqrt{2} - O(\eps^2))^2 + (1 - O(\eps^2))^2 + (1 - O(\eps^2))^2 + (\sqrt{2} - O(\eps^2))^2} = \sqrt{6} - O(\eps^2).$$

\subsection{Derivation of $T_{2112}$}

Although the rank expression above for $T_{2112}$ suffices for our algorithm, we give an alternate, fairly simple way to see why $T_{2112}$ has rank $5$; this is how we first found this tensor. 
Our rank expression for $T_{2112}$ was derived by a modification of the structural tensor $T_{(\mathbb{Z}/2)^2}$ of the group $(\mathbb{Z}/2)^2$ in the following way. $T_{(\mathbb{Z}/2)^2}$ is defined as

$$T_{(\mathbb{Z}/2)^2} = \sum_{a,b \in (\mathbb{Z}/2)^2} \X_{a} \Y_b \Z_{a+b},$$
and it has rank $4$ since it is the structural tensor of an Abelian group. We can expand its terms:

\begin{align*}X_{0,0} Y_{0,0} Z_{0,0}+X_{0,0} Y_{0,1} Z_{0,1}+X_{0,0} Y_{1,0} Z_{1,0}+X_{0,0} Y_{1,1} Z_{1,1}
\\+X_{0,1} Y_{0,0} Z_{0,1}+X_{0,1} Y_{0,1} Z_{0,0}+X_{0,1} Y_{1,0} Z_{1,1}+X_{0,1} Y_{1,1} Z_{1,0}
\\+X_{1,0} Y_{0,0} Z_{1,0}+X_{1,0} Y_{0,1} Z_{1,1}+X_{1,0} Y_{1,0} Z_{0,0}+X_{1,0} Y_{1,1} Z_{0,1}
\\+X_{1,1} Y_{0,0} Z_{1,1}+X_{1,1} Y_{0,1} Z_{1,0}+X_{1,1} Y_{1,0} Z_{0,1}+X_{1,1} Y_{1,1} Z_{0,0}\end{align*}

First, we rename some variables, swapping the names of $\X_{0,1} \leftrightarrow \X_{1,1}$ and the names of $\Y_{1,0} \leftrightarrow \Y_{1,1}$ to yield 

\begin{align*}X_{0,0} Y_{0,0} Z_{0,0}+X_{0,0} Y_{0,1} Z_{0,1}+X_{0,0} Y_{1,1} Z_{1,0}+X_{0,0} Y_{1,0} Z_{1,1}
\\+X_{1,1} Y_{0,0} Z_{0,1}+X_{1,1} Y_{0,1} Z_{0,0}+X_{1,1} Y_{1,1} Z_{1,1}+X_{1,1} Y_{1,0} Z_{1,0}
\\+X_{1,0} Y_{0,0} Z_{1,0}+X_{1,0} Y_{0,1} Z_{1,1}+X_{1,0} Y_{1,1} Z_{0,0}+X_{1,0} Y_{1,0} Z_{0,1}
\\+X_{0,1} Y_{0,0} Z_{1,1}+X_{0,1} Y_{0,1} Z_{1,0}+X_{0,1} Y_{1,1} Z_{0,1}+X_{0,1} Y_{1,0} Z_{0,0}\end{align*}

Since we just renamed variables, this tensor still has rank 4.

Next, we multiply some variables by powers of $\eps$. We multiply $X_{1,0}$ by $1/\eps$, multiply $X_{0,1}$ by $1/\eps^3$, multiply $Y_{1,1}$ by $1/\eps$, multiply $Y_{0,0}$ by $1/\eps^3$, multiply $Z_{0,0}$ by $\eps^3$, multiply $Z_{0,1}$ by $\eps^4$, multiply $Z_{1,0}$ by $\eps^4$, and multiply $Z_{1,1}$ by $\eps$, to yield

\begin{align*}X_{0,0} Y_{0,0} Z_{0,0}+ \eps^4 X_{0,0} Y_{0,1} Z_{0,1}+ \eps^3 X_{0,0} Y_{1,1} Z_{1,0}+ \eps X_{0,0} Y_{1,0} Z_{1,1}
\\+ \eps X_{1,1} Y_{0,0} Z_{0,1}+ \eps^3 X_{1,1} Y_{0,1} Z_{0,0}+X_{1,1} Y_{1,1} Z_{1,1}+ \eps^4 X_{1,1} Y_{1,0} Z_{1,0}
\\+X_{1,0} Y_{0,0} Z_{1,0}+X_{1,0} Y_{0,1} Z_{1,1}+ \eps X_{1,0} Y_{1,1} Z_{0,0}+ \eps^3 X_{1,0} Y_{1,0} Z_{0,1}
\\+\eps^{-5} X_{0,1} Y_{0,0} Z_{1,1}+ \eps X_{0,1} Y_{0,1} Z_{1,0}+X_{0,1} Y_{1,1} Z_{0,1}+X_{0,1} Y_{1,0} Z_{0,0}\end{align*}

Since we just multiplied variables by scalars, this did not change the rank, so this tensor still has rank $4$. (This is similar to an operation called a ``monomial degeneration'' or ``toric degeneration'' in the literature, although here we are thinking of $\eps$ as a fixed, small positive value rather than a formal variable.)

Finally, we delete the term $\eps^{-5} X_{0,1} Y_{0,0} Z_{1,1}$, yielding

\begin{align*}X_{0,0} Y_{0,0} Z_{0,0}+ \eps^4 X_{0,0} Y_{0,1} Z_{0,1}+ \eps^3 X_{0,0} Y_{1,1} Z_{1,0}+ \eps X_{0,0} Y_{1,0} Z_{1,1}
\\+ \eps X_{1,1} Y_{0,0} Z_{0,1}+ \eps^3 X_{1,1} Y_{0,1} Z_{0,0}+X_{1,1} Y_{1,1} Z_{1,1}+ \eps^4 X_{1,1} Y_{1,0} Z_{1,0}
\\+X_{1,0} Y_{0,0} Z_{1,0}+X_{1,0} Y_{0,1} Z_{1,1}+ \eps X_{1,0} Y_{1,1} Z_{0,0}+ \eps^3 X_{1,0} Y_{1,0} Z_{0,1}
\\+ \eps X_{0,1} Y_{0,1} Z_{1,0}+X_{0,1} Y_{1,1} Z_{0,1}+X_{0,1} Y_{1,0} Z_{0,0}\end{align*}

Since a single term has rank $1$, this is a rank-1 update to our tensor, so this new tensor has rank at most $5$. This is exactly our desired tensor $T_{2112}$.

We note that, since as $\eps \to 0$, $T_{2112}$ becomes 6 of the 8 terms of $\langle 2,2,2 \rangle$, one could add in the remaining two terms to give a relatively simple proof that the border rank of $\langle 2,2,2 \rangle$ is at most $5+2=7$. The fact that these 6 terms of $\langle 2,2,2 \rangle$ have border rank $6$ has also been independently observed by Vrana, although with a different border rank identity (and hence not yielding $T_{2112}$ specifically)~\cite{vrana}.

\subsection{Other Tensor Rank Bounds}

We give the other tensor rank bounds mentioned in the introduction.

Strassen~\cite{strassen} showed that $\langle 2,2,2 \rangle$ has rank at most $7$ via the following expression:
\begin{align*}&(\X_{1,1} + \X_{2,2})(\Y_{1,1} + \Y_{2,2})(\Z_{1,1} + \Z_{2,2}) \\+& (\X_{2,1} + \X_{2,2})(\Y_{1,1})(\Z_{2,1} - \Z_{2,2}) \\+& (\X_{1,1})(\Y_{1,2} - \Y_{2,2})(\Z_{1,2} + \Z_{2,2}) \\+& (\X_{2,2})(\Y_{2,1} - \Y_{1,1})(\Z_{1,1} + \Z_{2,1}) \\+& (\X_{1,1} + \X_{1,2})(\Y_{2,2})(-\Z_{1,1} + \Z_{1,2}) \\+& (\X_{2,1} - \X_{1,1})(\Y_{1,1} + \Y_{1,2})(\Z_{2,2}) \\+& (\X_{1,2} - \X_{2,2})(\Y_{2,1} + \Y_{2,2})(\Z_{1,1}) \\ =&(\X_{1,1} \Y_{1,1} + \X_{1,2}\Y_{2,1})\Z_{1,1} \\ +& (\X_{1,1}\Y_{1,2} + \X_{1,2}\Y_{2,2})\Z_{2,1} \\+& (\X_{2,1}\Y_{1,1} + \X_{2,2}\Y_{2,1})\Z_{1,2} \\ +& (\X_{2,1}\Y_{1,2} + \X_{2,2}\Y_{2,2})\Z_{2,2}\end{align*}

Winograd~\cite{winograd1971multiplication} showed via the Strassen-Winograd identity that the tensor $SW$, which consists of $7$ out of the $8$ terms of $\langle 2,2,2 \rangle$, has rank at most $6$ as follows:

\begin{align*}&(\X_{2,1} + \X_{2,2})(\Y_{2,1} + \Y_{2,2})(-\Z_{1,2} + \Z_{2,2}) \\+& (\X_{1,2})(\Y_{2,1})(\Z_{1,1} - \Z_{1,2} - \Z_{2,1} + \Z_{2,2}) \\+& (\X_{1,2} + \X_{2,2})(\Y_{1,2} - \Y_{2,2})(\Z_{2,1} - \Z_{2,2}) \\+& (\X_{1,2} + \X_{2,1} + \X_{2,2})(-\Y_{1,2} + \Y_{2,1} + \Y_{2,2})(\Z_{1,2} + \Z_{2,1} - \Z_{2,2}) \\+& (\X_{1,1} + \X_{1,2} + \X_{2,1} + \X_{2,2})(\Y_{1,2})(\Z_{1,2}) \\+& (\X_{2,1})(\Y_{1,1} + \Y_{1,2} - \Y_{2,1} - \Y_{2,2})(\Z_{2,1})  \\ =&( \X_{1,2}\Y_{2,1})\Z_{1,1} \\ +& (\X_{1,1}\Y_{1,2} + \X_{1,2}\Y_{2,2})\Z_{2,1} \\+& (\X_{2,1}\Y_{1,1} + \X_{2,2}\Y_{2,1})\Z_{1,2} \\ +& (\X_{2,1}\Y_{1,2} + \X_{2,2}\Y_{2,2})\Z_{2,2}\end{align*}

\section{Hashing gives an improvement for almost any tensor} \label{sec:hashingworks}

The goal of this section is to prove the following Theorem~\ref{thm:hashingintrogeneral_restate}.

\begin{theorem}[Restatement of Theorem~\ref{thm:hashingintrogeneral}]\label{thm:hashingintrogeneral_restate}
Suppose $T$ is a $\langle q,q,q_k \rangle$-sized tensor which consists of a subset of the terms of a matrix multiplication tensor, and the matrix $[(\eff_{i,j}(T))^2]_{i,j}$ has full rank. Let $\omega_\ell' := \frac{\log(\rank(T))}{\log(\eff(T))}$ be the exponent one would get from $T$ from applying Theorem~\ref{thm:mainintro}. Then, there is a non-decreasing, positive function $f_T : (0,1) \to \R_{>0}$ such that the bound of Theorem~\ref{thm:mainintro} can be improved to $$\omega_\ell \leq \omega_\ell' - f_T(\rho).$$
\end{theorem}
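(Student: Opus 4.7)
My plan is to reduce Theorem~\ref{thm:hashingintrogeneral_restate} to showing that $\gamma := \max_{Q_x,Q_y}\gamma_{Q_x,Q_y} > \eff(T)^2/q^2$ for every $\rho \in (0,1)$; by Theorem~\ref{thm:main_new_2} and Remark~\ref{remark:uniform}, this translates directly into an exponent strictly below $\omega_\ell'$, and taking the worst-case gap over $[\rho,1]$ yields a non-decreasing, positive $f_T(\rho)$. I would carry this out by Taylor expansion of $\log\gamma_{Q_x,Q_y}$ around the uniform choice $Q_x = Q_y = \mathbf{1}\mathbf{1}^\top/q$ and then exhibit a rank-one direction of improvement using the full-rank hypothesis on $N := [(\eff_{i,j}(T))^2]_{i,j}$.

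\textbf{Local expansion.} Writing $\tilde Q_x = \mathbf{1}\mathbf{1}^\top/q + tE_x$ and $\tilde Q_y = \mathbf{1}\mathbf{1}^\top/q + tE_y$, where $E_x,E_y$ have both row and column sums zero (these are the first-order constraints coming from column-stochasticity of $\tilde Q$ together with realizability as $Q\,\mathrm{diag}(\partial)^{-1}$ for row-stochastic $Q$), I would expand $A := \tilde Q_x N \tilde Q_y^\top$ around $\mu \mathbf{1}\mathbf{1}^\top$ with $\mu := \eff(T)^2/q^2$. The first-order perturbation to $A$ has constant row profile $(E_x r)_i/q$ and constant column profile $(E_y c)_j/q$ with $r = N\mathbf{1}$, $c = N^\top \mathbf{1}$; because the light bulb $P$ has uniform marginals, both contributions cancel from $\sum_{i,j} P[i,j]\log A[i,j]$. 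The surviving second-order coefficient is
\[
\ell_2 \;=\; \frac{1}{\mu}\bigl\langle \xi,\, E_x N E_y^\top\bigr\rangle \;-\; \frac{1}{2q^2\mu^2}\sum_{i,j}P[i,j]\bigl((E_xr)_i+(E_yc)_j\bigr)^2,
\]
where $\xi := P - \mathbf{1}\mathbf{1}^\top/q^2$ is nonzero since $\rho>0$.

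\textbf{Construction.} I would kill the ``noise'' piece by choosing rank-one perturbations $E_x = uv^\top$ and $E_y = u'(v')^\top$ with $u,u',v,v' \in \mathbf{1}^\perp$, and additionally $v \perp r$, $v' \perp c$, making $E_xr = u(v^\top r) = 0$ and $E_yc = 0$. The remaining expression collapses to $\ell_2 = (v^\top N v')(u^\top \xi u')/\mu$. Full rank of $N$ guarantees that $v^\top N v'$ is nonzero for a generic admissible pair in the subspaces $\mathbf{1}^\perp\cap r^\perp$ and $\mathbf{1}^\perp\cap c^\perp$, and the fact that $\xi$ is a nonzero matrix with zero row and column sums makes the bilinear form $(u,u')\mapsto u^\top \xi u'$ nontrivial on $(\mathbf{1}^\perp)^2$, so one can pick $u, u'$ to match the sign of $v^\top N v'$ and force $\ell_2 > 0$. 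Hence $\gamma > \mu$ for sufficiently small $t>0$, which gives the strict gap. Monotonicity of the resulting $f_T(\rho)$ follows because, for any fixed test direction $(E_x,E_y)$, the signal piece of $\ell_2$ is linear in $\xi$ and hence linear in $\rho$, so the attainable improvement grows with $\rho$.

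\textbf{Main obstacle.} The construction requires nontriviality of the orthogonal complements $\mathbf{1}^\perp\cap r^\perp$ and $\mathbf{1}^\perp\cap c^\perp$, which fails when $q=2$ and $N$ has skewed row or column sums (so $\mathbf{1}$ and $r$ already span $\mathbb{R}^2$); in that regime a direct calculation shows $\ell_2 \leq 0$ for small $\rho$. The remedy I have in mind is to work instead with the Kronecker power $T^{\otimes N_0}$ for a constant $N_0 \geq 2$: the efficacy-squared matrix becomes $N^{\otimes N_0}$, whose determinant is a power of $\det N$ and is therefore still nonzero, while the dimension jumps to $q^{N_0}\geq 4$ and the admissible subspaces $\mathbf{1}^\perp\cap (r^{\otimes N_0})^\perp$ acquire positive dimension (for $q=2, N_0=2$ they are two-dimensional). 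Since uniform $Q^{\otimes N_0}$ gives exactly $(\eff(T)^2/q^2)^{N_0}$, any strict improvement at the Kronecker-powered level translates through Theorem~\ref{thm:main_new_2} into an improvement in the asymptotic exponent for the original tensor $T$. The most delicate bookkeeping in the write-up will be verifying that this reduction remains compatible with the hashing machinery of Theorem~\ref{thm:main_new_2}, in particular that the lift of the correlation from $P$ to $P^{\otimes N_0}$ preserves the uniform-marginals property used in cancelling the first-order term.
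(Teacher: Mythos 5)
Your reduction is the same as the paper's: both arguments boil the theorem down to exhibiting stochastic $Q_x,Q_y$ with $\gamma_{Q_x,Q_y} > \frac{1}{q^2}\sum_{i,j}\eff_{i,j}^2(T)$ and then invoke Theorem~\ref{thm:main_new_2}. But your proof of that key inequality is genuinely different from the paper's Lemma~\ref{lem:constructqxqy}. The paper argues globally: it fixes a target matrix $C$ equal to the average in every entry except one slightly boosted entry, solves $N_x\,M\,N_y^\top=C$ (where $M:=[(\eff_{i,j}(T))^2]_{i,j}$) explicitly using $M^{-1}$ — this is where full rank enters — verifies stochasticity of the resulting $Q_x,Q_y$ via the sign lemmas (Lemmas~\ref{lem:part1}, \ref{lem:part2}), and finishes with the scalar inequality of Lemma~\ref{lem:beatavg}. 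You instead perturb the uniform hashing matrices and show the second-order coefficient of $\log\gamma$ can be made positive by rank-one directions $E_x=uv^\top$, $E_y=u'(v')^\top$ with $u,u',v,v'\perp\mathbf{1}$ and $v\perp r$, $v'\perp c$, using full rank only through injectivity in a dimension count. Your expansion, the first-order cancellation (uniform marginals plus zero column sums of $E_x,E_y$), and the identity $\ell_2=(v^\top M v')(u^\top\xi u')/\mu$ all check out; in fact with these choices one gets exactly $A[i,j]=\mu+t^2(v^\top Mv')\,u_iu'_j$, so no higher-order bookkeeping is needed. Your route is softer and avoids the explicit stochasticity verifications; the paper's route produces explicit $Q_x,Q_y$ at the base alphabet size for every $q\geq 2$, with no Kronecker powering and hence no interaction with the lift of $P$.

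Two steps need patching. First, ``full rank of $M$ guarantees $v^\top M v'\neq 0$ for a generic admissible pair'' is only forced by your dimension count when the ambient dimension exceeds $4$: injectivity gives $\dim M^\top(\mathbf{1}^\perp\cap r^\perp)\geq q-2$, and you need this image to escape $\mathrm{span}(\mathbf{1},c)$, which is automatic only when $q-2>2$. In particular your proposed remedy for $q=2$ with $N_0=2$ (ambient dimension $4$) is not yet guaranteed; take $N_0$ with $q^{N_0}\geq 5$. The rest of the lift is fine as you anticipate: the powered tensor is still a subset of a matrix multiplication tensor, $M^{\otimes N_0}$ is still full rank, $P_\rho^{\otimes N_0}$ is symmetric with uniform marginals and $\xi\neq 0$ for $\rho>0$, and $\rank(T^{\otimes N_0})\leq\rank(T)^{N_0}$, so a strict gap at the powered level beats $\omega_\ell'$ through Theorem~\ref{thm:main_new_2}. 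Second, monotonicity of $f_T$ does not follow from linearity of the signal term in $\xi$ alone, since the optimal perturbation magnitude also varies with $\rho$; either use the paper's one-line reduction (degrade an instance with correlation $\rho'>\rho$ down to correlation $\rho$ by adding noise) or justify continuity of $\rho'\mapsto\gamma(\rho')$ before taking your infimum over $[\rho,1)$. With these repairs the argument is sound.
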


When $T$ is composed of a subset of the terms of a matrix multiplication tensor, $T(\X_{i',k}\Y_{j',k'}\Z_{i,j})\neq 0$ only if $i'=i$, $j'=j$ and $k'=k$. Thus, we can rewrite the $\effQ$ (Def.~\ref{def:effQ_gamma}) as 

\begin{align*}
\effQ_{i,j}(Q_x,Q_y,T) 
= & ~ \frac{1}{\partial_x(i)\partial_y(j)}\frac{\sum_{k \in [q_k]} T(\X_{i,k} \Y_{j,k} \Z_{i,j})}{\sqrt{\sum_{i', j' \in [q], k,k' \in [q_k]} T(\X_{i',k} \Y_{j',k'} \Z_{i,j})^2}}\\
= & ~ \frac{\eff_{i,j}(T)}{\partial_x(i)\partial_y(j)} 
\end{align*}

And $\gamma_{Q_x,Q_y}$ can be also rewritten as
\begin{align*}
\gamma_{Q_x,Q_y} 
= & ~ \prod_{i,j\in[q]}\left( \sum_{u,v\in [q]} Q_x[i,u] Q_y[j,v] (\effQ_{u,v}(Q_x,Q_y,T))^2 \right)^{P[i,j]}\\
= & ~ \prod_{i,j\in[q]}\left( \sum_{u,v\in [q]}\underbrace{\frac{Q_x[i,u]}{\partial_x(u)}}_{N_x[i,u]} \underbrace{\frac{Q_y[j,v]}{\partial_y(v)}}_{N_y[j,v]}\eff_{u,v}(T)^2\right)^{P[i,j]},
\end{align*}
where we define $N_x,N_y\in \R^{q\times q}$ as above. In other words, we normalize every column of $Q_x,Q_y$ to get $N_x,N_y$.

We begin with the key lemma behind our proof of Theorem~\ref{thm:hashingintrogeneral_restate}, which shows how we will pick the matrices $Q_x, Q_y$ for our hashing scheme.

\begin{lemma}\label{lem:constructqxqy}
Let $T$ be the tensor having the same property as that in Theorem~\ref{thm:hashingintrogeneral_restate}. There exist stochastic matrices $Q_x$, $Q_y$ such that $\gamma_{Q_x,Q_y} > \frac{1}{q^2}\sum_{i,j}\eff^2_i,j(T)$.
\end{lemma}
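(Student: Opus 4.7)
The plan is a perturbative analysis around the baseline $Q_x = Q_y = U$ with $U[i,u] := 1/q$, which by Remark~\ref{remark:uniform} already achieves $\gamma_U = E/q^2$ where $E := \sum_{i,j}\eff_{i,j}^2(T)$. The task reduces to exhibiting a direction of perturbation that strictly increases $\gamma$. I would parametrize $Q_x = U + \eps R_x$, $Q_y = U + \eps R_y$ with $R_x, R_y$ having zero column sums, so that $\partial_x \equiv \partial_y \equiv \mathbf{1}$ and $N_x = Q_x$, $N_y = Q_y$ exactly. Writing $M_{u,v} := \eff_{u,v}(T)^2$, $r := M\mathbf{1}$, $c := M^{\top}\mathbf{1}$, and introducing the shorthands $a_i := (R_x r)_i$, $b_j := (R_y c)_j$, $C := R_x M R_y^{\top}$, a direct expansion gives $S_{i,j} = E/q^2 + \frac{\eps}{q}(a_i + b_j) + \eps^2 C_{i,j}$.

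The first step is to Taylor expand $\log \gamma_{Q_x, Q_y} = \sum_{i,j} P[i,j] \log S_{i,j}$ through order $\eps^2$. The order-$\eps$ contribution vanishes: the light-bulb distribution $P$ has uniform $1/q$ marginals, while $\sum_i a_i = \sum_j b_j = 0$ by the zero column sum condition. The order-$\eps^2$ contribution comes to
\[
\frac{q^2}{E}\sum_{i,j} P[i,j]\,C_{i,j} \;-\; \frac{q^2}{2E^2}\sum_{i,j} P[i,j] (a_i+b_j)^2.
\]
Next I would kill the penalty term by imposing the further linear constraints $R_x r = 0$ and $R_y c = 0$, which force $a_i \equiv b_j \equiv 0$; a brief dimension count ($q^2 - (2q-1) = (q-1)^2 \geq 1$, after noting that $\mathbf{1}^{\top} R_x r = 0$ is automatic given the column sum condition) shows this joint constraint system admits a nontrivial solution space for every $q \geq 2$. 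Under these constraints the second-order term simplifies to $\frac{q^2}{E}\langle P, R_x M R_y^{\top}\rangle$, and the lemma reduces to producing admissible $R_x, R_y$ with this inner product strictly positive.

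The hard part will be this last non-vanishing claim, where both hypotheses on $T$ and the non-uniformity of $P$ must be used simultaneously. I would try the rank-one ansatz $R_x = (e_{i_1} - e_{i_2}) z^{\top}$ and $R_y = (e_{j_1} - e_{j_2}) w^{\top}$ with $z \in r^{\perp}$ and $w \in c^{\perp}$, which automatically satisfies all the constraints, and compute
\[
\langle P, R_x M R_y^{\top}\rangle \;=\; (z^{\top} M w)\cdot\big(P[i_1, j_1] - P[i_1, j_2] - P[i_2, j_1] + P[i_2, j_2]\big).
\]
The first factor is nonzero for some admissible $z, w$: otherwise $M(c^{\perp}) \subseteq \mathrm{span}(r)$ would force $\mathrm{rank}(M) \leq 2$, contradicting full rank when $q \geq 3$, and a direct calculation handles $q = 2$ by showing this factor equals $E \cdot \det(M) \neq 0$. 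The second factor is nonzero for some quadruple of indices: otherwise $P$ would decompose as $P[i,j] = f(i) + g(j)$, and combining this with uniform marginals would force $f$ and $g$ constant and hence $P$ uniform, contradicting $\rho > 0$. Flipping signs on $R_x$ or $R_y$ if needed makes the product positive, completing the plan.
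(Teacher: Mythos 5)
Your Taylor expansion of $\log\gamma_{Q_x,Q_y}$ around the uniform matrices is correct, and the two closing non-degeneracy arguments (full rank of $[\eff_{i,j}^2(T)]$ via the rank-one ansatz, and non-uniformity of $P$ together with its uniform marginals) are sound; this is also a genuinely different route from the paper, which constructs $N_x,N_y$ directly so that $N_x\,[\eff^2]\,N_y^{\top}$ equals an explicit target matrix $C$ and then invokes Lemma~\ref{lem:beatavg}. However, there is a genuine gap: your perturbations are not admissible. The matrices $Q_x,Q_y$ in Definition~\ref{def:effQ_gamma} must be row-stochastic ($Q_x[i,j]$ is the probability of hashing symbol $i$ to $j$, and the lemma explicitly asks for stochastic matrices), so writing $Q_x=U+\eps R_x$ forces $R_x$ to have zero \emph{row} sums, in addition to the zero column sums you impose to get $\partial_x\equiv 1$. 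Your rank-one ansatz $R_x=(e_{i_1}-e_{i_2})z^{\top}$ with $z\perp r$ has row sums $\pm\,\mathbf{1}^{\top}z$, which is nonzero in general, so the constructed $Q_x$ is not stochastic and the lemma as stated is not proved. This is not cosmetic: if you add the missing constraint $\mathbf{1}^{\top}z=0$, then for $q=2$ the space $\{\mathbf{1},r\}^{\perp}$ is trivial whenever the rows of $M=[\eff^2_{i,j}]$ have unequal sums (e.g.\ the tensor $SW$), so the plan of killing the penalty term fails outright there, and for general $q$ your contradiction only yields $\mathrm{rank}(M)\leq 4$ rather than $\leq 2$, leaving the small-$q$ cases (the ones of main interest) unhandled.

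The gap is repairable, but the repair is exactly the step you skipped and the step where the paper's proof does its real work (its two-step design of $Q_x,Q_y$ and Lemmas~\ref{lem:part1}, \ref{lem:part2} exist precisely to turn the normalized matrices into bona fide stochastic ones). One fix compatible with your framework: for subset-of-matrix-multiplication tensors, $\gamma_{Q_x,Q_y}$ depends on $Q_x$ only through its column-normalized matrix $N_x$, hence is invariant under rescaling the columns of $Q_x$ by positive scalars. Since $z\perp r$ and $r=M\mathbf{1}$ is entrywise positive ($M$ is nonnegative with no zero row, by full rank), setting $Q_x'':=(U+\eps R_x)\,\mathrm{diag}(\lambda)$ with $\lambda=q\,r/(\mathbf{1}^{\top}r)$ gives row sums $\frac{\mathbf{1}^{\top}\lambda}{q}+\eps\,(e_{i_1}-e_{i_2})_i\,(z^{\top}\lambda)=1$, nonnegative entries for small $\eps$, and the same value of $\gamma$; similarly rescale the columns of $U+\eps R_y$ by $c=M^{\top}\mathbf{1}$. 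With this (or an equivalent stochasticity argument) added, your proof goes through for all $q\geq 2$; without it, the construction does not satisfy the hypothesis of the lemma.
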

\begin{proof}
    Let $\eps > 0$ be a small constant. 
    Let $\avg = \frac{1}{q^2}\sum_{i,j}\eff^2_i,j(T)$. 
    Let $A = \eff^2(T) \in \R^{q\times q}$ be the matrix given by $A_{i,j} = \eff_{i,j}^2(T)$.

    Let $C \in \R^{q\times q}$ be the matrix defined by
    \begin{align*}
    C_{i,j} = \begin{cases}
    \avg + \eps, &\text{if $i=j=1$};\\
    \avg - \eps/(q^2-1), & \text{otherwise}.
    \end{cases}
    \end{align*}

    Suppose we can design $N_x,N_y$ such that 
    \[
    C = N_x\cdot A \cdot N_y^{\top},
    \]
    then $\gamma_{Q_x,Q_y} = \prod_{i,j\in [q]} C_{i,j}^{P[i,j]}$. According to Lemma~\ref{lem:beatavg}, $\gamma_{Q_x,Q_y} > \avg$ and therefore we conclude the lemma.
    
    In the following, we are going to prove that such $N_x$, $N_y$ exist in two steps. 
    For $j\in [q]$, let $c_j= \frac{1}{q}\sum_{i\in[q]} \eff^2_{i,j}(T)$ be the average of $j$-th column of $\eff^2(T)$.
    Define $B := \mathbf{1}_{q}^{\top} \cdot (c_1,\cdots,c_q) + \Delta \in \R^{q\times q}$. Here, $\mathbf{1}_{q}$ is an all-one vector of length $q$, and $\Delta$ is defined as follows, where $\delta:= \frac{q}{q+1}\eps$:
    \[
    \Delta_{i,j} := \begin{cases}
        0, & \text{ if $j \geq 2$};\\
        \delta, & \text{if $j=1$ and $i=1$}; \\
        - \delta/(q-1), & \text{if $j=1$ and $i\neq 1$}.
    \end{cases}
    \]
    The first step is to design $Q_x$ such that $N_x\cdot \eff^2(T) = B$, and the second step is to design $Q_y$ such that $B\cdot N_y^{\top} = C$.

    \paragraph{Step 1. Design $Q_x$}

        We first design $N_x$ as follows. 
        Let $N_x := \frac{1}{q}\cdot \mathbf{1}_{q\times q} + N_x'$, where $\mathbf{1}_{q\times q}$ is an all-one matrix of size $q\times q$. We will note that every entry in $N_x'$ is of order $O(\eps)$.

        By $N_x\cdot A = B$, we have
        \[
        (\frac{1}{q}\cdot \mathbf{1}_{q\times q} + N_x') \cdot A = \mathbf{1}_{q}^{\top} \cdot (c_1,\cdots,c_q) + \Delta,
        \]
        therefore,
        \begin{align*}
            N_x' & ~ = \Delta\cdot A^{-1}.\\
            (N_x')_{i,j} & ~ = 
            \begin{cases}
                \delta\cdot (A^{-1})_{1,j}, & \text{ if $i=1$ };\\
                -\frac{1}{q-1}\delta\cdot (A^{-1})_{1,j}, & \text{ if $i \geq 2$}.
            \end{cases}
        \end{align*}

        We design $Q_x$ as follows. For $j\in[q]$, let $z_j $ be variables. Let $(Q_x)_{i,j} := (N_x)_{i,j}\cdot z_j$. We will set $z_j$ so that every row of $Q_x$ sums up to $1$. Since all but the first row are all the same, we only need to care about the first row and the second row:
        \begin{align}\label{eq:twoeqaution}
        \begin{cases}
            &\sum_{j} (\frac{1}{q} + \delta\cdot (A^{-1})_{1,j}) \cdot z_j = 1;\\
            &\sum_{j} (\frac{1}{q} - \frac{\delta}{q-1}\cdot (A^{-1})_{1,j}) \cdot z_j = 1.
        \end{cases}
        \end{align}
We solve this pair of equations case by case.

\textbf{Case 1: When $A$ is a diagonal matrix.} In this special case, one solution to the Eq.~\ref{eq:twoeqaution} is
$z_1 =0$ and $z_2=z_3=\cdots=z_q = \frac{q}{q-1}$. In this case, all the entries in $Q_x$ are in $[0,1]$ and thus $Q_x$ is valid.

\textbf{Case 2: When $A$ is not diagonal.}
By Lemma~\ref{lem:part2}, there are two indices $j_1,j_2\in [q]$ that $(A^{-1})_{1,j_1} > 0$ and $(A^{-1})_{1,j_2} < 0$. Let $x = \delta\cdot (A^{-1})_{1,j_1} > 0$ and $y = \delta\cdot (A^{-1})_{1,j_2} < 0$. Then we let $z_{j_1} := \frac{q|y|}{|x|+|y|}$, $z_{j_2}:= \frac{q|x|}{|x|+|y|}$, and all $z_j:=0$ for all other $j$s. One can verify that this satisfies the Eq.~\eqref{eq:twoeqaution}.

        Since all entries of $(N_x)_{i,j} = 1/q + \Theta(\epsilon)$ and $0< z_{j_1},z_{j_2} < q$, for small enough $\eps$, we have every entry of $Q_x$ are in the range $(0,1)$, so that $Q_x$ is valid.

    \paragraph{Step 2. Design $Q_y$}
    We first show how to construct $N_y$ so that $B\cdot N_y^{\top} = C$.

    Let $b\in \R$ be a parameter. Define $N_y$ to be
    \begin{align*}
        (N_y^{\top})_{i,j} = \begin{cases}
        1, & \text{if $i=j=1$};\\
        0, & \text{if $i=1$, $j\geq 2$};\\
        b, & \text{if $i\geq 2$, $j=1$};\\
        \frac{1-b}{q-1}, & \text{if $i\geq 2$ and $j\geq 2$}.
        \end{cases}
    \end{align*}
    Under this design of $N_y$, the equation $B\cdot N_y^{\top} = C$ will become three small equations.
    \begin{equation}\label{eq:bNy=c}
        \begin{cases}
            & ~ c_1+\delta + b\cdot(c_2+\cdots+c_q) = \avg + \epsilon\\
            & ~ c_1 - \frac{\delta}{q-1} + b\cdot (c_2+\cdots+c_q) = \avg -\frac{\eps}{q^2-1}\\
            & ~ \frac{1-b}{q-1}(c_2+\cdots,c_q) =  \avg -\frac{\eps}{q^2-1}.
        \end{cases}
    \end{equation}

    Set $\delta = \frac{q}{q+1} \epsilon$, by solving Eq.~\ref{eq:bNy=c}, we get 
    \[
    b = \frac{1}{q} - \frac{c_1 (1 - 1/q) - \eps/(q+1)}{c_2 + \cdots + c_q} \leq \frac{1}{q}.
    \]

    Using the same method as in Step 1, we let 
    $(Q_x)_{i,j}:= (N_x)_{i,j}\cdot z_j$.
    We need to find $z_1,\cdots,z_q$ so that every row of $Q_x$ sums up to $1$. By setting $z_2=z_3=\cdots=z_q$, this reduces to two equations.
    \begin{equation*}
        \begin{cases}
        &~ z_1+(q-1)b\cdot z_2=1;\\
        &~ \frac{1-b}{q-1}\cdot z_2=1.
        \end{cases}
    \end{equation*}
    By solving these equations, we get 
    \[
    z_1 = 1 - \frac{b(q-1)}{1-b}, ~~~~ z_2=1/(1-b).
    \]
    Since $b\leq 1/q$, one can verify that every entry of $Q_x$ is in $[0,1]$, and $Q_x$ is a stochastic matrix.
\end{proof}

We now prove the helper lemmas for the above result:

\begin{lemma}\label{lem:beatavg}
Let $q$ be a positive integer, and suppose $a, \rho > 0$ and $1 \geq p > 1/q^2$. Then, for all sufficiently small $\eps>0$ we have
$$(a + \eps)^{p} \cdot \left(a - \frac{\eps}{q^2 - 1}\right)^{1 - p} > a.$$
\end{lemma}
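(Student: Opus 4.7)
The plan is to take logarithms and reduce the claim to a first-derivative computation at $\eps = 0$. Define the function
\[
g(\eps) := p \ln(a+\eps) + (1-p) \ln\!\left(a - \tfrac{\eps}{q^2-1}\right),
\]
which is smooth in a neighborhood of $0$ (for $\eps$ small enough, both arguments are positive since $a > 0$). The inequality to prove is equivalent to $g(\eps) > g(0) = \ln a$ for all sufficiently small $\eps > 0$, so it suffices to show $g'(0) > 0$.

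Differentiating,
\[
g'(\eps) = \frac{p}{a+\eps} - \frac{(1-p)/(q^2-1)}{a - \eps/(q^2-1)},
\]
so
\[
g'(0) = \frac{p}{a} - \frac{1-p}{(q^2-1)\,a} = \frac{1}{a(q^2-1)}\bigl(p(q^2-1) - (1-p)\bigr) = \frac{pq^2 - 1}{a(q^2-1)}.
\]
By hypothesis, $p > 1/q^2$, hence $pq^2 > 1$ and $g'(0) > 0$ (using also $a, q^2-1 > 0$).

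Since $g$ is continuously differentiable at $0$ with $g(0) = \ln a$ and $g'(0) > 0$, there exists $\eps_0 > 0$ such that $g(\eps) > \ln a$ for all $\eps \in (0, \eps_0)$. Exponentiating both sides yields the desired strict inequality
\[
(a+\eps)^{p}\left(a - \tfrac{\eps}{q^2-1}\right)^{1-p} > a
\]
for all $\eps \in (0, \eps_0)$. The argument involves no real obstacle; the only point to double-check is that the hypothesis is used in exactly the right place, namely to guarantee $pq^2 - 1 > 0$, which is precisely the condition that makes the first-order term in $\eps$ positive. (Intuitively, the weighted geometric mean of $a+\eps$ and $a - \eps/(q^2-1)$ with weights $(p, 1-p)$ strictly exceeds $a$ iff the weighted arithmetic perturbation $p\eps - (1-p)\eps/(q^2-1)$ is positive to leading order, which reduces to $p > 1/q^2$.)
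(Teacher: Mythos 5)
Your proof is correct and follows essentially the same route as the paper: the paper also reduces the claim to checking that the derivative at $\eps=0$ is positive, only differentiating $f(\eps)=(a+\eps)^p(a-\eps/(q^2-1))^{1-p}$ directly rather than its logarithm, and arrives at the same condition $pq^2-1>0$.
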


\begin{proof}
Define $f(\eps) := (a + \eps)^{p} \cdot (a - \frac{\eps}{q^2 - 1})^{1 - p}$. Since $f(0) = a$, it suffices to prove that $f'(0) > 0$. Let $m = q^2 - 1$ so that $f(\eps) := (a + \eps)^{p} \cdot (a - \frac{\eps}{m})^{1 - p}$. Note that $(m+1) \cdot p > 1$ by definition of $p$. We have:

\begin{align*}
f'(\eps) &= p \cdot (a + \eps)^{p-1} \cdot (a - \frac{\eps}{m})^{1 - p} + (a + \eps)^{p} \cdot \frac{p-1}{m}(a - \frac{\eps}{m})^{- p} \\&= \frac{(amp + ap - a - \eps) (a + \eps)^{p-1} (a - \eps/m)^{-p}}{m}.
\end{align*}
Hence, as desired,
\begin{align*}
f'(0) &=  \frac{(amp + ap - a) (a)^{p-1} (a )^{-p}}{m} \\  &=  \frac{mp + p - 1}{m} > 0.
\end{align*}
\end{proof}

\begin{lemma} \label{lem:part1}
Suppose $A \in \R_{\geq 0}^{q \times q}$ for $q \geq 2$ is a full-rank matrix with nonnegative entries such that at least one of its rows has at least two nonzero entries. Then, one can permute the columns of $A$ so that it has the following property: For every row of $A$, if its first entry is nonzero, then another one of its entries is also nonzero.
\end{lemma}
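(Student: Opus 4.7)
The plan is a short pigeonhole/contradiction argument. I will call a column index $j \in [q]$ \emph{bad} if some row $r$ of $A$ has $A[r,j] \neq 0$ while $A[r,j'] = 0$ for every $j' \in [q] \setminus \{j\}$; otherwise I will call $j$ \emph{good}. The property demanded by the lemma after permutation is exactly that the column placed first is good, so it suffices to exhibit at least one good column among the $q$ columns of $A$.

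I would then argue by contradiction. Suppose every column is bad, and for each $j \in [q]$ pick a witnessing row $r_j$. The row $r_j$ determines $j$ uniquely, since $j$ is the only column in which row $r_j$ has a nonzero entry; hence the map $j \mapsto r_j$ is injective, and $r_1, \ldots, r_q$ therefore enumerate all $q$ rows of $A$. Consequently every row of $A$ contains exactly one nonzero entry, which directly contradicts the hypothesis that at least one row has at least two nonzero entries. Thus some column $j^*$ must be good, and permuting $A$ so that $j^*$ becomes the first column yields the desired property.

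There is no real obstacle to carrying this out; the only subtlety is setting up the definition of ``bad column'' as the precise negation of the target property so that the contradiction bites. It is worth noting that neither the nonnegativity of the entries nor the full-rank hypothesis is actually invoked in this argument: everything rests on the existence of a single row with at least two nonzero entries, which alone suffices to rule out the worst-case scenario in which every row is a scalar multiple of a standard basis vector.
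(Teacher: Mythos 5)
Your proof is correct and follows essentially the same route as the paper's: assume no column has the desired property, extract a witness row for each column, and conclude that every row has exactly one nonzero entry, contradicting the hypothesis. The only (valid) difference is cosmetic: you make the injectivity of the column-to-witness-row map explicit and correctly observe that full rank and nonnegativity are never used, whereas the paper's write-up invokes full rank (and has a small index typo in its "it suffices" sentence) but relies on the same counting contradiction.
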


\begin{proof} 
It suffices to prove that there is a column $j$ of $A$ such that: for every $i$ with $A[i,j] \neq 0$, there exists an $i' \neq i$ with $A[i',j] \neq 0$. We can then permute the columns of $A$ so that $j$ becomes the first column as desired.

Assume to the contrary that there were no such $j$. Since $A$ has full rank, we know every column of $A$ has a nonzero entry. It follows that for every $j$, there is a row with a nonzero entry in column $j$ but no other column. Since $A$ has the same number of rows and columns, this means every row and every column of $A$ has exactly one nonzero entry. This contradicts our assumption that $A$ has a row with at least two nonzero entries.
\end{proof}

\begin{lemma}\label{lem:part2}
Suppose $A \in \R_{\geq 0}^{q \times q}$ for $q \geq 2$ is a full-rank matrix with nonnegative entries such that at least one of its rows has at least two nonzero entries. Then, one can permute the columns of $A$ so that it has the following property: In the top row of the matrix $A^{-1}$, there is at least one positive entry and at least one negative entry.
\end{lemma}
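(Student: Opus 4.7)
The plan is to reduce to a combinatorial fact about the support structure of rows of $A$, then invoke Lemma~\ref{lem:part1} to derive a contradiction.

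First I would apply Lemma~\ref{lem:part1} to $A$, permuting the columns so that every row whose first entry is nonzero has at least one other nonzero entry. Let $v \in \R^q$ denote the top row of $A^{-1}$, regarded as a row vector, so that $v A = e_1^\top$. Equivalently, $\langle v, A_{:,1} \rangle = 1$ and $\langle v, A_{:,j} \rangle = 0$ for all $j \geq 2$. The goal is to show that $v$ must contain both a strictly positive and a strictly negative entry.

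I will argue by contradiction, handling two cases. The easy case is when every entry of $v$ is nonpositive: since $A_{:,1}$ has nonnegative entries, one gets $\langle v, A_{:,1}\rangle \leq 0 \neq 1$, an immediate contradiction. The substantive case is when every entry of $v$ is nonnegative. Let $I := \{i \in [q] : v[i] > 0\}$. Fix any $j \geq 2$; then $\sum_{i} v[i]\, A[i,j] = 0$, and since each summand is nonnegative, every term must vanish. In particular, for every $i \in I$ and every $j \geq 2$, one has $A[i,j] = 0$. On the other hand, $\langle v, A_{:,1}\rangle = 1 > 0$ forces the existence of some $i_0 \in I$ with $A[i_0, 1] > 0$. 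But by Lemma~\ref{lem:part1}, since $A[i_0,1]\neq 0$, there must exist $j \geq 2$ with $A[i_0,j] \neq 0$, contradicting what we just established.

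The only mild subtlety I expect is making sure the support-level argument is air-tight, in particular verifying that $I$ is nonempty (which follows from $\langle v, A_{:,1}\rangle = 1$ together with $A_{:,1}\geq 0$) and that the terms $v[i]A[i,j]$ really are forced to vanish individually by nonnegativity. Neither is hard, but both must be spelled out carefully to justify invoking Lemma~\ref{lem:part1} on the row $i_0$. Nothing else in the proof requires the full hypothesis that some row of $A$ has at least two nonzero entries beyond what is already encoded in Lemma~\ref{lem:part1}, so once that lemma is in hand the argument is essentially just linear algebra plus sign bookkeeping.
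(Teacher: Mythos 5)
Your proof is correct and follows essentially the same route as the paper: permute the columns via Lemma~\ref{lem:part1}, then play the top row of $A^{-1}$ against the nonnegative columns of $A$ through the identity $A^{-1}A = I$. Your organization is in fact slightly cleaner — you skip the paper's intermediate claim that the top row of $A^{-1}$ has at least two nonzero entries, since your two-case sign analysis (all entries nonpositive vs.\ all nonnegative) reaches the contradiction directly from the first column together with Lemma~\ref{lem:part1}.
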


\begin{proof}
Applying Lemma~\ref{lem:part1}, we may assume that for every row $i$ of $A$, if $A[i,1] \neq 0$, then there is an $j \neq 1$ such that $A[i,j] \neq 0$.

We claim first that the top row of $A^{-1}$ must have at least two nonzero entries. Assume to the contrary that this is not the case. It must have at least one nonzero entry since $A^{-1}$ has full rank, so it has exactly one nonzero entry. Suppose it is in column $j$, so $A^{-1}[1,j] \neq 0$ and $A^{-1}[1,j'] = 0$ for all $j' \neq j$. We know that the top-right entry of the product $A^{-1} A$ is $1$, so it follows that $A[j,1] = 1/A^{-1}[1,j] \neq 0$. By the property of the previous paragraph, there is a $i \neq 1$ such that $A[j,i] \neq 0$. It follows that entry $(1,i)$ of the product $A^{-1} A$ is equal to $\sum_{k} A^{-1}[1,k] \cdot A[k,i] = A^{-1}[1,j] \cdot A[j,i] \neq 0$, contradicting the fact that $A^{-1} A$ is the identity matrix whose $(1,i)$ entry is $0$. This proves the claim.

Now, we know the top row of $A^{-1}$ has at least two nonzero entries. We claim that the nonzero entries of the top row of $A^{-1}$ cannot all be positive or all be negative, which will complete the proof. Assume to the contrary that they are all positive (the all negative case is identical), and as before, suppose $A^{-1}[1,j] > 0$ is one of the nonzero entries of the first row of $A^{-1}$. Since the first row has at least two nonzero entries, we may assume $j \neq 1$. Since $A$ has full rank, there is an $i$ such that $A[j,i] \neq 0$, and since $A$ has nonnegative entries, we further have $A[j,i] > 0$ and $A[j',i] \geq 0$ for all $j'$. It follows that entry $(1,j)$ of the product $A^{-1} A$ is $\sum_k A^{-1}[1,k] A[k,i] \geq A^{-1}[1,j] A[j,i] > 0$, contradicting again that it must equal $0$. This completes the proof.
\end{proof}

Finally we conclude the main proof:

\begin{proof}[Proof of Theorem~\ref{thm:hashingintrogeneral_restate}]
By Lemma~\ref{lem:constructqxqy}, we can construct stochastic matrices $Q_x$, $Q_Y$ such that $\gamma_{Q_x,Q_y} > \frac{1}{q^2}\sum_{i,j} \eff_{i,j}^2(T)$. By Theorem~\ref{thm:main_new_2}, we have
\[
\omega_{P} \leq \frac{\log \rank(T)}{\log(\Peff(T))},
\]
where $\Peff(T) \geq q \gamma_{Q_x,Q_y}^{1/2} > \eff(T).$ It's clear from the proof that the difference between $\Peff(T)$ and $\eff(T)$ is a function of $\rho$. The function $f_T(\rho)$ is an non-decreasing function since we can always reduce larger $\rho$ to smaller $\rho$.
\end{proof}

\appendix
\section*{Appendix}

\section{Aggregation time} \label{sec:agg}

In the algorithms throughout this paper, we assumed that the input vectors $x_1,\ldots,x_n$,$y_1,\ldots,y_n$ have long enough length $d$ which is polynomial in $n$, i.e., $d = q^N$ ($q$ and $N$ are defined in Algorithm~\ref{alg:intro}), whereas we would like our algorithm to work for the information-theoretically minimum $d = O(\log n/\rho^2)$ (recall that $\rho\in(0,1)$ is the correlation of the planted pair). Furthermore, we assumed that the aggregation step of the algorithm (lines~\ref{line:construct_A},~\ref{line:construct_B} of Algorithm~\ref{alg:intro}) takes negligible time compared to the rest of the algorithm. (See footnote \ref{footnote:aggregation} above.) However, if implemented naively, the aggregation step can actually take time $q^N\cdot d$, which can potentially be the slowest step of the algorithm. In this section, we show how the ``compressed matrix'' technique of~\cite{karppa2018faster} can be used to require only the smaller $d = O(\log n/\rho^2)$, and simultaneously decrease the aggregation time. 

Suppose the given vectors $x \in \{-1,1\}^d$ have short length $d = O(\log n/\rho^2)$ and we want a long enough vector $x' \in \{-1,1\}^m$ to be used in our algorithm, for some $m = \poly(n)$. To ``prolong'' the vector, we first pick $r \leq d$ such that ${d \choose r} = m$, and define for every subset $S \subseteq [d]$ with $|S| = r$, the entry
\begin{align}\label{eq:compressed}
x'_S = \prod_{j\in S} x_j.
\end{align}
This new vector $x'$ will be \emph{implicitly} used as the true input vector in our algorithm; in fact, we will never compute $x'$, but rather the aggregation of all such ``prolonged'' vectors defined as follows.

\begin{definition}[Aggregation problem]\label{def:aggregation_problem}
Given $x_1,\ldots,x_g \in \{-1,1\}^d$ and $r \leq d$. Let $m = {d\choose r}$ and $x_1',\ldots,x_g'$ be defined as in Eq.~\eqref{eq:compressed}.
The goal is to compute, for every $j\in[m]$, the value $X_j = \sum_{i\in [g]} x_i'[j].$ (I.e., the goal is to compute the vector $\sum_{i\in [g]} x_i'$.)
\end{definition}

Note that the aggregation vectors $a_i,b_i$ in line~\ref{line:construct_a_b} of Algorithm~\ref{alg:intro} can be computed by solving this aggregation problem $O(q^N)$ times, and hence we construct our matrices $A$ and $B$ from lines~\ref{line:construct_A} and \ref{line:construct_B}.

\begin{lemma}[\cite{karppa2018faster,alman2018illuminating}]\label{lem:quick_agg}
The Aggregation problem defined above (Def.~\ref{def:aggregation_problem}) can be solved in $\textbf{MM}(m^{1/2+o(1)}, g, m^{1/2+o(1)})$ time, where $\textbf{MM}(a,b,c)$ is the time to multiply a matrix of size $a\times b$ with another matrix of size $b\times c$.
\end{lemma}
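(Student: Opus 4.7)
The plan is to use a meet-in-the-middle factorization of the product $\prod_{j \in S} x_i[j]$ over an $r$-subset $S$, converting the aggregation into a (single) rectangular matrix multiplication. First I would split the coordinate set $[d]$ into two halves $I_1$ and $I_2$ of size roughly $d/2$ each. Any $r$-subset $S \subseteq [d]$ decomposes uniquely as $S = S_1 \sqcup S_2$ with $S_\ell = S \cap I_\ell$, and the monomial factorizes: $\prod_{j \in S} x_i[j] = \bigl(\prod_{j \in S_1} x_i[j]\bigr) \cdot \bigl(\prod_{j \in S_2} x_i[j]\bigr)$. Hence $X_{S_1 \sqcup S_2} = \sum_{i=1}^g (\prod_{j \in S_1} x_i[j])(\prod_{j \in S_2} x_i[j])$ is exactly an entry of a matrix product.

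Concretely, for each $r_1 \in \{0, 1, \ldots, r\}$, I would form two matrices: $A^{(r_1)}$ of dimensions $\binom{d/2}{r_1} \times g$, whose $(S_1, i)$ entry is $\prod_{j \in S_1} x_i[j]$ for $|S_1| = r_1$; and $B^{(r_1)}$ of dimensions $g \times \binom{d/2}{r - r_1}$, whose $(i, S_2)$ entry is $\prod_{j \in S_2} x_i[j]$ for $|S_2| = r - r_1$. Then $A^{(r_1)} B^{(r_1)}$ gives exactly the values $X_{S_1 \sqcup S_2}$ for all $r$-subsets with $|S \cap I_1| = r_1$. Summing over $r_1$, we obtain every $X_S$ exactly once, as required by the Vandermonde identity $\sum_{r_1=0}^r \binom{d/2}{r_1} \binom{d/2}{r-r_1} = \binom{d}{r} = m$.

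For the running time, the Vandermonde identity implies $\binom{d/2}{r_1} \cdot \binom{d/2}{r-r_1} \leq m$ for each $r_1$, so each individual multiplication has product-of-outer-dimensions at most $m$. The balanced choice $r_1 = \lfloor r/2 \rfloor$ is the worst case: by the entropy estimate $\binom{d}{r} = 2^{d \cdot H(r/d)(1+o(1))}$, both $\binom{d/2}{\lfloor r/2 \rfloor}$ and $\binom{d/2}{\lceil r/2 \rceil}$ are $m^{1/2 + o(1)}$, so this multiplication fits inside $\textbf{MM}(m^{1/2+o(1)}, g, m^{1/2+o(1)})$. For unbalanced $r_1$, where one side has $a \ll m^{1/2}$ rows and the other has $b = m/a \gg m^{1/2}$ columns, we partition the wider matrix into $b/m^{1/2}$ blocks of $m^{1/2}$ columns each and view the computation as $b/m^{1/2}$ multiplications of dimensions $a \times g \times m^{1/2}$, which together fit inside a single $\textbf{MM}(m^{1/2}, g, m^{1/2})$ call after padding the shorter side with zeros. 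Finally, summing over the $r + 1$ values of $r_1$ contributes only a factor of $r + 1$; since $m \geq \binom{d}{r} \geq 2^{r}$ (WLOG $r \leq d/2$), we have $r+1 \leq m^{o(1)}$, and the total running time is absorbed into $\textbf{MM}(m^{1/2+o(1)}, g, m^{1/2+o(1)})$.

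The main obstacle is the unbalanced-multiplication step: one must verify that a thin-times-fat multiplication with output size at most $m$ can truly be subsumed into a single balanced call, rather than costing substantially more (rectangular matrix multiplication exponents do not always behave monotonically in a convenient way). The block decomposition described above resolves this by reducing to copies of the balanced shape, which is what allows the final time bound to be stated as a single balanced-like multiplication up to $m^{o(1)}$ factors.
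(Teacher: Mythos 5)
Your meet-in-the-middle split and the exact correctness of the decomposition (Vandermonde, each $X_S$ produced once) are fine, and the central term $r_1=\lfloor r/2\rfloor$ is indeed $m^{1/2+o(1)}\times g\times m^{1/2+o(1)}$ (this is cleaner to justify by noting the largest Vandermonde summand is at least $m/(r+1)$ than by the entropy estimate, which degrades when $r/d\to 0$). The genuine gap is in your treatment of the unbalanced values of $r_1$. First, the ``pad and fit inside a single $\textbf{MM}(m^{1/2},g,m^{1/2})$ call'' step is not a valid reduction: the $b/m^{1/2}$ column blocks of $B^{(r_1)}$ all share the \emph{same} left factor $A^{(r_1)}$ but have \emph{different} right factors, so stacking or padding them does not assemble them into one balanced product --- it just re-expresses the original $a\times g\times b$ multiplication. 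Second, the cost claim itself is false in the relevant regime: for $r_1=0$ the block is a $1\times g$ times $g\times \binom{d/2}{r}$ product, which costs $\Theta\big(g\binom{d/2}{r}\big)$ just to touch its input, and in the light bulb application ($d=\Theta(\log n)$, $r\approx d/\log d$, so $\binom{d/2}{r}=m^{1-o(1)}$, with $g$ as large as $m^{1/2}$) this is about $g\cdot m^{1-o(1)}$, which exceeds $\textbf{MM}(m^{1/2+o(1)},g,m^{1/2+o(1)})\le m^{1+o(1)}g^{\omega-2}$ whenever $g=m^{\Omega(1)}$. So summing over $r_1$ and blocking does not give the stated bound; the extreme and near-extreme $r_1$ terms dominate.

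The cited works avoid this issue by never computing all $r$-subsets of $[d]$: the expansion is defined (or restricted) to \emph{balanced} pairs $(S_1,S_2)$ with $S_1\subseteq I_1$, $S_2\subseteq I_2$ of prescribed sizes $\approx r/2$, so that each expanded vector is literally an outer product $u_i v_i^{\top}$ of two half-expansions of length $m^{1/2+o(1)}$, and the aggregated vector $\sum_i x_i'$ is exactly one product of an $m^{1/2+o(1)}\times g$ matrix with a $g\times m^{1/2+o(1)}$ matrix --- a single balanced call, with no unbalanced residue. Since the number of balanced subsets is at least $\binom{d}{r}/(r+1)=m^{1-o(1)}$, nothing is lost for the application (one simply reinterprets $m$ up to $m^{o(1)}$). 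To repair your argument you would either adopt this outer-product (balanced) expansion, or handle the skewed $r_1$ classes by a different mechanism (e.g.\ recursing on them is not enough by itself, since near-extreme classes such as $r_1=1$ are still thin and nearly as large); as written, the proposal does not establish the lemma.
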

\begin{proof}
The proof is identical to the aggregation algorithm used by prior light bulb algorithms, such as \cite[page 6, second and third paragraphs]{alman2018illuminating}.
\end{proof}

\begin{lemma}\label{lem:aggre_time}
While running Algorithm~\ref{alg:intro} with input vectors $x_1,\ldots,x_n$, $y_1,\ldots,y_n \in \{-1,1\}^d$, and $\langle q,q,q \rangle$-sized tensor $T$,
the matrix $A$ and $B$ defined in line~\ref{line:construct_A} and line~\ref{line:construct_B} can be computed in $n^{\frac{1+\omega/2}{\log_q \eff(T)} + o(1)}$ time.
\end{lemma}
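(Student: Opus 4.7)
The plan is to handle each of the $q^N$ buckets separately by invoking the compressed-matrix aggregation of Lemma~\ref{lem:quick_agg}, and then to observe that the bucket size $g$ is small enough that each such invocation amounts to a \emph{square} matrix multiplication of side $m^{1/2}$.

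First I would set up the reduction. The matrices $A$ and $B$ each have $q^N$ rows indexed by buckets, and each row is the aggregation over the corresponding bucket $X_i$ (resp.\ $Y_i$) of the implicit length-$m$ prolongation of the short input vectors, where $m = q^N$ is the target long length. Since $t = q^N g/n$ by line~\ref{line:size_is_g} of Algorithm~\ref{alg:intro}, the total number of (vector, copy) pairs distributed into the $q^N$ buckets is $nt = q^N g$, so each bucket has expected size $g$. Computing one row of $A$ is therefore exactly an instance of the Aggregation problem (Def.~\ref{def:aggregation_problem}) for $g$ vectors of short length $d$ prolonged to implicit length $m$, which by Lemma~\ref{lem:quick_agg} costs $\textbf{MM}(m^{1/2+o(1)}, g, m^{1/2+o(1)})$.

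The crucial step is to bound $g$ in terms of $m$. I would apply Cauchy--Schwarz to the numerator of $\eff_{i,j}(T)$ in Def.~\ref{def:eff}: since the numerator is a sum of $q_k$ terms whose squares appear in the denominator, one obtains $\eff_{i,j}(T) \leq \sqrt{q_k} = \sqrt{q}$ for every $(i,j)$. By multiplicativity of efficacy under Kronecker products, $\max_{(I,J) \in [q]^N \times [q]^N} \eff_{I,J}(T^{\otimes N}) \leq q^{N/2}$. The choice of $g$ in line~\ref{line:size_is_g} requires $\{(I,J): \eff_{I,J}(T^{\otimes N}) \geq g^2\}$ to be nonempty (otherwise the maximized quantity is $0$), so $g^2 \leq q^{N/2}$, i.e.\ $g \leq q^{N/4} \leq m^{1/2}$. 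Hence each bucket's aggregation costs at most $\textbf{MM}(m^{1/2+o(1)}, m^{1/2+o(1)}, m^{1/2+o(1)}) = O(m^{\omega/2+o(1)})$.

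Summing over all $q^N$ buckets on both the $x$- and $y$-sides gives total aggregation time $O(q^N \cdot m^{\omega/2+o(1)}) = O(q^{N(1+\omega/2)+o(N)})$. Substituting $N = \log n/\log \eff(T)$, so that $q^N = n^{1/\log_q \eff(T)}$, yields the claimed $n^{(1+\omega/2)/\log_q \eff(T)+o(1)}$. The only real obstacle is the Cauchy--Schwarz bound $\eff_{i,j}(T) \leq \sqrt{q_k}$: without it, $g$ could in principle approach $\eff(T)^N$, and whenever $\eff(T) > \sqrt{q}$ this would push $g$ past $m^{1/2}$, forcing a rectangular multiplication whose cost scales linearly in $g$ and blows up the claimed bound.
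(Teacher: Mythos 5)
Your proposal is correct and follows essentially the same route as the paper's proof: reduce each of the $q^N$ buckets to one instance of Lemma~\ref{lem:quick_agg}, bound $g \leq \sqrt{m}$ by combining the Cauchy--Schwarz bound on $\eff_{i,j}(T)$ with the nonemptiness forced by the choice of $g$ in line~\ref{line:size_is_g}, sum over buckets, and substitute $N = \log n/\log \eff(T)$. The only differences are cosmetic: the paper bounds $\textbf{MM}(\sqrt{m},g,\sqrt{m}) \leq (\sqrt{m}/g)^2 g^{\omega}$ by tiling instead of your padding of $g$ up to $m^{1/2}$, and it adds a Chernoff bound to upgrade the expected bucket size $g$ to $|X_i| = O(g)$ with high probability, a one-line step you should also include since you only argue about expectations.
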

\begin{proof}
Since $A$ and $B$ are constructed in the same way, we only analyze $A$.

Let $g$ be such that $g^2\cdot |\{i,j\in[q^N]: \eff_{i,j}(T^{\otimes N})\geq g^2\}|$ is maximized, as defined in line~\ref{line:size_is_g} of the algorithm. Note that $\eff_{i,j}(T)$ cannot exceed $q$ by the Cauchy-Schwarz inequality (it is maximized when $T=\langle q,q,q \rangle$), and so $g\leq \sqrt{\max_{i,j}\eff_{i,j}(T^{\otimes N})} \leq \sqrt{q^N}$. 

In line~\ref{line:construct_a_b}, each $a_i$ aggregates together $|X_i|$ vectors. We have $\E[|X_i|] = nt/q^N = g$ since we set $t = q^Ng/n$. By a Chernoff bound, $|X_i| = O(g)$  for each $i$ with high probability. 

Let $m=q^N$. Since the tensor $T$ has size $q_k = q$, the desired length of input vectors $d$ is also $m$. By Lemma~\ref{lem:quick_agg}, calculating all the $a_i$ can be done in time
\[
m \cdot \textbf{MM}(\sqrt{d},g,\sqrt{d}) = m\cdot \textbf{MM}(\sqrt{m},g,\sqrt{m}) \leq m\cdot g^{\omega}\cdot (\sqrt{m}/g)^2 = m^2g^{\omega-2} \leq m^{2+\frac{\omega-2}{2}}.
\]

(Here we used that, since $\sqrt{m} \geq g$, we have $\textbf{MM}(\sqrt{m},g,\sqrt{m}) \leq (\sqrt{m}/g)^2 \cdot \textbf{MM}(g,g,g)$.)

Since $\rank(T)^N = n^{\frac{\log \rank(T)}{\log \eff(T)}}$, we have $N = \log n /\log \eff(T)$.
Thus, this is the desired running time since $m=q^N = n^{\log q/\log \eff(T)}$.
\end{proof}
\begin{remark}
When $T$ is a matrix multiplication tensor, $\log_q \eff(T) = 1.5$, so the aggregation time is $n^{\frac{1+\omega/2}{\log_q \eff(T)} + o(1)} = O(n^{\frac{2+\omega}{3}}) < O(n^{\frac{2\omega}{3}})$. The aggregation time exponent $\frac{2+\omega}{3}$ is less than $\frac{2\omega}{3}$, so aggregation takes negligible time compared to the remainder of the algorithm.
\end{remark}

\begin{lemma}
If there is a $\langle q,q,q \rangle$-sized tensor $T$ with 
\[
\frac{\log \rank(T)}{\log \eff(T)} < \frac{2\omega}{3},
\]
then there is another tensor $T'$ that can solve light bulb problem in time $n^{\frac{2\omega}{3} - \eps}$ for some $\eps > 0$.
\end{lemma}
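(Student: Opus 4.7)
\begin{proofof}{the lemma}
The plan is to construct $T'$ as a Kronecker product of $T$ with a matrix multiplication tensor of an appropriately chosen constant size. Specifically, I propose to set $T' := T \otimes \langle s,s,s \rangle$ where $s$ is a sufficiently large constant integer to be chosen below. Note that $T'$ is a $\langle qs, qs, qs \rangle$-sized tensor. Using the standard bound $\rank(\langle s,s,s \rangle) \leq s^{\omega + o(1)}$, we have $\rank(T') \leq \rank(T) \cdot s^{\omega + o(1)}$, and by multiplicativity of efficacy under Kronecker product together with $\eff(\langle s,s,s \rangle) = s^{3/2}$, we have $\eff(T') = \eff(T) \cdot s^{3/2}$.

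The next step is to verify that the exponent from Theorem~\ref{thm:mainintro} applied to $T'$ remains strictly below $2\omega/3$. Writing $a = \log \rank(T)$ and $b = \log \eff(T)$, we have
\[
\frac{\log \rank(T')}{\log \eff(T')} \leq \frac{a + (\omega + o(1)) \log s}{b + (3/2)\log s}.
\]
This is a weighted average of $a/b$ and $2\omega/3$ (up to $o(1)$), and since $a/b < 2\omega/3$ by hypothesis, this ratio is strictly less than $2\omega/3$ for every constant $s$.

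Now I need to show that, for some choice of $s$, the aggregation time (Lemma~\ref{lem:aggre_time}) applied to $T'$ is also strictly less than $n^{2\omega/3}$. The aggregation time exponent for $T'$ equals
\[
\frac{1 + \omega/2}{\log_{qs} \eff(T')} = \frac{(1 + \omega/2)(\log q + \log s)}{b + (3/2)\log s},
\]
which tends to $(2 + \omega)/3$ as $s \to \infty$. Crucially, $(2 + \omega)/3 < 2\omega/3$ since $\omega > 2$, so for all sufficiently large constants $s$ the aggregation exponent is bounded above by, say, $(2+\omega)/3 + \tfrac{1}{10}(2\omega/3 - (2+\omega)/3)$, which is strictly below $2\omega/3$.

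Putting the two estimates together, for any sufficiently large constant $s$ (and appropriately small error terms in $s^{\omega+o(1)}$, which can also be controlled by taking $s$ large), both the matrix multiplication and the aggregation steps of Algorithm~\ref{alg:intro} applied to $T'$ run in time $n^{2\omega/3 - \eps}$ for some $\eps > 0$ depending on $s$, $a$, $b$, and $\omega$. The main technical point, and the one place where the argument uses nontrivial information about $\omega$, is the inequality $\omega > 2$: this is what separates the asymptotic aggregation exponent $(2+\omega)/3$ from the target $2\omega/3$ and gives us the slack needed to simultaneously satisfy both bounds. All other steps are routine computations with the multiplicativity formulas for rank and efficacy.
\end{proofof}
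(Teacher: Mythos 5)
Your construction $T' = T \otimes \langle s,s,s\rangle$ with a single copy of $T$ and a constant-size matrix multiplication factor has a genuine gap in the rank step. For a \emph{fixed} constant $s$, the bound $\rank(\langle s,s,s\rangle) \leq s^{\omega+o(1)}$ does not mean $\rank(\langle s,s,s\rangle)$ is close to $s^{\omega}$; by submultiplicativity of rank we in fact have $\log_s \rank(\langle s,s,s\rangle) \geq \omega$ for every $s$, and the additive excess $\Delta_s := \log \rank(\langle s,s,s\rangle) - \omega \log s$ is only known to be $o(\log s)$, not bounded by any constant (e.g.\ an overhead of the form $\rank(\langle s,s,s\rangle) = s^{\omega}\cdot 2^{\sqrt{\log s}}$ is perfectly consistent with the definition of $\omega$). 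Your target inequality $\frac{\log\rank(T')}{\log\eff(T')} < \frac{2\omega}{3}$ unwinds to $\Delta_s < c_0$ where $c_0 := \frac{2\omega}{3}\log\eff(T) - \log\rank(T)$ is a \emph{fixed constant} supplied by the single copy of $T$. So the claim that ``this ratio is strictly less than $2\omega/3$ for every constant $s$'' is unjustified: the mediant argument would need the $\langle s,s,s\rangle$ factor to have ratio exactly $2\omega/3$, whereas its actual ratio is $\frac{2}{3}\log_s\rank(\langle s,s,s\rangle) \geq \frac{2\omega}{3}$, pulling the combined ratio up, and taking $s$ larger does not help because the constant slack $c_0$ does not grow while $\Delta_s$ is not known to ever drop below $c_0$.

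The paper avoids exactly this by letting the slack grow: it sets $T' = T^{\otimes \delta N}\otimes \langle q,q,q\rangle^{\otimes (1-\delta)N}$ with $N$ large, so the slack contributed by the $T$-part is $\delta N c_0$, linear in $N$, while the overhead of the matrix multiplication part (the gap between its rank and the $\omega$-exponent bound) is $o(N)$; hence the combined ratio is strictly below $2\omega/3$ for large $N$, and $\delta$ is then chosen small so that the aggregation exponent $\frac{1}{1-\delta}\cdot\frac{1+\omega/2}{1.5}$ stays below $\frac{2\omega}{3}$. Your construction can be repaired the same way, e.g.\ by taking $T' = T^{\otimes k}\otimes\langle s,s,s\rangle$ with $k$ chosen after $s$ so that $k\,c_0 > \Delta_s$. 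Your aggregation analysis, by contrast, is fine and mirrors the paper's: both arguments use that the limiting aggregation exponent $\frac{2+\omega}{3}$ is strictly below $\frac{2\omega}{3}$ (i.e.\ $\omega>2$), so no extra issue arises there.
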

\begin{proof}
Let $N$ be a large enough constant and we let $T' = T^{\otimes \delta N}\otimes \langle q,q,q\rangle^{\otimes (1-\delta)N}$ for some $\delta \in(0,1)$ to be determined. So 
\[
\log \rank(T') = N\cdot(\delta \log \rank(T) + (1-\delta)\log \rank(\langle q,q,q\rangle))
\]
and 
\[
\log \eff(T') = N\cdot(\delta \log \eff(T) + (1-\delta)\log \eff(\langle q,q,q\rangle)).
\]
Since $\frac{\log \rank(\langle q,q,q\rangle)}{\log \eff(\langle q,q,q\rangle)} = \frac{2\omega}{3}$, choosing any $\delta > 0$ results in $\frac{\log \rank(T')}{\log \eff(T')} < \frac{2\omega}{3}$.

By Lemma~\ref{lem:aggre_time}, the aggregation time of $T'$ is $n^{\frac{1+\omega/2}{\log_q \eff(T')} + o(1)} \leq n^{\frac{1+\omega/2}{(1-\delta)1.5} + o(1)}$. We can choose a small enough $\delta$ so that $\frac{1}{1-\delta}\cdot \frac{1+\omega/2}{1.5} < \frac{2\omega}{3}.$ Thus, the running time for both the main procedure and the aggregation part while using tensor $T'$ is small.
\end{proof}

\bibliographystyle{alpha}
\bibliography{ref}

\end{document}